\documentclass[10pt, journal, letterpaper]{IEEEtran}

\usepackage{cite}

%
\ifCLASSINFOpdf
   \usepackage[pdftex]{graphicx}
  \DeclareGraphicsExtensions{.pdf,.jpeg,.png}
\else

\fi

\usepackage{array,booktabs,amsmath,graphicx,fancyvrb,tabularx}
\usepackage{color,xcolor}
\usepackage{eucal}
\usepackage{hyperref}
%
\usepackage{amsmath}
\usepackage{amsfonts}

\usepackage{subfigure}

\hyphenation{op-tical net-works semi-conduc-tor}

\newcommand{\code}[1]{{\color{black}\texttt{#1}}}
\usepackage{amssymb}
\usepackage{amsbsy,amsmath,amssymb,epsfig,bbm,mathrsfs,multirow,amsthm}
\newtheorem{theorem}{Theorem}
\newtheorem{remark}{\textbf{Remark}}

\usepackage{fancyhdr} 


\newcommand{\hhw}[1]{\textcolor{black}{{#1}}} 

\newcommand{\hw}[1]{\textcolor{black}{{#1}}} 


\newcommand{\zk}[1]{\textcolor{black}{{#1}}} 

\newcommand{\circled}[1]{{\textcircled{\scriptsize{#1}}}}

\begin{document}

\title{BrokerChain: A Blockchain Sharding Protocol by Exploiting Broker Accounts}

\author{
 \IEEEauthorblockN{
Huawei~Huang\IEEEauthorrefmark{1},
Zhaokang~Yin\IEEEauthorrefmark{1},
Qinde~Chen\IEEEauthorrefmark{1},
Guang~Ye\IEEEauthorrefmark{1},
Xiaowen~Peng\IEEEauthorrefmark{1},
Yue~Lin\IEEEauthorrefmark{1},
Zibin~Zheng\IEEEauthorrefmark{1},
Song~Guo\IEEEauthorrefmark{2}
    }
    
    \IEEEauthorblockN{
    \IEEEauthorrefmark{1}SSE, Sun Yat-sen University, China. Email:~ huanghw28@mail.sysu.edu.cn
    \\
    \IEEEauthorrefmark{2}The Hong Kong University of Science and Technology.
    }

    \IEEEcompsocitemizethanks{\IEEEcompsocthanksitem{The conference version [28] of this article was presented in INFOCOM 2022.}
    }
}

\maketitle

\begin{abstract}
\hw{
 State-of-the-art blockchain sharding solutions such as Monoxide, can cause severely imbalanced distribution of transaction (TX) workloads across all blockchain shards due to the deployment policy of their accounts. Imbalanced TX distributions then produce \textit{hot shards}, in which the cross-shard TXs may experience an unlimited confirmation latency. Thus, how to address the hot-shard issue and how to reduce cross-shard TXs become significant challenges of blockchain sharding. Through reviewing the related studies, we find that a cross-shard TX protocol that can achieve workload balance among all shards and simultaneously reduce the quantity of cross-shard TXs is still absent from the literature. To this end, we propose BrokerChain, which is a cross-shard blockchain protocol dedicated to account-based state sharding. Essentially, BrokerChain exploits fine-grained state partition and account segmentation. We also elaborate on how BrokerChain handles cross-shard TXs through broker accounts. The security issues and other properties of BrokerChain are analyzed rigorously. Finally, we conduct comprehensive evaluations using an open-source blockchain sharding prototype named \textit{BlockEmulator}. The evaluation results show that BrokerChain outperforms other baselines in terms of transaction throughput, transaction confirmation latency, the queue size of the transaction pool, and workload balance.
 }

\end{abstract}

\section{Introduction}

The sharding technique is viewed as a promising solution that can improve the scalability of blockchains  \cite{Luu2016Elastico, Kokoris2018OmniLedger, 2018Chainspace, 2018RapidChain, nguyen2019optchain, prism, TowardsScaling, Wang2019Monoxide, tao2020sharding, 2021MVCom}. 
The idea of sharding is to \textit{divide and conquer} when dealing with all the transactions (TXs) submitted to a blockchain system. Instead of processing all TXs by all blockchain nodes following the conventional manner, the sharding technique divides blockchain nodes into smaller committees, each only having to handle a subset of TXs \cite{Luu2016Elastico}. Thus, the transaction throughput can be improved exponentially.

There are mainly three types of sharding paradigms, i.e., network sharding, transaction sharding, and state sharding. Since \textit{network sharding} divides the entire blockchain network into smaller committees, it is viewed as the foundation of the other two sharding paradigms. 
In each round of the conventional sharding protocol, after the committee's formation, several disjoint subsets of TXs are assigned to those committees following the paradigm of \textit{transaction sharding}. Then, committees run a specific consensus protocol locally, such as Practical Byzantine Fault Tolerance (PBFT) \cite{castro1999practical}, to achieve a local consensus towards the set of assigned TXs.
Among those three sharding techniques, \textit{state sharding} is the most difficult one because it has to ensure that all states of a blockchain are amortized by all shards. Currently, the state sharding is mainly staying in the early-stage theoretical study. Several representative sharding solutions have been proposed, such as  Elastico \cite{Luu2016Elastico}, Omniledger \cite{Kokoris2018OmniLedger}, RapaidChain \cite{2018RapidChain},  and Monoxide \cite{Wang2019Monoxide}.
Those sharding solutions are based on either \textit{UTXO} (Unspent Transaction Output) or \textit{account}/\textit{balance} transaction model. For example, Elastico \cite{Luu2016Elastico} exploits the UTXO model, while Monoxide \cite{Wang2019Monoxide} adopts the account/balance transaction model. 

\begin{figure}[t]
\centering
\includegraphics[width=0.75\columnwidth]{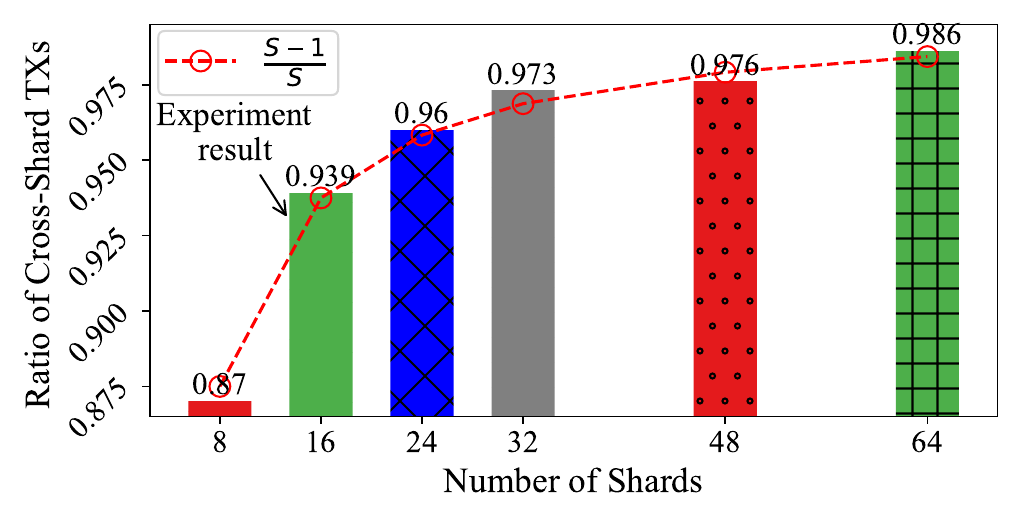}
\caption{\hw{Ratio of cross-shard transactions \textit{vs} the number of shards (i.e., $S$).}}
\label{fig:CrossShardRatio}
\end{figure}

\begin{figure}[t]
\centering
\includegraphics[width=0.8\columnwidth]{./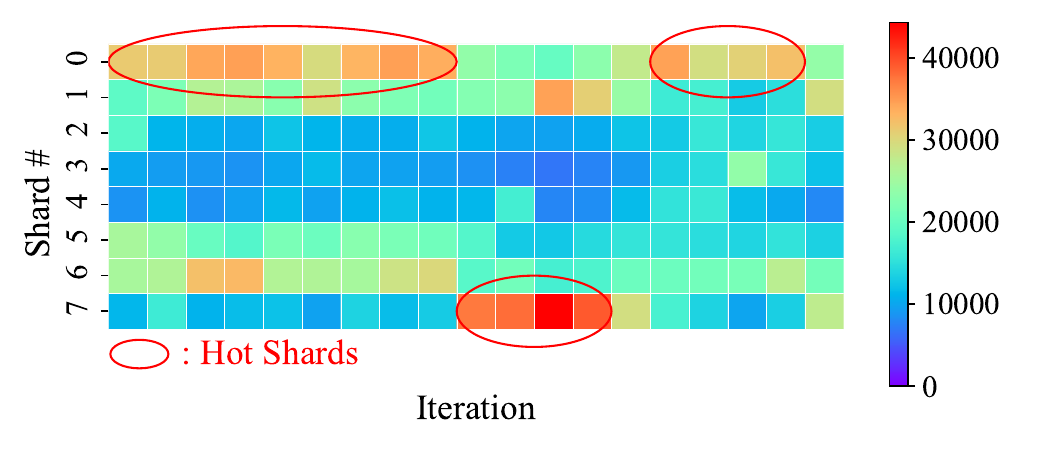}
    \caption{When feeding 80000 TXs for each iteration of blockchain consensus, Monoxide's sharding protocol \cite{Wang2019Monoxide} yields imbalanced transaction distributions among all shards. Consequently, some \textit{hot shards} are induced. Here, we call a shard congested by an overwhelming number of TXs a \textit{hot shard}.}\label{fig:Motivation}
\end{figure}

{\bf~Motivation. }\hw{The performance of existing sharding systems is mainly constrained by two factors. Firstly, the blockchain-sharding system has a high ratio of cross-shard transactions (abbr. as CTXs) when the number of shards grows. Secondly, the workloads across different shards show a severely unbalanced distribution.}

\hw{Let's explain how the ratio of CTXs associates with the number of shards. Given $S \in \mathbb{N^{+}}$ as the number of shards and assuming that accounts are randomly allocated to different shards, the probability that the payer account and the payee account of a transaction are not in the same shard is $\frac{S-1}{S}$. That is, the probability of a transaction becoming a CTX is $\frac{S-1}{S}$. To investigate the impact of the number of shards on the CTX ratio, we conduct motivating experiments using historical Ethereum transaction data.
In Fig. \ref{fig:CrossShardRatio}, the dashed line shows the theoretical analytical results between the CTX ratio and the number of shards according to random account allocation. Meanwhile, the histogram bars show the actual CTX ratios observed while invoking Monoxide \cite{Wang2019Monoxide}. Fig. \ref{fig:CrossShardRatio} shows that the actual CTX ratio closely matches the theoretical results.} When the number of shards is 32, the ratio of cross-shard transactions exceeds 97\%. In Monoxide, the atomic execution of each CTX is guaranteed through a \textit{transaction relay} mechanism \cite{Wang2019Monoxide}, in which a CTX is divided into a \textit{deduction} operation and a \textit{deposit} operation. The blockchain-sharding system processes them in the \hw{payer's shard and payee's shard of this CTX, respectively.} However, this method increases the total transaction processing workload. Moreover, as the system's CTX ratio increases, the total transaction workload also increases drastically.

In \hw{account-based state sharding}, user accounts are distributed to different blockchain shards according to the first few bits of their addresses, aiming to store all account states collaboratively.
Although this manner can improve the system throughput, CTXs are inevitably induced.
The handling of CTXs is a primary technique issue in the account/balance-based blockchain sharding system.
Monoxide \cite{Wang2019Monoxide} guarantees the atomicity of CTXs by introducing relay TXs. However, relay TXs may result in congested shards. We call such shards congested by an overwhelming number of TXs the \textit{hot shards}.
In each hot shard, TXs cannot be processed in time and thus may experience long confirmation latency.
To prove our concern about Monoxide's sharding protocol, we evaluate the TX workloads of all shards by feeding 80000 TXs at each iteration of Monoxide's consensus.
Fig. \ref{fig:Motivation} shows the TX distributions of Monoxide's shards. 
We observe that hot shards widely exist in the system across multiple iterations.
These congested hot shards are caused by active accounts, which launch a large number of TXs frequently in their associated shards.
Consequently, TXs are distributed over all shards in an imbalanced way. 
The insight behind those observations is that the account-deployment method adopted by Monoxide does not consider the frequency of launching TXs by accounts.
Moreover, a great number of CTXs could be also caused by Monoxide's sharding protocol.
As a result, when the number of shards becomes large, almost all TXs are CTXs.
When the payee account of a CTX is suspended in a congested hot shard, Monoxide may incur infinite confirmation latency for the transactions residing in local shards' transaction pools.

 Furthermore, through a thorough review of the state-of-the-art studies such as Elastico \cite{Luu2016Elastico}, Omniledger \cite{Kokoris2018OmniLedger}, and RapaidChain \cite{2018RapidChain}, we find that most of the cross-shard TX mechanisms mainly focus on the UTXO transaction model. 
 Although Monoxide \cite{Wang2019Monoxide} and SkyChain \cite{2020SkyChain} design their sharding protocols by adopting the account/balance model, the hot-shard issue has not been addressed.
 In summary, we have not yet found a cross-shard TX solution that can eliminate hot shards for the account/balance-based state sharding. 
 Thus, we are motivated to devise such a new sharding protocol.

 {\bf~Challenges.} 
 When assigning a large number of TXs in the account/balance-based state sharding, a natural problem is how to reach workload balance across all shards. Recall that the imbalanced workloads in hot shards can cause large TX confirmation latency, which threatens the eventual atomicity of CTXs.
 This problem becomes even more challenging when most of the TXs are cross-shard ones \cite{Wang2019Monoxide}.
 Therefore, guaranteeing the eventual atomicity of CTXs becomes a technical challenge that prevents the state-sharding mechanism from being largely adopted in practice.

 To this end, this paper proposes a cross-shard blockchain protocol, named \textit{BrokerChain}, for account/balance-based state sharding. BrokerChain aims to reduce the number of cross-shard TXs and ensure workload balance for all blockchain shards at the same time. 
 Our study in this paper leads to the following \textbf{contributions}.
 
\begin{itemize}
 
 \item \hhw{\textbf{Originality.}} BrokerChain partitions \hhw{the account-state graph} and performs account segmentation in the protocol layer. Thus, the well-partitioned account states can be amortized by multiple shards to achieve workload balance across all shards.

 \item \hhw{\textbf{Methodology.}} To alleviate the hot-shard issue, BrokerChain includes a cross-shard TX handling mechanism, which can guarantee the \textit{duration-limited} eventual atomicity of CTXs.

 \item \hhw{\textbf{Usefulness.}} We implemented BrokerChain and deployed a prototype in Alibaba Cloud. The transaction-driven experimental results proved that BrokerChain outperforms state-of-the-art baselines in terms of throughput, TX confirmation latency, the queue size of transaction pools, and workload balance.
 
 \end{itemize}

The rest of this paper is organized as follows.
Section \ref{sec:RelatedWork} reviews state-of-the-art studies. 
Section \ref{sec:Protocol} describes the protocol design. Section \ref{sec:ctx} depicts the handling of CTXs. Section \ref{sec:safety} analyzes the security issues and other properties of BrokerChain. 
Section \ref{sec:performance} demonstrates the performance evaluation results. Finally, Section \ref{sec:Conclusion} concludes this paper.

 
\section{Preliminaries and Related Work}\label{sec:RelatedWork}


\subsection{Transaction Models}

There are two mainstream transaction models in existing blockchain systems, i.e., the UTXO model \cite{nakamoto2008bitcoin} and the account/balance model \cite{2014Ethereum}. Under the UTXO model, a TX may involve multiple inputs and outputs. When the outputs originating from previous UTXOs are imported into a TX, those UTXOs will be marked as \textit{spent} and new UTXOs will be produced by this TX.
Under the account/balance model, users may generate a TX using the deposits in their accounts. Ethereum \cite{2014Ethereum} employs the account/balance model due to its simplicity in that each TX has only one payer account and one payee account. When confirming the legitimacy of a TX, it is necessary to check whether the deposits of the payer's account are sufficient or not.


\subsection{Representative Blockchain Sharding Solutions}

A great number of sharding solutions have been proposed to improve the scalability of blockchain systems \cite{huang2021survey}. Some of these representative solutions are reviewed as follows.
Elastico \cite{Luu2016Elastico} is viewed as the first sharding-based blockchain system, in which every shard processes TXs in parallel.
Kokoris \textit{et al.} \cite{Kokoris2018OmniLedger} then present a state-sharding protocol named OmniLedger, which adopts a scalable BFT-based consensus algorithm to improve TX throughput.
Facing the high overhead of sharding reconfiguration in Elastico and OmniLedger,  Zamani \textit{et al.} \cite{2018RapidChain} propose RapidChain to solve this overhead issue.
Chainspace \cite{2018Chainspace} exploits a particular distributed atomic commit protocol to support the sharding mechanism for smart contracts.
Then, Wang \textit{et al.} \cite{Wang2019Monoxide} propose Monoxide, which realizes an account/balance-based sharding blockchain. In Monoxide, a novel relay transaction mechanism is leveraged to process cross-shard TXs.
Next, Nguyen \textit{et al.} \cite{nguyen2019optchain} present a new shard placement scheme named OptChain. This method can minimize the number of cross-shard TXs for the UTXO-based sharding.
Prism \cite{prism} achieves optimal network throughput by deconstructing the blockchain structure into atomic functionalities.
Dang \textit{et al.} \cite{TowardsScaling} present a scaling sharded blockchain that can improve the performance of consensus and shard formation. 
Recently, Tao \textit{et al.} \cite{tao2020sharding} propose a dynamic sharding system to improve the system throughput based on smart contracts. 
Huang \textit{et al.} \cite{2021MVCom} propose an online stochastic-exploration algorithm to schedule the most valuable committees for the large-scale sharding blockchain.


\subsection{Cross-shard Transaction Processing}

A sharding blockchain must consider how to handle the cross-shard TXs.
In Omniledger \cite{Kokoris2018OmniLedger}, the authors adopt a client-driven two-phase commit (2PC) mechanism with lock/unlock operations to ensure the atomicity of cross-shard TXs.
Differently, Chainspace \cite{2018Chainspace} introduces a client-driven BFT-based mechanism. 
RapidChain \cite{2018RapidChain} transfers all involved UTXOs to the same shard by sub-transactions, such that a cross-shard TX can be transformed to intra-shard TXs.
In Monoxide \cite{Wang2019Monoxide}, the deduction operations and deposit operations are separated. Relay transactions are then exploited to achieve the eventual atomicity of cross-shard TXs.
Recently, Pyramid \cite{2021Pyramid} introduces a layered sharding consensus protocol. In this solution, cross-shard TXs are handled by a special type of nodes that serve two shards at the same time.


Compared with these existing studies, the proposed BrokerChain handles cross-shard TXs by taking the advantage of broker accounts. Furthermore, BrokerChain can achieve the workload balance while distributing TXs through the fine-grained state-partition and account-segmentation mechanisms.


\begin{figure*}[t]
\centering
\includegraphics[width=1.0\textwidth]{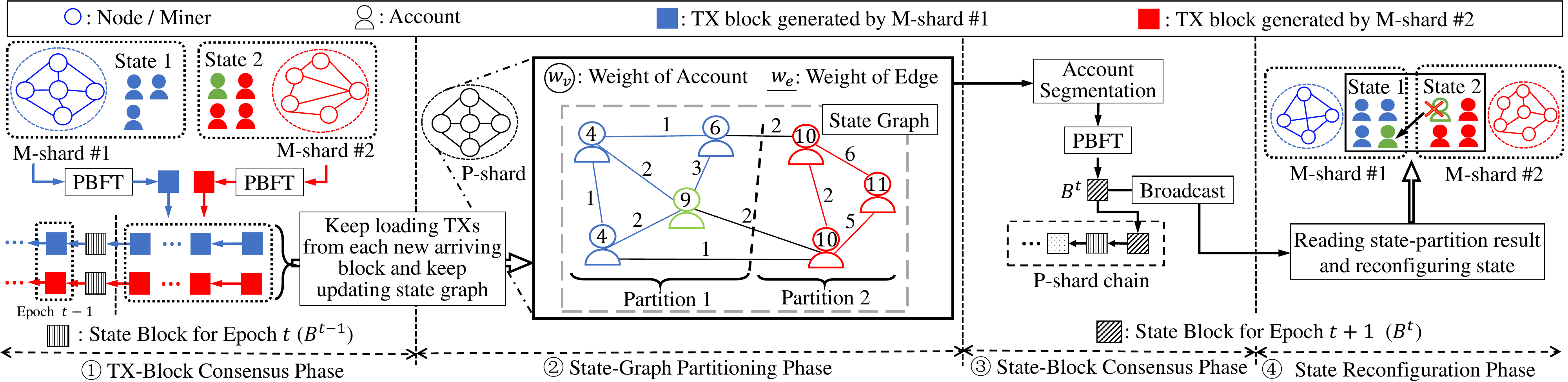}
    \caption{\hw{Four major phases of BrokerChain protocol in an epoch $t$. There are two types of shards in BrokerChain, i.e., the partition shard (P-shard) and mining shards (M-shard). Here, we only use two M-shards to illustrate how BrokerChain protocol works.}}
\label{fig:ProtocolOverview}
\vspace{-2mm}
\end{figure*}

\section{Network and Protocol Design of BrokerChain}\label{sec:Protocol}

In this section, we present BrokerChain, an adaptive state-sharding protocol based on the account/balance transaction model.

\begin{table}
    \caption{\zk{Symbols and Notations}}
    \centering
    \renewcommand{\arraystretch}{1.4}
    \begin{tabular}{|c|l|}
     \hline
   
     $T$ & \footnotesize{the set of all timeslots, $t\in T$} \\[3pt] 
      \hline
      
    $B^t$ & \footnotesize{the state block for epoch $t\in T$ at $t$} \\[3pt]
     \hline
     
     $K$ & \footnotesize{the \# of brokers, [$K$] represents the set of all brokers} \\[3pt] 
     \hline
     
     $S$ & \footnotesize{the \# of M-shards, [$S$] represents the set of all M-shards} \\[3pt] 
     \hline
     
     $\mu$ & \footnotesize{a user of the system} \\[3pt]  
      \hline
      
     $\mathbb{S}_\mu$ & \footnotesize{the account state of user $\mu$, $\mathbb{S}_\mu = \{X_\mu \mid \boldsymbol{\Psi}, \eta, \omega, \zeta \}$} \\[3pt] 
    \hline

     $X_{\mu}$ &  \footnotesize{the account address of user $\mu$} \\[3pt] 
     \hline
    
     $\boldsymbol{\Psi}$ & \footnotesize{the storage map of account, $\boldsymbol{\Psi} = [e_1, e_2, ..., e_S]$, $e_i$=0/1} \\[3pt] 
    \hline
    
      $\eta$ &  \footnotesize{the \textit{nonce} field of account state } \\[3pt] 
     \hline

     $\omega$ & \footnotesize{the \textit{value} field of account state} \\[3pt] 
     \hline
     
     $\zeta$ & \footnotesize{the \textit{code} field of account state} \\[3pt] 
     \hline

     $ \sigma_{\mu}$ & \footnotesize{the signature of user $\mu$} \\[3pt] 
     \hline

     $\Theta_{\text{raw}}$ & \footnotesize{the raw cross-shard transaction}\\[3pt] 
     \hline
     
     $\Theta_{1}$ &  \footnotesize{the first-half cross-shard TX of $\Theta_{\text{raw}}$} \\[3pt] 
     \hline
     
     $\Theta_{2}$ &  \footnotesize{the second-half cross-shard TX of $\Theta_{\text{raw}}$} \\[3pt] 
     \hline

     $H$ &  \footnotesize{ the height of block} \\[3pt] 
     \hline
     
     $N_{\text{TX}}$ &  \footnotesize{ the number of transactions played back per epoch } \\[3pt] 
     \hline
    \end{tabular}
\label{tab:TableofSymbols}
\end{table}

\subsection{Overview of BrokerChain}


Similar to Rapidchain \cite{2018RapidChain}, we first define an \textit{epoch} as a fixed length of the system running time in BrokerChain. 
To avoid Sybil attacks \cite{sybil1}, blockchain nodes should get their identities by solving a hash-based puzzle before joining in a network shard. In such puzzle, an unpredictable common randomness is created at the end of the previous epoch. Once a blockchain node solves the puzzle successfully, the last few bits of the solution indicate which shard the node should be designated to.
In our design, BrokerChain consists of two types of shards. 
\begin{itemize}

     \item {\bf M-shard.} A mining shard (shorten as \textit{M-shard}) generates TX blocks by packing TXs and achieves the intra-shard consensus at the beginning of each epoch.

     \item {\bf P-shard.} A partition shard (shorten as \textit{P-shard}) is devised to partition account states in an adaptive manner during each epoch.
    
\end{itemize}

\begin{figure}[t]
\centering
\includegraphics[width=0.95\columnwidth]{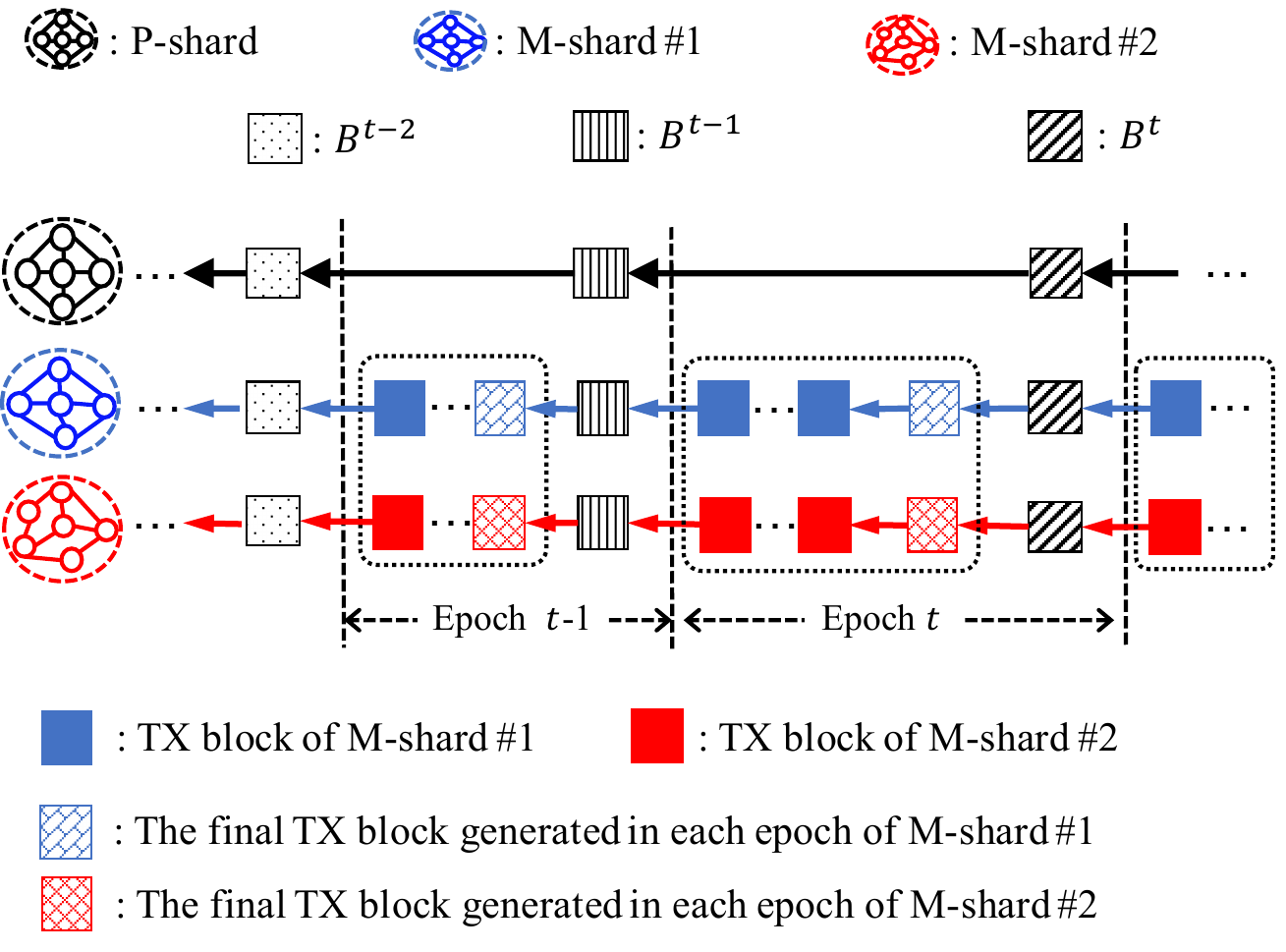}
\vspace{-1mm}
    \caption{\hw{Illustration of how blocks connect between epochs (e.g., $B^{t-1}$ is a state block generated in epoch $t$-1 but followed by the first transaction blocks of M-shards in epoch $t$).}}
\label{fig:SystemChainStructure}
\vspace{-2mm}
\end{figure}

    BrokerChain requires a number $S\in \mathbb{N^+}$ of M-shards and one P-shard existing in the sharding blockchain. For both M-shard and P-shard, we adopt PBFT protocol \cite{castro1999practical}  to achieve their respective intra-shard consensus and to avoid blockchain forks. 
    Using Fig. \ref{fig:ProtocolOverview}, we now describe the most important four sequential phases of BrokerChain as follows.
  \begin{itemize}
     \item {\bf Phase \circled{1}: TX-Block Consensus.} At the beginning of an epoch, each M-shard packages TXs from the TX pool, and generates several TX blocks by running PBFT protocol. The number of TX blocks generated by an M-shard in an epoch depends on the intra-shard network parameters such as network bandwidth.
     Note that, the first new TX block generated by an M-shard follows the \textit{state block} (i.e., $B^{t-1}$ shown in Fig. \ref{fig:ProtocolOverview}), which records the state-partition results of the previous epoch. \hw{Fig. \ref{fig:SystemChainStructure} illustrates how state blocks connect transaction blocks.}
     
     \item {\bf Phase \circled{2}: State-Graph Partitioning.} P-shard keeps loading the TXs from the new arriving blocks generated by M-shards, and keeps updating the state graph of all accounts. Once all TX blocks have been generated by M-shards within the current epoch, the state graph is fixed. 
     Then, the protocol begins to partition the state graph, aiming to achieve workload balance among all shards.
     
     \item {\bf Phase \circled{3}: State-Block Consensus}. With the state graph of all accounts partitioned in the previous phase, BrokerChain performs the \textit{account segmentation}, which is described in detail in Section \ref{sec:accountsegentation}.
     To reach a consensus towards the result of both state-graph partition and account segmentation, PBFT protocol is exploited again to generate a \textit{state block} (denoted by $B^t$), which is then added to the P-shard chain.
     
     \item {\bf Phase \circled{4}: State Reconfiguration}. 
     After reaching a consensus on the partition result, the P-shard broadcasts the state block $B^t$, which contains the $S$-partitioned new state graph, to all the associated M-shards.
     When receiving a state block $B^t$, an M-shard reads the state-partition result from the state block, and reconfigures its states accordingly such that TXs in the next epoch $t$+1 can be distributed to the designated shards according to the new account states.
     
 \end{itemize}

At the end of each epoch, BrokerChain also needs to update the formation of both P-shard and M-shards. To update those shards, the Cuckoo rule \cite{cuckoo2} is invoked such that the system can defend against the join-leave attacks \cite{join1, join2}.

 In the following, we elaborate on the most critical operations and data structures in BrokerChain, i.e., \textit{State-Graph Partition},  \textit{Account Segmentation}, and the \textit{modified Shard State Tree}.

\begin{figure}[t]
\centering
\includegraphics[width=0.9\columnwidth]{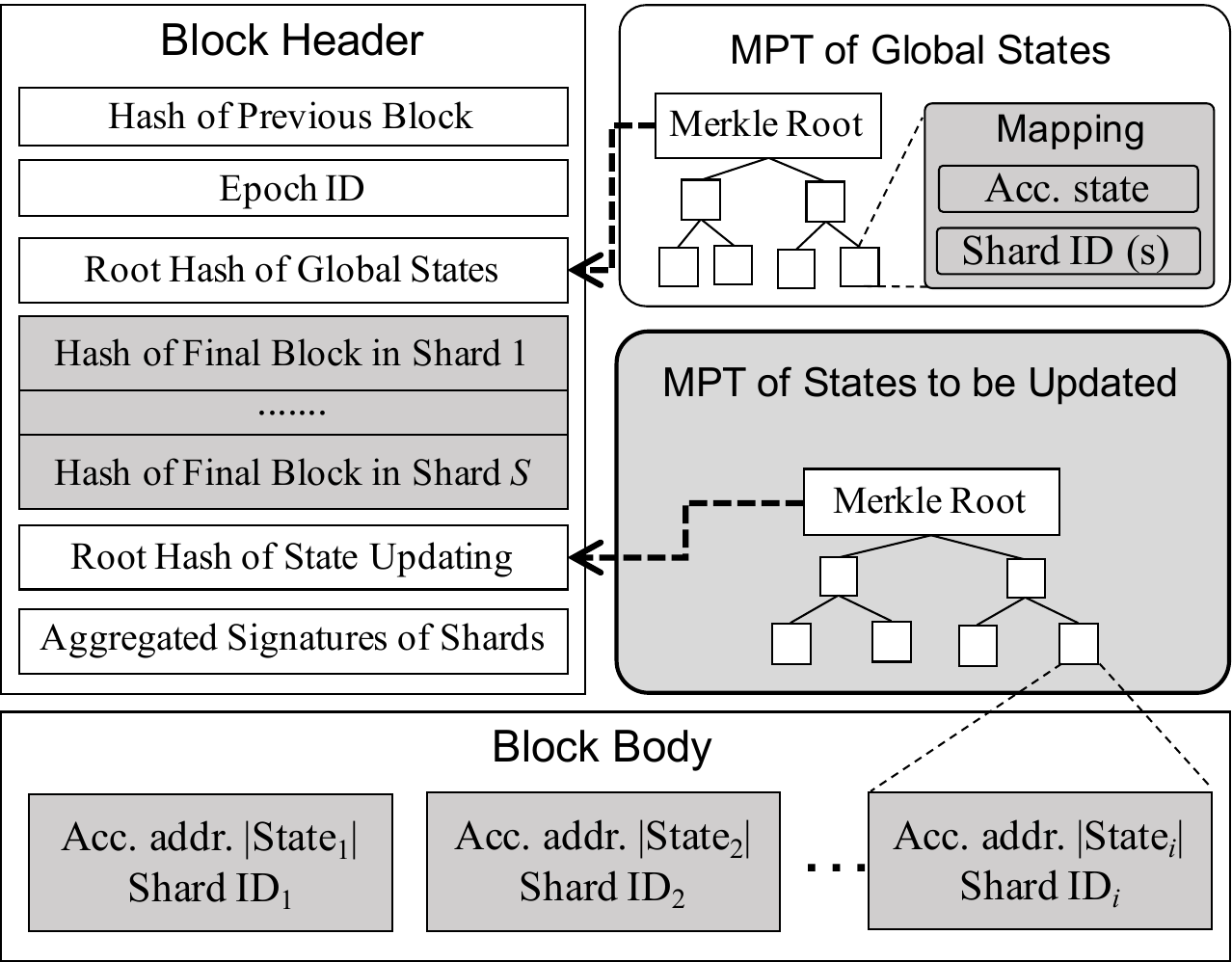}
    \caption{\hw{Data structure design of the \textit{state block}. The modules with grey background are designed particularly for state updating.}}
\label{fig:StateBlock}
\end{figure}

\subsection{State-Graph Partition}\label{sec:statereconfiguration}

\hw{
 \textbf{Updating State Graph.}} P-shard monitors the TX blocks generated by M-shards, and keeps loading the TXs in each arriving new block in real time to build/update the \textit{state graph}. When receiving a specified minimum number of TX blocks, the state graph is fixed.
 As shown in the State-Graph Partitioning Phase of Fig. \ref{fig:ProtocolOverview}, each vertex of such state graph represents an account. The edge weight (denoted by $w_e$) is defined as the number of TXs associated with the corresponding pair of accounts, while the vertex weight (denoted by $w_v$) is calculated by the sum of the associated edges' weight.
 \hw{Given the fixed state graph, P-shard begins to partition all accounts in the graph by invoking} Metis \cite{1995METIS}, which is a well-known heuristic graph-partitioning algorithm. Metis can partition the state graph into non-overlapping $S \in \mathbb{N^+}$ sectors, while reducing the number of cross-shard TXs and considering the workload balance across shards.

 \hw{\textbf{Designing the Data Structure for State Block.}} \hw{At epoch $t$, a \textit{state block} $B^{t}$ is generated by the P-shard in Phase \circled{3} \textit{State-Block Consensus}. After that, as shown in Phase \circled{4} \textit{State Reconfiguration}, $B^{t}$ shall be broadcast to all M-shards for updating account states in each M-shard. Fig. \ref{fig:StateBlock} depicts the data structure of the state block, say $B^{t}$. The block header forms a chain structure by containing the hash of the previous state block, ensuring the immutability of the P-shard chain. \hhw{The block header also records the hash of the \textit{final transaction block} produced by each M-shard in the current epoch $t$. This design enables state block $B^{t}$ to follow the final block generated by M-shard at each epoch $t$.} Global states and the state updating information of the accounts that need to be migrated are both stored using Merkle Patricia Trie (MPT). The hashes of their Merkle roots, i.e., \code{Root Hash of Global States} and \code{Root Hash of State Updating}, are also stored in the block header. Using these two root hashes, M-shards can verify and track account migration. The block body stores the mapping between account states and shard IDs. Such the design of the block body enables state updates for M-shards in the next epoch $t+1$.}

\subsection{Account Segmentation}
\label{sec:accountsegentation}

\begin{figure}[!t]
\centering
\includegraphics[width=1.0\columnwidth]{./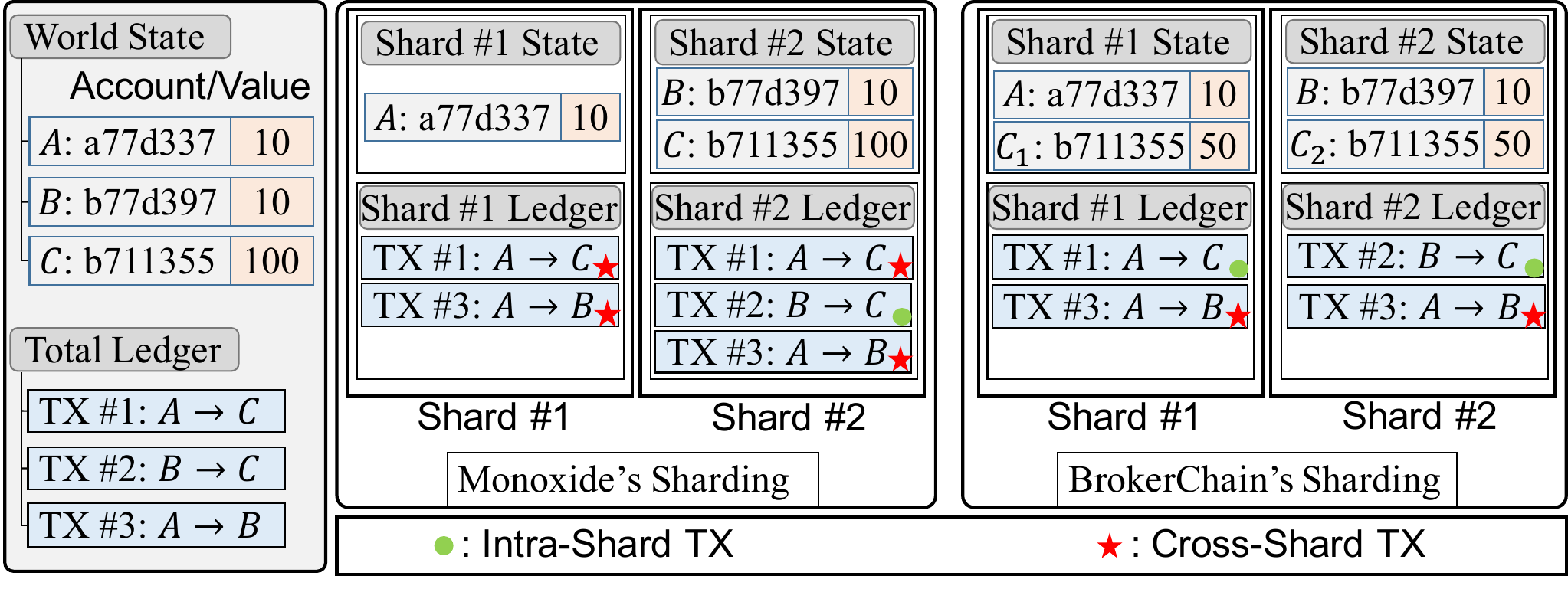}
    \caption{Comparison of account segmentation results between Monoxide and BrokerChain. We see that BrokerChain yields fewer cross-shard TXs and a more balanced TX workload across all shards.}
\label{fig:AccountSegmentation}
\end{figure}

In Monoxide \cite{Wang2019Monoxide}, an account must be stored in only one single shard. The fact is that an account, such as an exchange's active account, may participate in a great number of TXs. This kind of active account would result in hot shards inevitably.
 Facing the hot-shard issue, we give an example in the following to show that this issue could be solved at the user layer. 
 The only assumption required is that a user is allowed to hold multiple accounts, which are then permitted to store in multiple arbitrary state shards.
 Suppose that a specific user is holding multiple accounts.
 Following Monoxide's account-deployment rule, this user's accounts will be placed in different shards if the first $k\in \mathbb{N^{+}}$ bits of account addresses are different.
 Then, this user can launch a large number of TXs through his multiple accounts such that those TXs can be distributed into designated shards. Consequently, the workload balance of all shards is possible to achieve, and the number of cross-shard TXs can be reduced as well, simultaneously.

 {\bf Motivation to Account Segmentation.} Users, however, generally do not have the motivation to open multiple accounts and deposit their tokens there. Because it's inconvenient to manage those distributed multiple accounts.
 What's more, it does not make sense to force users to launch their TXs through their multiple accounts, aiming to balance the overall workload of a blockchain system.
 Therefore, we propose an \textit{account segmentation} mechanism, which works in the protocol layer of blockchain architecture.
 The proposed account segmentation mechanism has the following advantages. 
 i) Users are not forced to open multiple accounts.
 ii) A user account's state can be easily divided and stored in multiple shards. 
 iii) Through this user-transparent way, the workload balance of all shards is convenient to achieve such that hot shards can be addressed.

 {\bf Example of Account Segmentation.} To better understand the proposed account segmentation mechanism, we use Fig. \ref{fig:AccountSegmentation} to illustrate an example.
 Suppose that we have a sharding blockchain system with two shards (shard \#1 and \#2) and three accounts (denoted by $A$, $B$, and $C$). Now the ledger includes 3 original transactions \code{TX \#1:A$\rightarrow$C}, \code{TX \#2:B$\rightarrow$C}, and \code{TX \#3:A$\rightarrow$B}.
 If the sharding system has already deployed accounts $A$ and $B$ in shard \#1 and \#2, respectively. Then account $C$ can only be stored in shard \#2 (or shard \#1) under Monoxide's policy. As a result, the TXs ``\code{TX \#3:A$\rightarrow$B}'' and ``\code{TX \#1:A$\rightarrow$C}'' become 4 cross-shard TXs. 
In contrast, under BrokerChain, account $C$ (with 100 tokens) can be segmented into two smaller accounts $C_1$ and $C_2$ (50 tokens for each), which are then placed to both shards, respectively. This account segmentation result is equivalent to that account $C$ has 50 tokens stored in shard \#1, and another 50 tokens stored in shard \#2. Thus, the original TX ``\code{TX \#1:A$\rightarrow$C}'' turns to an intra-shard TXs. Furthermore, the workloads of both shards are balanced perfectly.
 From the example illustrated above, we see that the imbalanced TX workloads can be much alleviated and the number of cross-shard TXs can be reduced under BrokerChain.

{\color{black}

 %
 The difference between the proposed account segmentation and the case where users autonomously store their deposits in multiple smaller accounts is that the addresses of the account segmented by BrokerChain are all identical.
 That is, BrokerChain protocol stores the deposits of an account located at different shards with the same account address. 
 To identify an account's multiple states located at different shards, we adopt Ethereum's  \textit{counter} mechanism \cite{2014Ethereum}, which is called the \textit{nonce} of an account's state.
 
 }


\subsection{Modified Shard State Tree (mSST)} \label{sec:worldstate}

To enable the \textit{state-graph partitioning} and \textit{account segmentation} operations, BrokerChain has to enforce each shard to know the \textit{storage map} of all accounts.
Therefore, we devise a modified Shard State Tree (mSST) to store the account states based on Ethereum's \textit{world state tree} \cite{2014Ethereum}.

In contrast to the original world state tree, BrokerChain's mSST is built on top of the storage map of all accounts.
We denote such storage map as a vector $\boldsymbol{\Psi} = [e_1, e_2, ..., e_S]$ with each element $e_i$=0/1, $i \in \{1, 2, ..., S\}$, where $S\in \mathbb{N^{+}}$ is the number of M-shards.
Note that, BrokerChain needs to configure a storage map for every account.
Only when $e_i$ is equal to 1, can this specific account be viewed as being stored in shard $i$.
If there are multiple elements in  $\boldsymbol{\Psi}$ that are labeled to 1, then the associated account is segmented to the same number of accounts, which are stored in each specific shard, respectively.

\begin{figure}[t]
\centering
\includegraphics[width=1.0\columnwidth]{./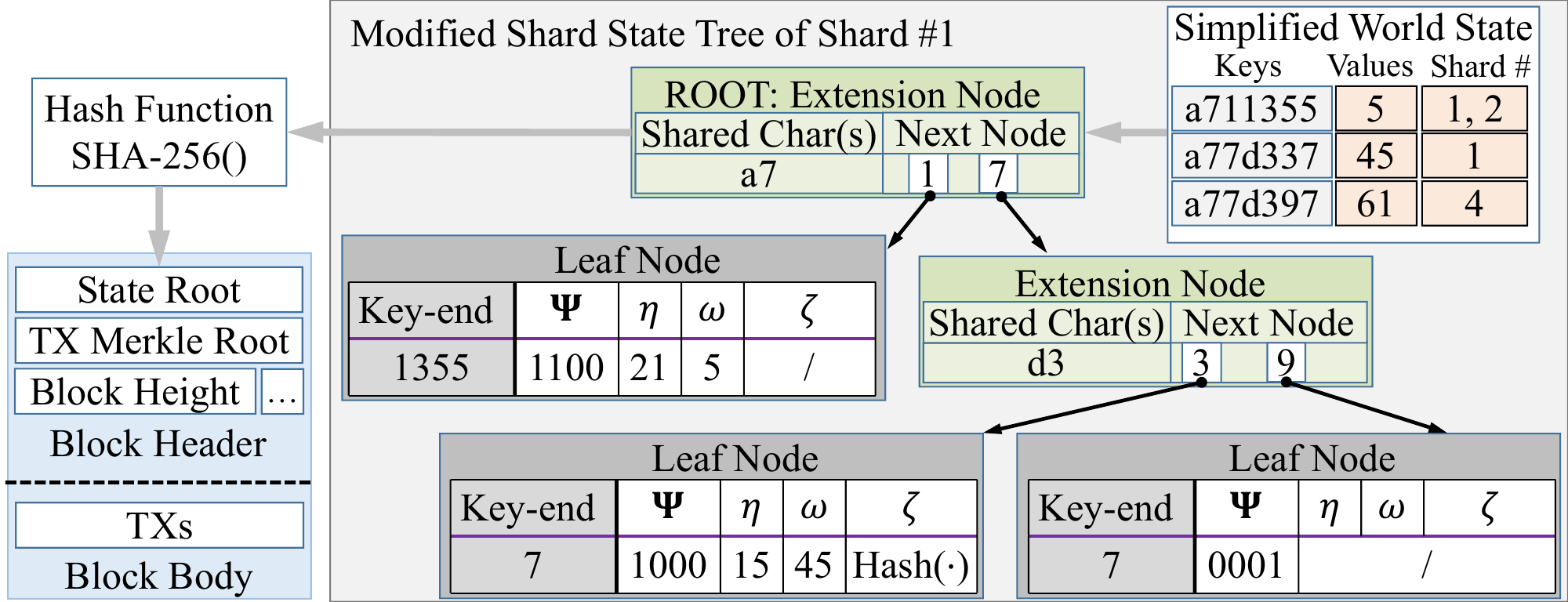}
    \caption{\hw{Data structure of the modified \textit{Shard State Tree} (mSST).}}
\label{fig:worldstate}
\end{figure}

\textbf{Data Structure of mSST.} Fig. \ref{fig:worldstate} illustrates the data structure design of mSST. 
The account state $\mathbb{S}_\mu$ of a specific user $\mu$ is represented by:
\begin{equation*}
\mathbb{S}_\mu = \{X_\mu \mid \boldsymbol{\Psi}, \eta, \omega, \zeta \},
\end{equation*}
where $X_\mu$ denotes the account address of user $\mu$, and $\eta$ denotes the \textit{nonce} field, which indicates the number of TXs sent from address $X_\mu$ or the contract creation operation generated by user $\mu$. Then, $\omega$ denotes the \textit{value} field, which shows the token deposits of the user. Finally, $\zeta$ denotes the \textit{code} field, which represents the account type.
Here, the account types include the user account and the smart contract account (the hash of the smart contract).
To a specific account, different shards maintain different mSSTs for this target account. The difference is determined by the \textit{value}, \textit{nonce} and \textit{code} fields of the local mSST. If a target account is not stored in a shard anymore, those fields corresponding to this account in the shard's mSST will be removed. And this shard needs to only update the states for the target account. Any change of the target account's state will cause the change of state root in mSST. Therefore, it is easy to maintain the consistency of an account's state in the associated shard.

\begin{figure}[t]
\centering
 \subfigure[\hw{Ethereum storage}]{
  \includegraphics[width=0.22\textwidth]{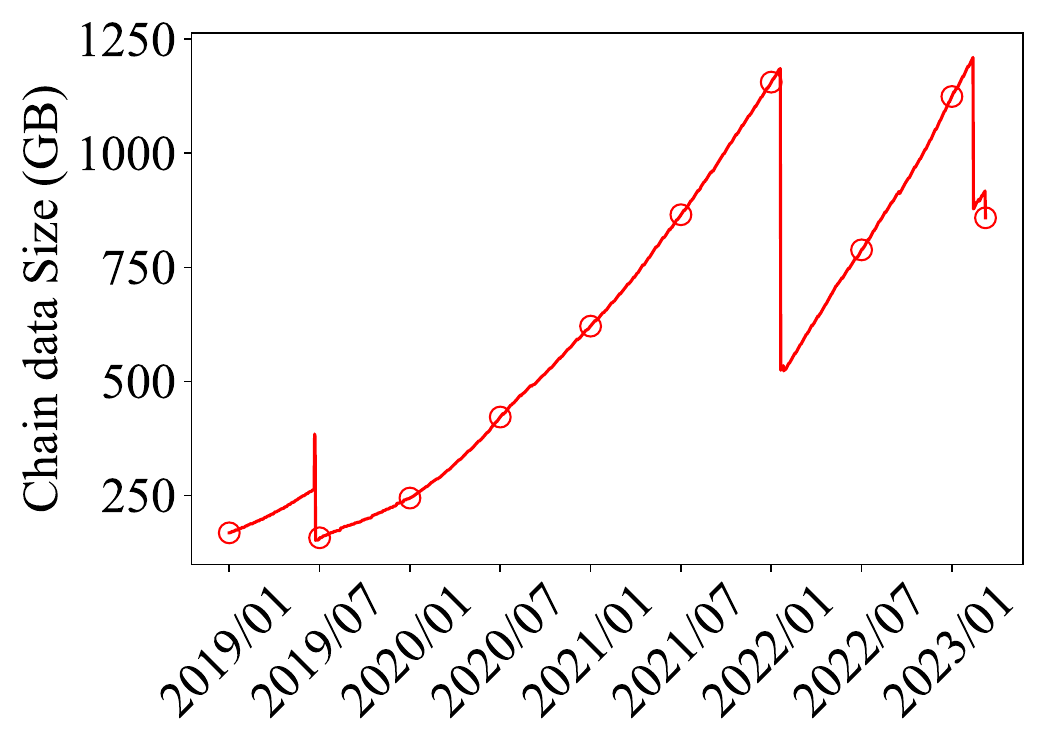}
   \label{fig:ethereumStorage}
  }%
  \hfill
\subfigure[\hw{Sharding system's storage}]{
  \includegraphics[width=0.22\textwidth]{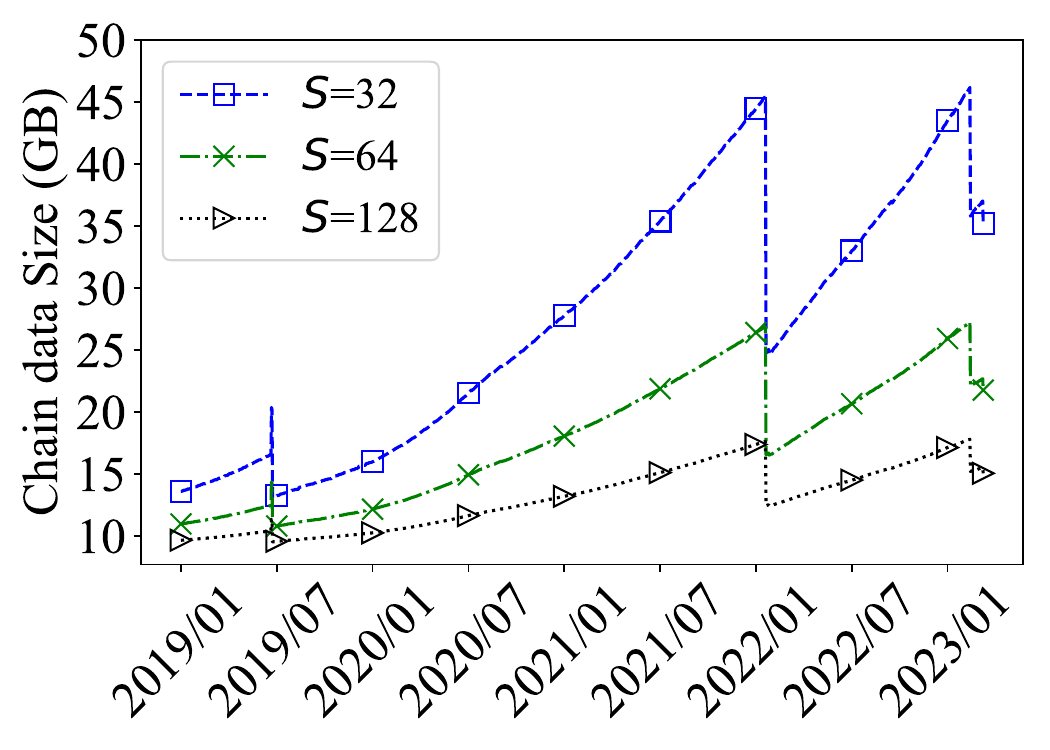}
   \label{fig:shardingStorage}
 }
\caption{\hw{Comparison of the storage overhead between Ethereum and the sharding system.}}
\label{fig:Storage}
\end{figure}

\textbf{Query Complexity.}
Through querying the \textit{storage map} in mSST, we can easily confirm whether a TX is an intra-shard TX or a cross-shard one. The querying time order is $O(1)$. Therefore, the protocol can easily get the deposits of an account by summing the \textit{value} field of all its segmented account states. The deposit-querying time order is $O(\xi)$, where $\xi$ is the number of shards where the account's segmented states are stored.

\hw{\textbf{Storage Complexity.}
By March 2023, the number of accounts in Ethereum has exceeded 224 million. The full node needs to store approximately 858 Gigabytes (GB) of on-chain data. As shown in Fig. \ref{fig:ethereumStorage}, even though Ethereum periodically prunes the world state tree, the full nodes still need to maintain high storage capacity. Thus, traditional single-chain structures cannot meet the needs of large-scale applications since ordinary nodes have to address the challenge of storage if they choose to serve as consensus nodes.
However, in a sharding system, full nodes in blockchain shards only need to store a small portion of state data compared with that of Ethereum full nodes. Fig. \ref{fig:shardingStorage} shows that the chain data size declines following the increasing number of shards. Thus, the storage complexity of the proposed protocol has the potential to serve large-scale blockchain applications.
}


\begin{figure}[t]
\centering
\includegraphics[width=0.8\columnwidth]{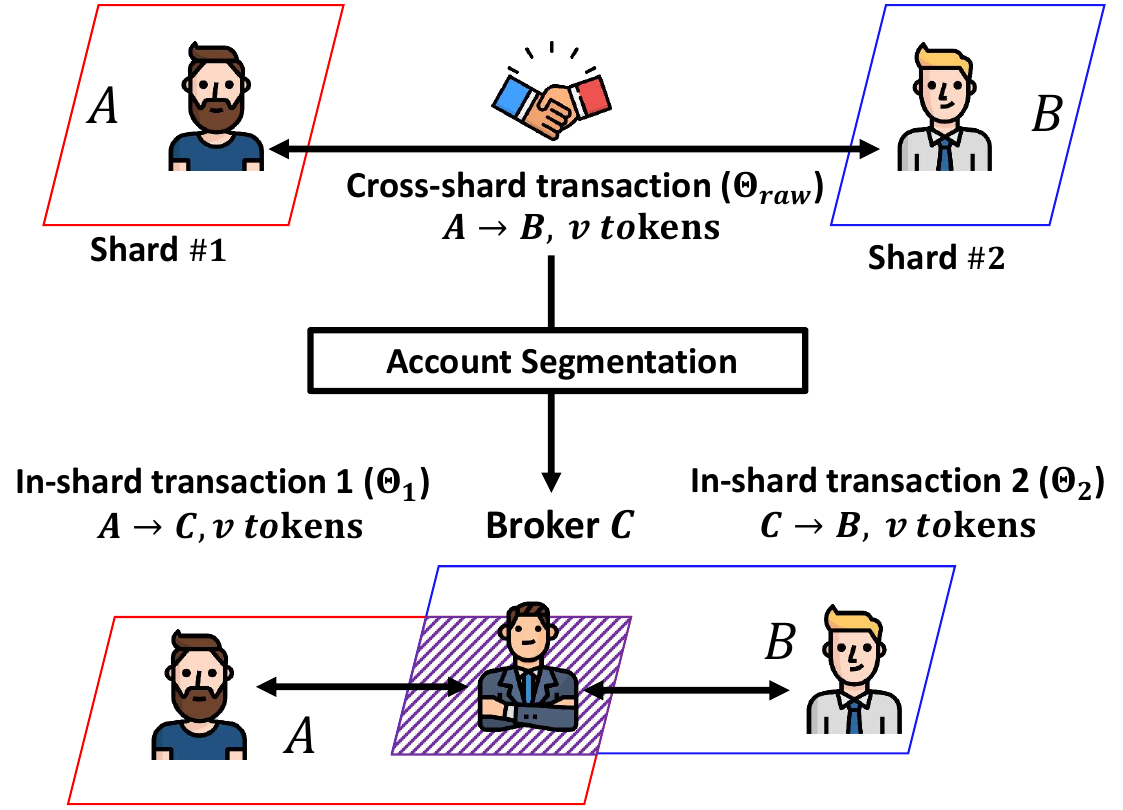}
    \caption{\hw{How a broker account turns a CTX into two intra-shard TXs.}}
\label{fig:accountsegmentation}
\end{figure}

\section{Handling the Cross-shard Transactions}\label{sec:ctx}

Although the proposed \textit{account segmentation} mechanism enables an account to be segmented into multiple smaller ones, the question is that the blockchain system needs to recruit some accounts that are willing to act as the intermediaries to help handle cross-shard TXs. We call such intermediaries the \textit{broker accounts}, which are shortened as \textit{brokers}.
In this section, we introduce how BrokerChain handles cross-shard TXs through brokers.
A fundamental requirement to become a broker is that its account has a sufficient amount of tokens.
For the system users who intend to become brokers, they could request to pledge their assets to a trusted third party or a special smart contract. Then, the BrokerChain protocol accommodates the broker eligibility for those interested system users. To stimulate a system user to play a broker’s role, an incentive mechanism is necessary. The design of such an incentive mechanism is left as our future work.

We use the accounts and the ledger shown in Fig. \ref{fig:accountsegmentation} as an example to demonstrate the handling of cross-shard TXs. 
Suppose that $C$ is a broker, whose account is segmented and stored in shards \#1 and \#2. 
Now, we have a raw CTX, i.e., payer $A$ transfers a number $v$ of tokens to the payee account $B$ via broker $C$. Here, we call shard \#1 the \textit{source shard} and shard \#2 the \textit{destination shard} of this CTX.
\hw{Through the broker $C$, the raw CTX is transformed into two intra-shard TXs, which shall be dealt with in the \textit{source shard} and \textit{destination shard}, respectively.}

\hw{When running the proposed protocol, as shown in Fig. \ref{fig:overallcrossTXmechanism}, BrokerChain processes each CTX in a either successful or failed case. Typically, a failed case is caused by large network latency while confirming the relay sub-transactions.} By exploiting Fig. \ref{fig:crossTXmechanism}, we introduce both the success and failure cases of handling cross-shard TXs.

\begin{figure}[t]
\centering
\subfigure[\hw{The overall workflow of processing a CTX.}]{
  \includegraphics[width=1\columnwidth]{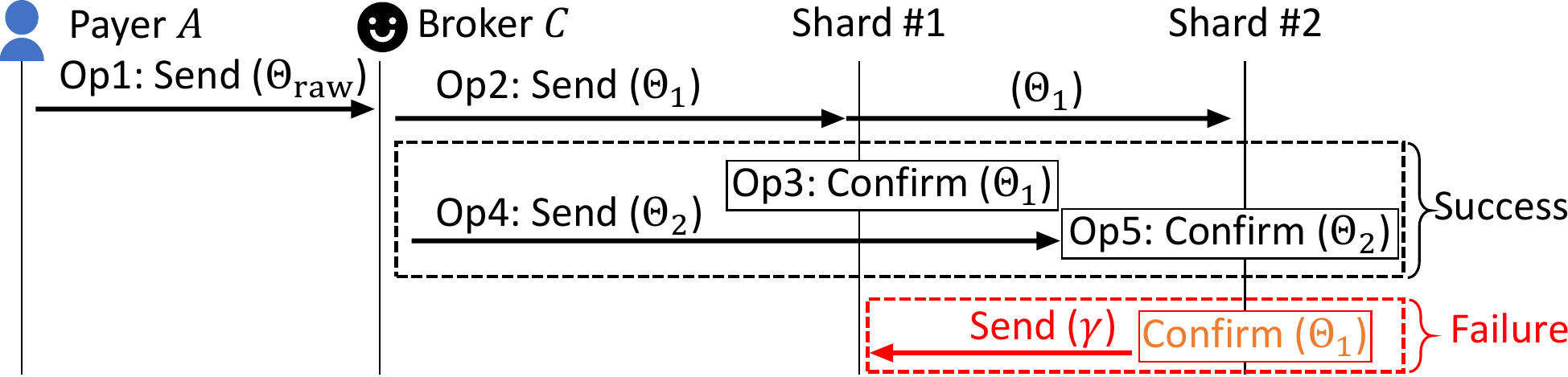}
   \label{fig:overallcrossTXmechanism}
 }
\subfigure[\hw{Successful handling of a CTX.}]{
  \includegraphics[width=1\columnwidth]{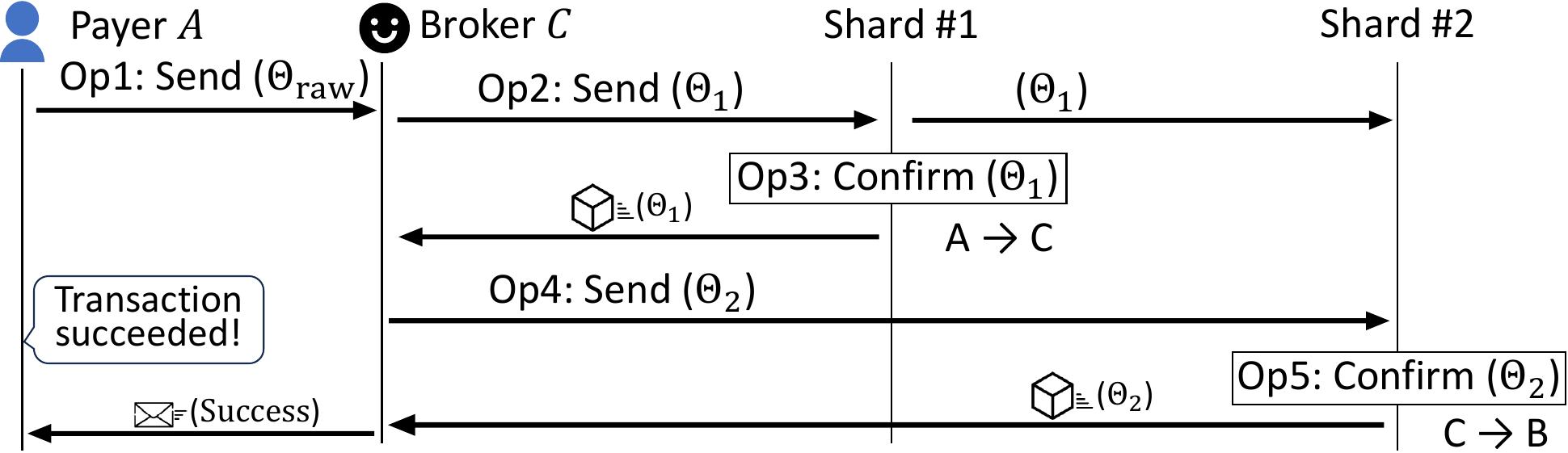}
   \label{fig:successcrossTXmechanism}
 }
 \subfigure[\hw{The handling of a CTX under the failed case.}]{
  \includegraphics[width=1\columnwidth]{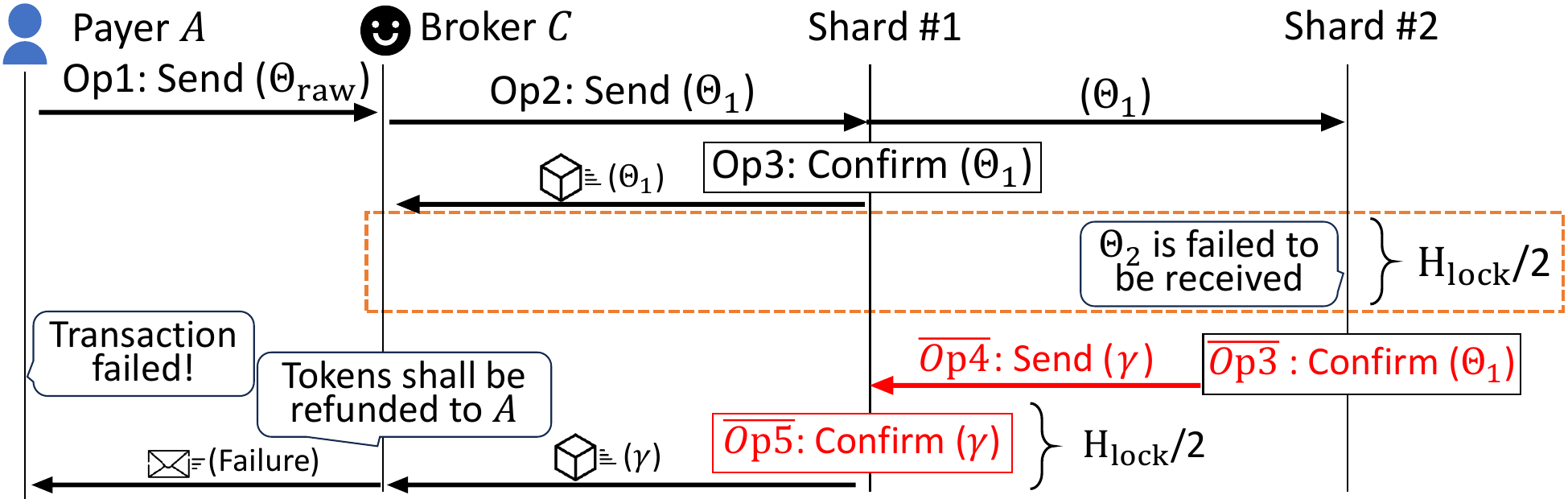}
   \label{fig:falurecrossTXmechanism}
 }
 
\caption{\hw{The success and failure cases during the handling of CTXs. A success case includes all five operations: Op1-Op5, while a failure case typically indicates that $\Theta_{2}$ is not received by shard \#2 within shard \#1's block-height interval [$H_{\text{current}}$, $H_{\text{current}}$ + $H_{\text{lock}}/2$].} }
\label{fig:crossTXmechanism}
\label{fig:32}
\end{figure}

 \subsection{Atomicity Guarantee in A Success Case}\label{sec:successCase}

 \hw{
  Fig. \ref{fig:successcrossTXmechanism} demonstrates the five operations of the successful handling of a CTX. The five operations are described as follows.}

 \begin{itemize}
 
 \item  \textbf{Op1: Create a raw transaction $\Theta_{\text{raw}}$}. Account $A$ first chooses an appropriate token-lock duration (denoted by $H_{\text{lock}}$), which is in fact the number of successive blocks counting from the block including $\Theta_{\text{raw}}$ to the one releasing locked tokens. $\Theta_{\text{raw}}$ is defined as follows.
    \begin{equation*}
    \Theta_{\text{raw}}:= \left\langle \left\langle B, v, C, H_{\text{lock}},\eta_{\text{payer}}, \eta_{\text{broker}} \right\rangle , {\sigma_{A}} \right\rangle,
     \end{equation*}
 where $\eta_{\text{broker}}$ and $\eta_{\text{payer}}$ denote the nonce of broker $C$ and the nonce of payer account $A$, respectively. Symbol $\sigma_{A}$ is the signature of payer $A$, calculated by invoking ECDSA algorithm \cite{signature}. The usage of the  token-lock duration $H_{\text{lock}}$ and nonces $\eta_{\text{payer}}, \eta_{\text{broker}}$ are explained subsequently.  
 Next, $A$ informs $\Theta_{\text{raw}}$ to the broker $C$.

 \item \textbf{Op2: Create the first-half cross-shard TX $\Theta_{1}$}. When receiving $\Theta_{\text{raw}}$, broker $C$ creates the first-half cross-shard TX, denoted by $\Theta_{1}$. Then, the current block height in shard \#1 when $\Theta_{1}$ is created is labeled as $H_{\text{current}}$.
 We then have
     \begin{equation*}
     \Theta_{1}:=\left\langle \left\langle \text{Type1}, \Theta_{\text{raw}}, H_{\text{current}} \right\rangle , \sigma_{C} \right\rangle,
     \end{equation*}
  where Type1 is an indicator of $\Theta_{1}$, and $\sigma_{C}$ is the signature of broker $C$. Next, $\Theta_{1}$ is broadcast to the blockchain network.
  When all the blockchain nodes in other shards receive $\Theta_{1}$, they execute the following steps.
 i) Validating $\Theta_{1}$ using signatures $\sigma_{A}$ and $\sigma_{C}$. 
 ii) Acquiring the payer's public key and calculating the payer's account address using signature $\sigma_{A}$. 
  iii) Through querying the mSST, shard nodes know the source and destination shards.
  iv) Using Kademlia \cite{2002Kademlia} routing protocol, $\Theta_{1}$ is forwarded to shards \#1 and \#2.
  After validating that $\eta_{\text{payer}}$ is correct and the deposits of payer $A$ are sufficient, $\Theta_{1}$ will be added to the TX pool of shard \#1, and waits to be packaged in a block.

    \item  \textbf{Op3: Confirm $\Theta_{1}$}. After broadcasting,  the blockchain nodes in shard \#1 shall include $\Theta_{1}$ in the block whose height is labeled as $H_{\text{source}}$. Apparently, $H_{\text{current}} \leq H_{\text{source}}$. Then, payer $A$ transfers $v$ tokens to broker $C$, and these tokens will be locked within the block-height interval $[H_{\text{source}}, H_{\text{source}}+H_{\text{lock}}]$. The nonce of payer $A$ will be increased by 1 to prevent the replay attack \cite{replay}.

  \item  \textbf{Op4: Create the second-half cross-shard TX $\Theta_{2}$}. When broker $C$ is acknowledged that $\Theta_{1}$ has been confirmed, $C$ then creates the second-half cross-shard TX $\Theta_{2}$: 
       \begin{equation*}
        \Theta_{2}:= \left\langle \left\langle \text{Type2}, \Theta_{\text{raw}} \right\rangle , \sigma_{C} \right\rangle,
       \end{equation*}
       where Type2 is a label of $\Theta_{2}$.
 Next, $\Theta_{2}$ will be broadcast to the blockchain network.  $\Theta_{2}$ is finally routed to the destination shard, i.e., shard \#2.  When $\eta_{\text{broker}}$ is verified qualified and the deposits of broker $C$ in the shard \#2 are sufficient, $\Theta_{2}$ will be added to the TX pool of shard \#2.

 \item  \textbf{Op5: Confirm $\Theta_{2}$}. When $\Theta_{2}$ is received by the blockchain nodes in shard \#2,  it will be packaged in a new block if the current block height of the shard \#1 is smaller than $H_{\text{current}}$ + $H_{\text{lock}}/2$. The account state in shard \#2 is then updated when the following token transfer is executed: broker $C$ transfers a number $v$ of tokens to payee account $B$. Broker $C$'s nonce is then increased by 1 to prevent the replay attack \cite{replay}.

 \end{itemize}

 In the \textit{success} case of CTX handling, the payee account $B$ can get a number $v$ of tokens when $\Theta_{2}$ is successfully included in the destination shard's block. On the other hand,  only when the block height of shard \#1 is greater than $H_{\text{source}}$+$H_{\text{lock}}$, can broker $C$ receive a number of $v$ refunded tokens. 

 \subsection{Atomicity Guarantee in A Failure Case}\label{sec:failureCase}

 According to our design shown in Fig. \ref{fig:worldstate}, all shards mutually share their block headers where the essential block height information is included. 
 Thus, if the blockchain nodes in shard \#2 fail to receive $\Theta_{2}$ when the current block height of shard \#1 exceeds $H_{\text{current}}$ + $H_{\text{lock}}/2$, we say that $\Theta_{2}$ is failed to be acknowledged by  shard \#2. This failure result is very possibly induced by a malicious broker who intends to embezzle the payer account's locked tokens.

 As shown in Fig. \ref{fig:falurecrossTXmechanism}, in order to process such a failure case, BrokerChain executes the following steps to guarantee the atomicity of CTXs.

 \begin{itemize}
     \item \hw{\textbf{$\overline{\textbf{Op3}}$: Confirm $\Theta_{1}$}.} When $\Theta_{2}$ is failed, $\Theta_{1}$ will be included in a block located at shard \#2. Broker $C$'s nonce will be attempted to update in shard \#2.

     \item \hw{\textbf{$\overline{\textbf{Op4}}$: Send $\gamma$}.} When $\Theta_{1}$ is confirmed, blockchain nodes in shard \#2 send the following confirmation of $\Theta_{1}$ to  shard \#1 as the \textit{failure proof} (denoted by $\gamma$):
        \begin{equation*}
        \gamma := \left\langle\Theta_{1}, \textit{dest}, H_{\text{dest}}, \{P_{\text{dest}}\}\right\rangle,
       \end{equation*}
    where \textit{dest} represents the index of the destination shard, $H_\text{dest}$ denotes the height of destination shard's block where $\Theta_{1}$ is included, and $\{P_{\text{dest}}\}$ refers to the Merkle tree path \cite{nakamoto2008bitcoin}. $\{P_{\text{dest}}\}$ consists of the hash values of all the associated Merkle tree nodes along the path originating from Merkle tree root and terminating at the entry hash node corresponding to $\Theta_{1}$. The Merkle tree path $\{P_{\text{dest}}\}$ is used to verify that $\Theta_1$ is packaged in a block located at shard \#2.

    \item \hw{\textbf{$\overline{\textbf{Op5}}$: Confirm $\gamma$}.} Once the nodes in shard \#1 receive the failure proof $\gamma$ and verify the correctness of path $\{P_{\text{dest}}\}$, shard \#1 is acknowledged that $\Theta_{1}$ has been included in a block located at  shard \#2 but $\Theta_{2}$ was not. Then, $\gamma$ will be included in the source shard's block whose block height is smaller than $H_{\text{source}}$ + $H_{\text{lock}}$. Finally, the locked tokens in Op3  will be refunded to the payer account $A$.
 \end{itemize}

 %

No matter whether a CTX is handled in a success or failure case, only one of the two created TXs $\Theta_{1}$ or $\Theta_{2}$ is allowed to be included in a block at the destination shard. The failure-proof $\gamma$ could be possibly lost by accident during broadcasting. Once $\gamma$ is found lost, it can be reconstructed and launched again by the destination shard.

\section{Property Analysis of BrokerChain}\label{sec:safety}

\subsection{Tackling Double-Spending Attacks}

BrokerChain treats the payee account of a cross-shard TX benign. Thus, in the following, we analyze how BrokerChain tackles typical security threats brought by the payer and broker.

 \textbf{Threat-1: Sender $A$ is malicious.} A malicious payer account may launch double-spending attacks.
 In such attacks, payer $A$ may create a double-spending TX in the source shard along with the raw TX $\Theta_{\text{raw}}$ denoted by $\Theta_{A}:= \left\langle A \rightarrow A', \eta_A\right\rangle$, where $A'$ is payer $A$'s another account and $\eta_A$ is the nonce of $\Theta_{A}$. 
 Note that, to achieve double spending, $\eta_A$ must be exactly same with $\eta_\text{payer}$ signed in the firsts-half cross-shard TX $\Theta_{1}$. 
 Although the attack strategy described above seems feasible, it is impossible to implement under BrokerChain.
 Taking the advantage of the \textit{counter} mechanism aforementioned, if $\Theta_{A}$ is confirmed by the source shard and $\eta_A$ is updated, then $\Theta_{1}$ will be failed to be confirmed by the source shard because its $\Theta_{1}$'s nonce $\eta_{\text{payer}}$ is conflict to $\eta_A$. Only when broker $C$ finds that $\Theta_{1}$ has been confirmed by the source shard, $C$ begins to create $\Theta_{2}$. Through this way, broker $C$ can defend against double-spending attacks launched by the malicious payer account.

\textbf{Threat-2: Broker $C$ is malicious.} A malicious broker $C$ may create the following double-spending TX in the destination shard exactly when $\Theta_{1}$ is confirmed in the source shard: $\Theta_{C}:= \left\langle C \rightarrow C', \eta_C \right\rangle$, where $C'$ is broker's another account and $\eta_C$ is exactly same with $\eta_{\text{broker}}$ signed in TX $\Theta_{2}$. 
Similarly, if $\Theta_{C}$ is confirmed by the destination shard and the nonce of broker $C$ is updated, then $\Theta_{2}$ will not be confirmed by the destination shard because $\eta_{\text{broker}}$ is conflict. 
By monitoring the update of nonce $\eta_{\text{broker}}$, BrokerChain can identify whether a broker is malicious or not.
To prevent broker $C$ from embezzling payer $A$'s pledged tokens, BrokerChain enforces the handling routine of such cross-shard TX to fall into the failure case. 


In conclusion, with the proposed asset pledge and token-lock mechanisms, BrokerChain can ensure the atomicity of cross-shard TXs under the threats of double-spending attacks.

\subsection{Recommended Setting for Token-Lock Duration}

Recall that in Op1 of cross-shard TX's handling described in Section \ref{sec:successCase}, a payer needs to first choose an appropriate token-lock duration towards creating a raw transaction.
The question is how to set such token-lock duration for the raw transaction.
A feasible approach is to set the token-lock duration according to the system throughput observed currently. An empirical recommended setting is to ensure at least 20 times of the average latency of handling the cross-shard TXs.

\subsection{Analysis of Cross-shard Confirmation}

As shown in Fig. \ref{fig:crossTXmechanism}, the cross-shard verifications occurred during the handling of cross-shard TXs include the following two cases: i) under the success case, broker $C$ has to verify that whether $\Theta_1$ is included in the source shard's chain, and ii) under the failure case, the source shard's nodes need to verify the \textit{failure proof} $\gamma$ sent by the destination shard. 
Thus, in the success case, the computing complexity of cross-shard verification is $O(1)$. In the failure case, the computing complexity of cross-shard verification is $O(n)$, where $n$ is the number of blockchain nodes in the source shard. 
Therefore, BrokerChain can lower such cross-shard verification overhead by shifting the on-chain verification to the off-chain manner taking advantages of the broker's role.

Regarding the cross-shard TX's confirmation latency, it is theoretically longer than twice intra-shard TX's confirmation latency. This is because a cross-shard TX must be confirmed at the source and destination shards in a sequential order.
In contrast, BrokerChain can lower the confirmation latency based on broker's reputation.
If a broker is completely trusted by the sharding blockchain, both $\Theta_1$ and $\Theta_2$ can be allowed to execute simultaneously. Thus, their confirmation latency can approach to that of an intra-shard TX technically.

\subsection{Duration-Limited Eventual Atomicity}

Monoxide \cite{Wang2019Monoxide} guarantees the \textit{eventual atomicity} of cross-shard TXs. However, Monoxide does not specify the confirmation time of relay TXs. Thus, the cross-shard TX's handling in Monoxide may take a long time. In contrast, BrokerChain can ensure each cross-shard TX is done within the predefined token-lock duration $H_{\text{lock}}$. When the cross-shard TX is successfully processed, the payer account's pledged tokens will be received by the payee account. Otherwise, payer's pledged tokens will be refunded by the broker. Therefore, we claim that BrokerChain ensures the \textit{duration-limited} eventual atomicity for cross-shard TXs.

\begin{figure}[t]
\centering
 \subfigure[\hw{Failure Probability ($\hat{P}$) \textit{vs} P-shard Size ($m$)}]{
  \includegraphics[width=0.22\textwidth]{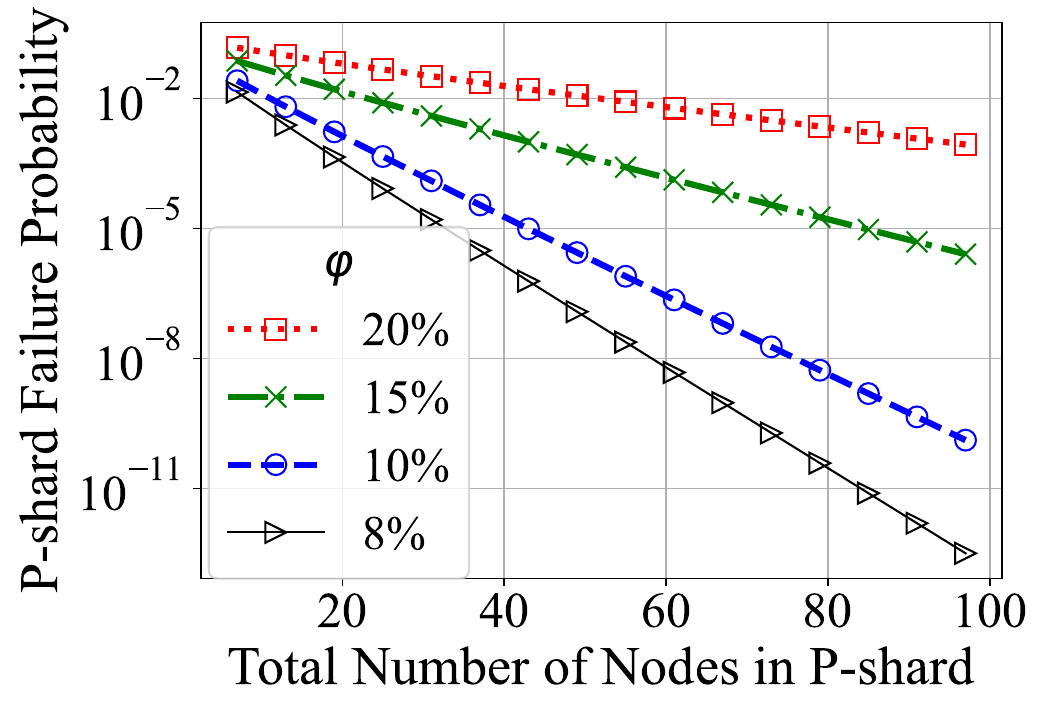}
   \label{fig:shardFailurePorb}
  }%
  \hfill
\subfigure[\hw{Failure Probability ($\hat{P}$) \textit{vs} Malicious Computational Power ($\varphi$)}]{
  \includegraphics[width=0.22\textwidth]{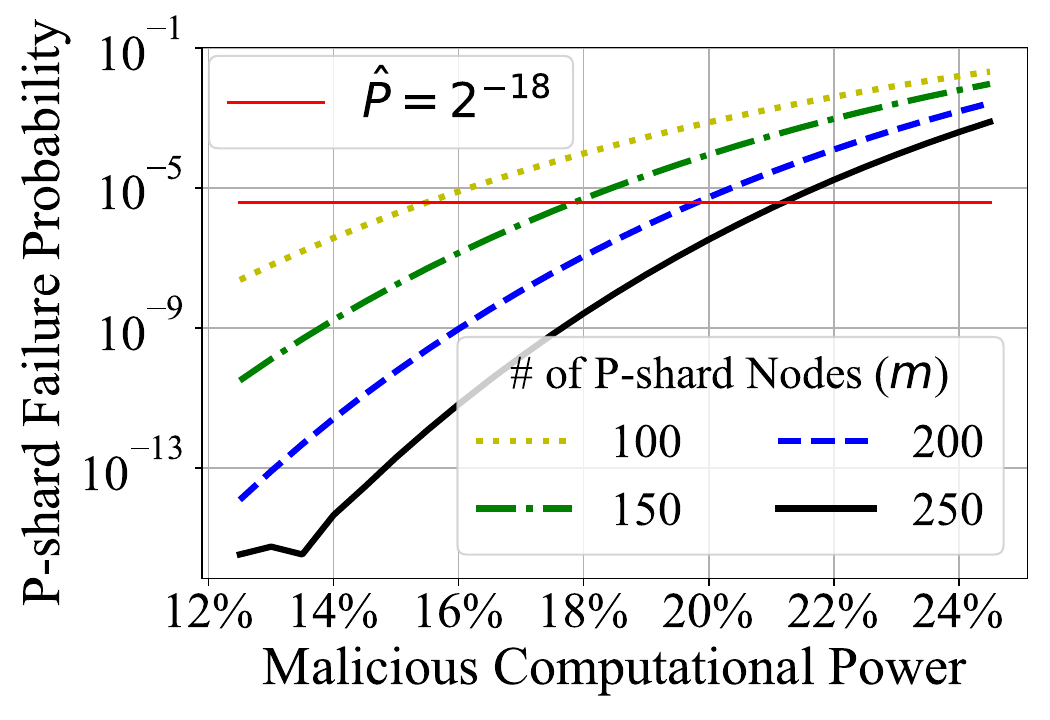}
   \label{fig:malicousNodeWithProb}
 }
\caption{\hw{Theoretical failure probability of the \textit{P-shard}.}}
\label{fig:ProbabilityOfPshard}
\end{figure}

\section{Security Analysis of Shards}\label{sec:safetyOfShard}

\subsection{\hw{Security Analysis for P-shard}}

\hw{
 Let \hhw{real number-valued symbols} $\alpha \in (0, 1)$ denote the proportion of honest nodes' computational power in the P-shard, $\varphi \in (0, 1)$ denote the proportion of malicious nodes' computational power in the overall system, and $\upsilon \in (0, 1)$ denote the probability that a node is malicious in the P-shard.
}

\begin{theorem}\label{theorem1}
\hw{Supposed that blockchain nodes can freely choose which shard to join, we have $\upsilon=\frac{\varphi}{\varphi+\alpha\cdot (1-\varphi )}$,} \hhw{where $\varphi+\alpha\cdot (1-\varphi )>0$.}
\end{theorem}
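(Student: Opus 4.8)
The plan is to set up a simple conservation-of-computational-power argument, using the hash-puzzle identity mechanism as the bridge between ``computational power'' and ``number of nodes.'' First I would normalize the total computational power in the whole system to $1$, so that the malicious parties jointly control power $\varphi$ and the honest parties jointly control power $1-\varphi$ (both strictly positive since $\varphi\in(0,1)$). The hypothesis that ``nodes can freely choose which shard to join'' is read as the adversarial worst case: since the Sybil-resistant puzzle no longer pins a node to a shard, a rational adversary funnels all of its power into the single shard it wishes to subvert --- here the P-shard --- so the malicious power residing in the P-shard is exactly $\varphi$. On the honest side, $\alpha$ is by definition the fraction of the total honest power that happens to reside in the P-shard, hence the honest power in the P-shard equals $\alpha\,(1-\varphi)$.

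Next I would combine these quantities. The total computational power represented inside the P-shard is $\varphi+\alpha\,(1-\varphi)$. Because each unit of computational power yields, in expectation, one puzzle solution and therefore one admitted identity, the number of malicious nodes and the number of honest nodes in the P-shard are proportional to $\varphi$ and $\alpha\,(1-\varphi)$ respectively. Consequently the probability that a node drawn from the P-shard is malicious is
\[
\upsilon=\frac{\varphi}{\varphi+\alpha\,(1-\varphi)}.
\]
The denominator is strictly positive --- the stated side condition $\varphi+\alpha\,(1-\varphi)>0$ is exactly what makes the ratio well-defined --- and the value lies in $(0,1)$ because $0<\varphi<\varphi+\alpha\,(1-\varphi)$, consistent with the declared range $\upsilon\in(0,1)$.

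The part needing the most care --- and really the entire content of the statement --- is justifying the two modeling steps rather than the final division. Step (i): equating ``probability a node is malicious'' with the ratio of computational powers relies on the puzzle admitting identities in proportion to work performed, so that honest-looking identities cannot be manufactured cheaply and, symmetrically, the adversary's node count is capped by its power share $\varphi$. Step (ii): arguing that the worst case is all malicious power concentrated in the P-shard should be framed as the adversary-optimal strategy --- spreading power across shards only dilutes the attack, so $\upsilon$ is maximized by concentration, and it is this maximal $\upsilon$ that the security analysis subsequently uses. I would also make explicit the intended reading of $\alpha$ as ``honest P-shard power divided by total honest power'' rather than ``honest P-shard power divided by total system power,'' since only the former yields the stated closed form. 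Everything downstream of these modeling commitments is the one-line computation above.
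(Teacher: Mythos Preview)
Your proposal is correct and follows essentially the same argument as the paper: both normalize total power, argue that under free shard choice the adversary concentrates all malicious power $\varphi$ in the P-shard, compute honest power there as $\alpha(1-\varphi)$, and take the ratio. Your write-up is more careful than the paper's --- in particular your explicit flagging of the intended reading of $\alpha$ and the puzzle-to-identity proportionality step --- but the underlying reasoning is identical.
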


\begin{proof}

\hw{Nodes join the sharding system by solving hash puzzles, where computational power dominates the security of a shard. 
\hhw{To help better understand our calculation, we define the \textit{relative} proportion of computational power of either malicious or honest nodes as their \textit{\textbf{credits}} while constructing shards.}
Because nodes can choose the type of shards to join, malicious nodes could concentrate their computational power to attack the P-shard.
We already know that $\varphi$ is the \textit{credits} of malicious nodes while constructing the P-shard. Thus, the \textit{credits} of honest nodes in the P-shard is $\alpha\cdot\left(1-\varphi\right)$. Thereby, the equation of $\upsilon=\frac{\varphi}{\varphi+\alpha\cdot\left(1-\varphi\right)}$ holds. 
}
\zk{The proof of Theorem \ref{theorem1} concludes.}
\end{proof}

\begin{remark}
\hhw{Theorem \ref{theorem1} tells us that when the malicious nodes' computational power $\varphi$ is fixed, the risk of suffering from computational-power attacks in the P-shard increases if $\alpha$ declines.
When $\alpha$ becomes smaller, the effect of malicious nodes' relative computational power (i.e., the \textit{credits} of malicious nodes) will be amplified in the P-shard.}
\hw{For example, by fixing $\varphi$=20\%, when $\alpha$=0.9, we have $\upsilon$=21.7\%; but when $\alpha$=0.5, we have $\upsilon$=33\%.  Therefore, the system should not allow nodes to join the P-shard freely.}
\end{remark}

\hhw{We then let $m \in \mathbb{N^{+}}$ denote the number of nodes in P-shard, $ \vartheta \in \mathbb{N^{+}}$ denote the number of nodes in each M-shard, $\kappa \in  (0, 1)$ denote the probability that a node is assigned to P-shard, and  $\hat{P} \in (0, 1)$ denote the probability of the P-shard failure. Here, we call a \textit{\textbf{P-shard failure}} occurs if the P-shard is cracked by malicious nodes. Via the following Theorem \ref{theorem2}, we calculate the probability of the P-shard failure.}

\begin{theorem}\label{theorem2}
\hhw{If the system randomly assigns nodes to all shards, we have $\upsilon=\varphi$. This indicates that the computational power of malicious nodes will not be amplified. The probability of the P-shard failure is then calculated as follows:}  \begin{equation}\label{equ:pshardfail}
    \hat{P}=\sum\limits_{i=\lfloor(m-1)/3\rfloor+1}^{m}\begin{pmatrix}m\\i\end{pmatrix}\begin{pmatrix}\upsilon\end{pmatrix}^{i}\cdot\begin{pmatrix}1-\upsilon\end{pmatrix}^{m-i}. 
    \end{equation}
\end{theorem}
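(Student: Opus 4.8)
The plan is to prove the statement in two stages: first establish the identity $\upsilon=\varphi$ under random shard assignment, and then, using that identity together with the fault tolerance of PBFT, derive the closed-form expression for $\hat{P}$ in (\ref{equ:pshardfail}).

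For the first stage, I would contrast this setting with Theorem~\ref{theorem1}. There, nodes choose their shard, so the adversary can concentrate all of its computational power on the P-shard, which inflates the honest nodes' \emph{credits} down to $\alpha\cdot(1-\varphi)$. Under random assignment, neither side can redistribute power: each node-slot of the P-shard is, independently, occupied by a malicious node with probability equal to the global malicious computational-power fraction $\varphi$. Plugging the ``unconcentrated'' honest credits $1-\varphi$ into the formula of Theorem~\ref{theorem1} gives $\upsilon=\frac{\varphi}{\varphi+(1-\varphi)}=\varphi$, so no amplification occurs. The phrasing I will carry into the computation is simply: the number of malicious nodes among the $m$ P-shard members is a sum of $m$ i.i.d.\ $\mathrm{Bernoulli}(\upsilon)$ indicators with $\upsilon=\varphi$.

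For the second stage, I would first pin down the failure condition. The P-shard runs PBFT among its $m$ members, and PBFT preserves safety exactly when the number of Byzantine participants does not exceed $f=\lfloor(m-1)/3\rfloor$ (the largest $f$ with $3f+1\le m$). Hence a \emph{P-shard failure} occurs if and only if the number $Z$ of malicious nodes in the P-shard satisfies $Z\ge\lfloor(m-1)/3\rfloor+1$. By the first stage, $Z\sim\mathrm{Binomial}(m,\upsilon)$, so summing the Binomial pmf over the failing range yields
\begin{equation*}
\hat{P}=\Pr\!\left[Z\ge\left\lfloor\tfrac{m-1}{3}\right\rfloor+1\right]=\sum_{i=\lfloor(m-1)/3\rfloor+1}^{m}\binom{m}{i}\upsilon^{i}(1-\upsilon)^{m-i},
\end{equation*}
which is exactly the claimed (\ref{equ:pshardfail}).

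The main obstacle I anticipate is justifying the \emph{exact} Binomial model rather than a hypergeometric one: drawing $m$ distinct nodes from a finite population with a $\varphi$-fraction of malicious computational power is sampling without replacement, so treating the membership indicators as independent $\mathrm{Bernoulli}(\varphi)$ variables requires either a large-population assumption (making the discrepancy negligible) or an explicit modeling decision that each node is independently malicious with probability $\varphi$ and independently assigned to the P-shard. A secondary point to get right is the PBFT threshold, i.e.\ that the first failing count is $\lfloor(m-1)/3\rfloor+1$ so that it coincides with the lower summation limit; everything past that is a routine Binomial tail identity.
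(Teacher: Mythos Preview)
Your proposal is correct and follows essentially the same approach as the paper: establish $\upsilon=\varphi$ by comparing malicious and honest ``credits'' in the P-shard under random assignment, then invoke the PBFT $1/3$-tolerance threshold and sum the Binomial tail. The only cosmetic difference is that the paper introduces the assignment probability $\kappa=\tfrac{m}{m+\vartheta S}$ explicitly and computes the credits as $\varphi\kappa$ and $(1-\varphi)\kappa$ before cancelling, whereas you shortcut this by plugging $\alpha=1$ into the formula of Theorem~\ref{theorem1}; your discussion of the Binomial-versus-hypergeometric modeling choice is more careful than what the paper provides.
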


\begin{proof}
\hw{Since shard nodes are randomly assigned, the probability that malicious nodes \hhw{join} P-shard depends solely on the proportion of malicious nodes out of the total number of nodes. The total number of nodes is $m+\vartheta \cdot S$, thus $\kappa=\frac{m}{m+ \vartheta \cdot S}$, where $m$ and $\vartheta$ are the numbers of nodes in the P-shard and each M-shard, respectively. The \hhw{\textit{credits}} of malicious nodes in P-shard is $\varphi \cdot \kappa$ and the \hhw{\textit{credits}} of honest nodes in the P-shard is $(1-\varphi)\cdot \kappa$. Therefore, 
the equation of  $\upsilon=\varphi$ holds, i.e., the impact of malicious nodes' computational power is not amplified. 
The intra-shard consensus protocol PBFT can tolerate at most 1/3 Byzantine nodes in each shard. When nodes are randomly \hhw{assigned} to different shards, the probability of the P-shard failure $\hat{P}$ is written as Eq. \eqref{equ:pshardfail}.
\zk{The proof of Theorem \ref{theorem2} concludes.}
}

\end{proof}

\begin{remark}

\hhw{Theorems \ref{theorem1} and Theorems \ref{theorem2} imply that the system should enforce nodes randomly join different shards. According to $\hat{P}$, the system should keep an appropriate number of nodes in the P-shard to make the probability of the P-shard failure lower than a certain predefined threshold, e.g., $\hat{P} \leq 2 ^ {- \lambda}$, where $\lambda \in \mathbb{N^{+}}$ is the parameter that describes the security level of the sharding system. When $\lambda=18$, the probability of the P-shard failure is not greater than $2^{-18}$$\approx$$3.8\times10^{-6}$, which means that a P-shard failure is expected to occur once every $2.6\times 10^{5}$ times of shard reconfiguration. When the number of the P-shard consists of 250 nodes (i.e., $m=250$), the P-shard can tolerate malicious nodes whose total computational power reaches 21\% at the most.}

\hhw{Fig. \ref{fig:shardFailurePorb} shows that the probability of the P-shard failure drops following the number of nodes in the P-shard while changing  $\varphi$ from 20\% to 8\%. A smaller $\varphi$ results in a lower probability of the P-shard failure. 
Fig. \ref{fig:malicousNodeWithProb} shows that the probability of P-shard failure increases when the malicious nodes' computational power increases. Note that, the red line indicates the security threshold when $\lambda=18$. While changing the size of the P-shard from 100 to 250, we observe that a large number of nodes can alleviate the P-shard failure given a specific $\varphi$.}
\end{remark}

\begin{figure}[t]
\centering
 \subfigure[\hw{Failure Probability ($\overline{P}$) \textit{vs} M-shard Size ($\vartheta \cdot S$)}]{
  \includegraphics[width=0.22\textwidth]{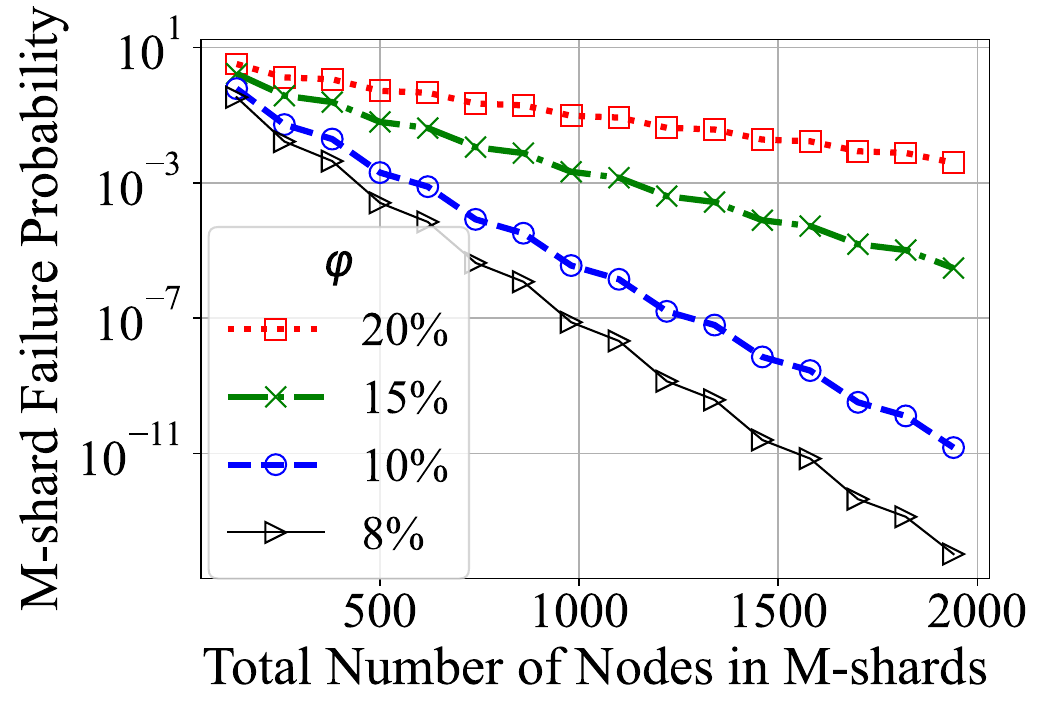}
   \label{fig:workshardSizwWithPorb}
  }%
  \hfill
\subfigure[\hw{Failure Probability ($\overline{P}$) \textit{vs} Malicious Computational Power ($\varphi$)}]{
  \includegraphics[width=0.22\textwidth]{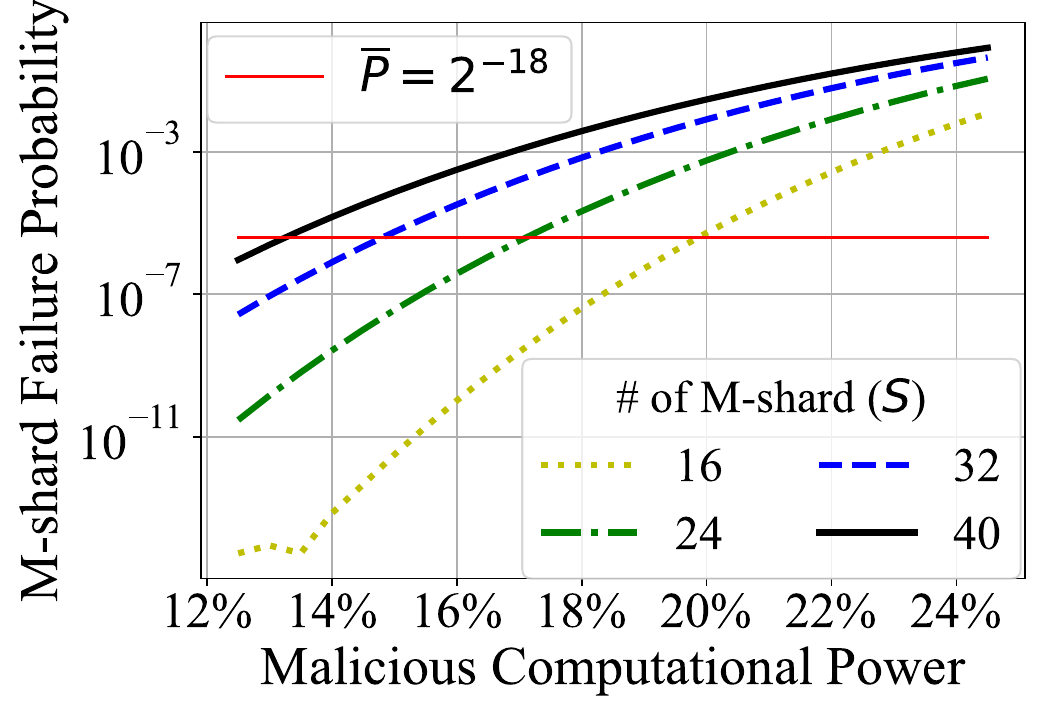}
   \label{fig:workShardMalicousNodeWithProb}
 }
\caption{\hw{Theoretical failure probability of the \textit{M-shard}.}}
\label{fig:ProbabilityOfMshard}
\end{figure}

\subsection{\hw{Security Analysis for M-shard}}

\hhw{Let \hhw{the real number-valued symbol} $\delta  \in (0,1)$ denote the probability \hhw{that an M-shard node is malicious}; \hhw{the integer variable} $N \in \mathbb{N^{+}}$ denote the total number of malicious nodes in all M-shards; \hhw{the symbol} \code{shard i} ($i \in \mathbb{N}^+$) as the M-shard whose ID is \hhw{$i \in [S]$; \hhw{the integer} $n_{i} \in \mathbb{N^{+}} (i \in [S])$} denote the number of malicious nodes in \code{shard i}; \hhw{and the integer} $\beta \in \mathbb{N^{+}}$ denote the threshold for the number of malicious nodes in each M-shard. \hw{If $n_{i} $ exceeds threshold $\beta$, the M-shard \code{i} cannot work properly.} We then let \hhw{the real number-valued symbol} $p(n_{i})$ denote the probability \hw{that \code{shard i} has a number} $n_{i}$ of malicious nodes, and \hhw{the real number-valued symbol} $\overline{P} \in (0,1) $ denote the probability \hw{that an} M-shard is failed \hw{because of malicious nodes}. \hw{Finally, the integer} $\vartheta \in \mathbb{N}^+$ denotes the number of nodes in \hhw{any equal-sized} M-shard.}

\begin{theorem}\label{theorem3}

\hhw{Assuming nodes are randomly assigned to each shard and the system satisfies Byzantine fault tolerance, we have $\beta=\left\lfloor\frac{\vartheta-1}{3}\right\rfloor$, \hw{and the probability of M-shard failure (denoted by $\overline{P}$) is calculated} as follows:}
\begin{equation}
    \begin{aligned}
    \overline{P} = 
    1 & - \sum_{N=0}^{\vartheta \cdot S} \Big \{ \left(\varphi \right)^{N}\left(1-\varphi \right)^{\vartheta \cdot S-N} \\
    & \times  \sum_{n_{1}=0}^{\min\{N, \beta\}} \cdots\sum_{n_{S}=0}^{\min\{N- \sum_{j=1}^{S-1}n_{j}, \beta \} }\binom{\vartheta}{n_{1}}\cdots \binom{\vartheta}{n_{S}} \Big \}. 
    \label{equ:mshardfail}
    \end{aligned}
\end{equation}
\end{theorem}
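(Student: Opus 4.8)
The plan is to separate the statement into two independent claims: the value $\beta=\lfloor(\vartheta-1)/3\rfloor$ of the per-shard Byzantine tolerance, and the closed form \eqref{equ:mshardfail} for $\overline P$. The first claim is immediate from the choice of intra-shard consensus: every M-shard runs PBFT, which stays safe exactly when strictly fewer than one third of its $\vartheta$ participants are Byzantine; hence the largest tolerable malicious count is $\lfloor(\vartheta-1)/3\rfloor$, and ``\code{shard i} cannot work properly'' is precisely the event $n_i>\beta$ with $\beta=\lfloor(\vartheta-1)/3\rfloor$.

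For the second claim I would first fix the probability model. By the same argument used for Theorem \ref{theorem2} --- random assignment of nodes to shards does not amplify malicious computational power --- each of the $\vartheta\cdot S$ nodes placed in the M-shards is, independently, malicious with probability $\delta=\varphi$. Define the ``all-safe'' event $G:=\{\,n_i\le\beta\ \text{for every}\ i\in[S]\,\}$; then the M-shard layer is compromised iff $G$ fails, so $\overline P=1-\Pr[G]$ and the whole task reduces to evaluating $\Pr[G]$.

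To evaluate $\Pr[G]$ I would condition on $N$, the total number of malicious M-shard nodes. Here $N\sim\mathrm{Binomial}(\vartheta S,\varphi)$, so $\Pr[N=k]=\binom{\vartheta S}{k}\varphi^{k}(1-\varphi)^{\vartheta S-k}$; conditioned on $\{N=k\}$, the profile $(n_1,\dots,n_S)$ is multivariate hypergeometric, giving $\Pr[G\mid N=k]=\big(\sum\prod_{i}\binom{\vartheta}{n_i}\big)/\binom{\vartheta S}{k}$, the numerator summing over nonnegative tuples with $\sum_i n_i=k$ and each $n_i\le\beta$. Multiplying the two and summing over $k$, the normalizer $\binom{\vartheta S}{k}$ cancels, leaving $\Pr[G]=\sum_{N}\varphi^{N}(1-\varphi)^{\vartheta S-N}\sum\prod_i\binom{\vartheta}{n_i}$; writing the inner constrained sum as the iterated sums $\sum_{n_1=0}^{\min\{N,\beta\}}\cdots\sum_{n_S=0}^{\min\{N-\sum_{j<S}n_j,\beta\}}$, whose upper limits simultaneously encode $n_i\le\beta$ and the running partial-sum bound, and then subtracting from $1$, produces exactly \eqref{equ:mshardfail}.

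The routine ingredients are the PBFT threshold and the binomial identities; the step requiring genuine care is the combinatorial bookkeeping --- verifying that after the hypergeometric normalizer cancels, what remains is precisely the product-of-binomials kernel $\prod_i\binom{\vartheta}{n_i}$ weighted by $\varphi^{N}(1-\varphi)^{\vartheta S-N}$, and that the nested index ranges in \eqref{equ:mshardfail} enumerate every admissible malicious-node profile and nothing else. As an independent check I would also derive $\Pr[G]$ directly from cross-shard independence, $\Pr[G]=\prod_{i=1}^{S}\Pr[n_i\le\beta]=\big(\sum_{n=0}^{\beta}\binom{\vartheta}{n}\varphi^{n}(1-\varphi)^{\vartheta-n}\big)^{S}$, and regroup its expansion by the total $N=\sum_i n_i$ to confirm agreement with the stated form.
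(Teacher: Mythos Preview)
Your proposal is correct and follows essentially the same route as the paper: condition on the total malicious count $N\sim\mathrm{Binomial}(\vartheta S,\varphi)$, use the (multivariate) hypergeometric law for the per-shard counts, observe that the normalizer $\binom{\vartheta S}{N}$ cancels against the binomial coefficient in $\Pr[N=k]$, and read off~\eqref{equ:mshardfail}. Your added sanity check via direct cross-shard independence, $\Pr[G]=\big(\sum_{n\le\beta}\binom{\vartheta}{n}\varphi^{n}(1-\varphi)^{\vartheta-n}\big)^{S}$, is a nice simplification the paper does not exploit, but your main line matches theirs.
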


\begin{proof}

 \hhw{\hw{We first introduce a random integer variable} $X \in \mathbb{N^{+}}$ that represents the number of malicious nodes in \hhw{all} M-shards. Because the nodes are randomly assigned to shards, $X$ follows the binomial distribution. Thus, we have}
    \begin{equation}
    P\left(X=N\right)=\begin{pmatrix}\vartheta \cdot S\\N\end{pmatrix}\cdot\begin{pmatrix}\delta \end{pmatrix}^{N}\begin{pmatrix}1-\delta \end{pmatrix}^{\vartheta \cdot S-N}.
    \label{equ:binomial}
    \end{equation}
\hhw{The probability that a number $n_{1} \in \mathbb{N}^+$ of malicious nodes are assigned to \code{shard 1} is calculated as follows:}
    \begin{equation}\label{eq:prxn}
    p\left(n_1\right)=\frac{\binom{N}{n_1}\binom{\vartheta \cdot S-N}{\vartheta-n_1}}{\binom{\vartheta \cdot S}\vartheta}.
    \end{equation}
\hhw{\hhw{Given Eq. \eqref{eq:prxn}, the conditional }probability that a number $n_{2} \in \mathbb{N}^+$ of malicious nodes are assigned to \code{shard 2} is written as follows:}
    \begin{equation}
    p\left(n_2\mid n_1\right)=\frac{\binom{N-n_1}{n_2}\binom{\vartheta \cdot S-\vartheta-(N-n_1)}{\vartheta-n_2}}{\binom{\vartheta \cdot S-\vartheta}\vartheta}.
    \end{equation}
\hhw{Similarly, the \hhw{conditional} probability that a number $n_{S} \in \mathbb{N}^+$ of malicious nodes are assigned to \code{shard S} is:}
    \begin{equation}
    \begin{aligned}
    &p\left(n_S\mid n_1,n_2,\cdots,n_{S-1}\right)\\ 
    &=\frac{\binom{N-\sum_{i=1}^{S-1}n_i}{n_S}\binom{\vartheta \cdot S-(S-1)\cdot \vartheta-\left(N-\sum_{i=1}^{S-1}n_i\right)}{\vartheta-n_S}}{\binom{\vartheta \cdot S-(S-1)\cdot \vartheta}\vartheta}.
     \end{aligned}
    \end{equation}
\hhw{Therefore, the joint distribution of \{$n_1, n_2 , \cdots, n_S$\} is}
    \begin{equation}
    \begin{aligned}
    &p\left(n_{1},n_{2},\cdots,n_{S}\right)=\\ &p\left(n_1\right)\times p\left(n_2\mid n_1\right)\times\cdots\times p\left(n_S\mid n_1,n_2,\cdots n_{S-1}\right).
    \end{aligned}
    \label{equ:joint2simplify}
    \end{equation}
\hhw{We can then simplify Eq. \eqref{equ:joint2simplify} by referring to \cite{hafid2020novel}:}
  \vspace{-1mm}
    \begin{equation}
    \begin{aligned}
p\left(n_{1},n_{2},\cdots,n_{S}\right)=\frac{\prod_{i=1}^{S}\binom{\vartheta}{n_{i}}}{\binom{\vartheta \cdot S}{N}}.
    \end{aligned}
    \label{equ:joint}
    \end{equation}

\hhw{In a PBFT-based blockchain sharding system, as long as the proportion of malicious nodes in an M-shard is not smaller than $1/3$, the system cannot guarantee the atomicity of all \hhw{intra-shard} transactions. Thus, we have $\beta=\left\lfloor\frac{\vartheta-1}{3}\right\rfloor$, and the probability that any M-shard works improperly can be represented by the \hhw{real number-valued variable} $\tau \in (0,1)$:}
    \begin{equation}
    \begin{aligned}
    \tau=1-p\left(n_{1}\leq\beta,\cdots,n_{S}\leq\beta\mid X=N\right).
    \end{aligned}
    \label{equ:mshardnodefailed}
    \end{equation}

\hhw{According to Theorem \ref{theorem2}, we get $\delta =\varphi$. By combining Eq. \eqref{equ:binomial}, Eq. \eqref{equ:joint} and Eq. \eqref{equ:mshardnodefailed}, we \hhw{have} the probability of M-shards failure as follows:}
  \vspace{-1mm}
    \begin{equation}
    \begin{aligned}
    & \overline{P} =\sum_{N=0}^{n \cdot S}P\left(X=N\right)\cdot \tau.
    \end{aligned}
    \label{equ:10}
    \end{equation}
\hhw{Finally, via simplifying Eq. \eqref{equ:10}, we get Eq. \eqref{equ:mshardfail}. \hhw{The proof of Theorem \ref{theorem3} concludes.}}

\end{proof}

\begin{remark}

\hhw{ \hw{Eq. \eqref{equ:mshardfail} helps us} estimate the theoretical upper bound of the probability of an M-shard's failure \hhw{by referring to} \cite{2018RapidChain}:}
    \begin{equation*}
    \begin{aligned}
    \overline{P}&\leq\sum_{i=1}^{S}p\left(n_{i}>\beta\right)\\ &\leq S\cdot\sum_{i=\beta+1}^{\vartheta}\binom{\vartheta}{i}\left(\varphi\right)^{i}\cdot\left(1-\varphi\right)^{\vartheta-i}.
    \end{aligned}
    \end{equation*}

\hhw{Fig.\ref{fig:workshardSizwWithPorb} shows that the failure probability of M-shards \hw{(i.e., $\overline{P}$)} changes with the total number of nodes in \hhw{all} M-shards \hw{(i.e., $\vartheta \cdot S$)} when the number of shards is fixed at $S$=16. When the computing power of malicious nodes is high, increasing the number of nodes \hw{helps improve} the security \hhw{level} of the system. When $\vartheta \cdot S$= 4000, Fig. \ref{fig:workShardMalicousNodeWithProb} shows that the failure probability of M-shards \hw{$\overline{P}$ increases following the enlarged $\varphi$.}
\hw{Here, the 4 curves indicated by different configurations of $S$ = \{16, 24, 32, 40\} demonstrate how $\overline{P}$ varies while the number of M-shards (i.e., $S$) grows.
}
\hhw{When given a fixed total} number of nodes, \hhw{a smaller number of} shards \hhw{leads to that} more nodes \hhw{will be assigned to each M-shard. This lowers the probability of M-shard's failure $\overline{P}$.}
\hhw{Similar to the case} with the P-shard, if the security \hhw{level parameter} of the M-shard is set to $\lambda$=18, i.e., $\overline{P} \leq 2^{-18}$, the system can tolerate \hhw{that} malicious nodes occupy \hhw{approximately} 20\% of the \hhw{overall} computing power when $S$=16.}
\end{remark}


\begin{figure*}[t]
\centering
 \subfigure[\zk{TX arrival rate=3200 TXs/Sec}]{
  \includegraphics[width=0.23\textwidth]{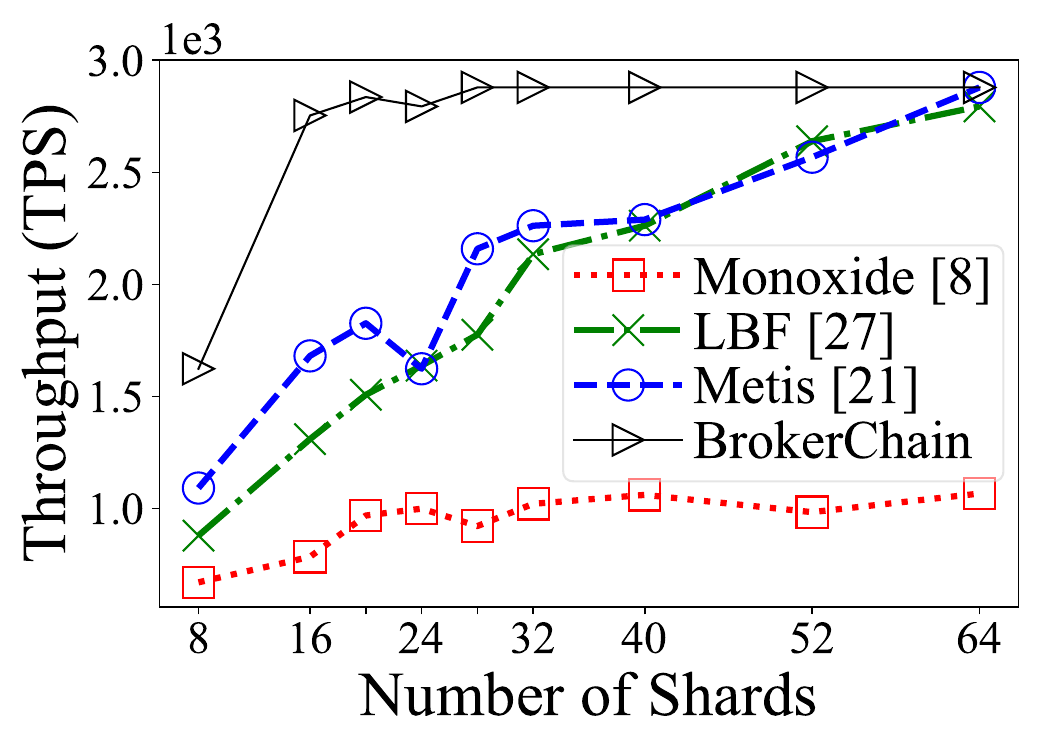}
  \label{fig:LineThrputvsSWithrate3200}
  }%
  \hfill
\subfigure[\zk{TX arrival rate=5000 TXs/Sec}]{
  \includegraphics[width=0.23\textwidth]{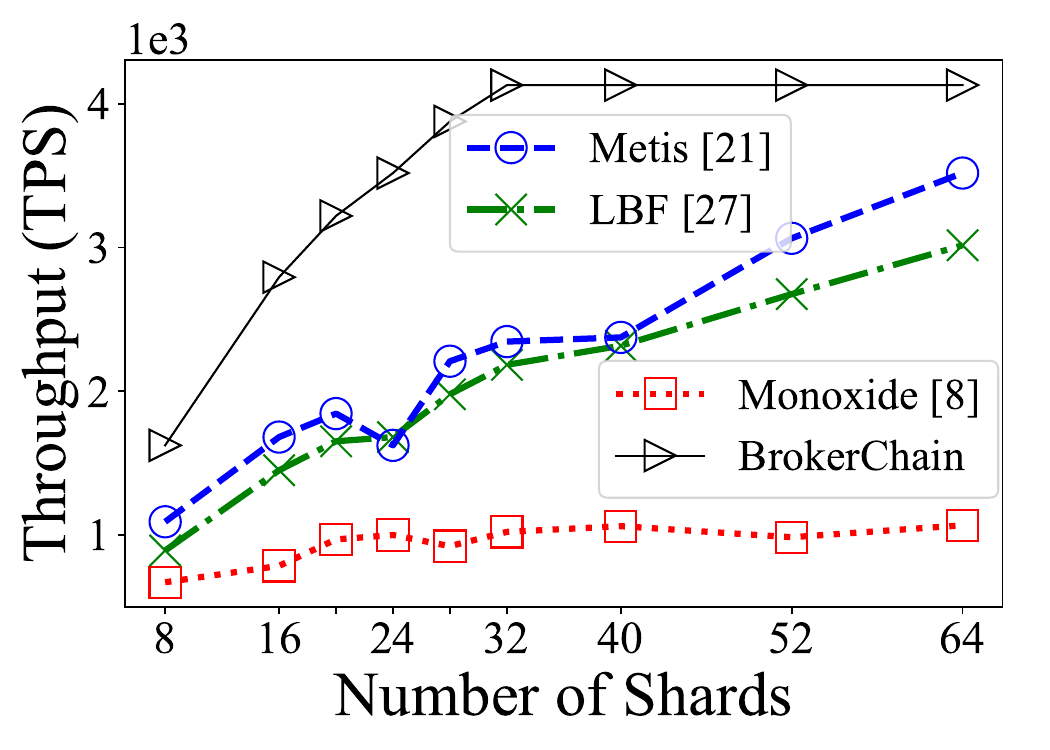}
  \label{fig:LineThrputvsSWithrate5000}
 }%
  \hfill
\subfigure[\zk{TX arrival rate=8000 TXs/Sec}]{
  \includegraphics[width=0.23\textwidth]{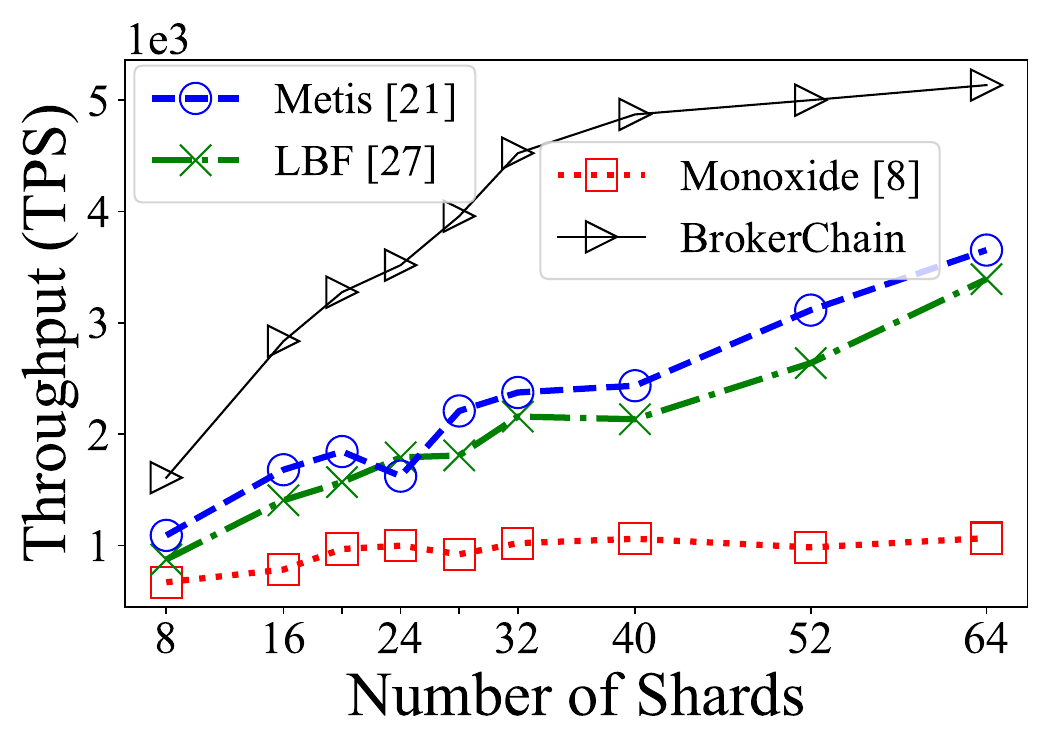}
  \label{fig:LineThrputvsSWithrate8000}
 }
  \hfill
\subfigure[\zk{TX arrival rate=16000 TXs/Sec}]{
\includegraphics[width=0.23\textwidth]{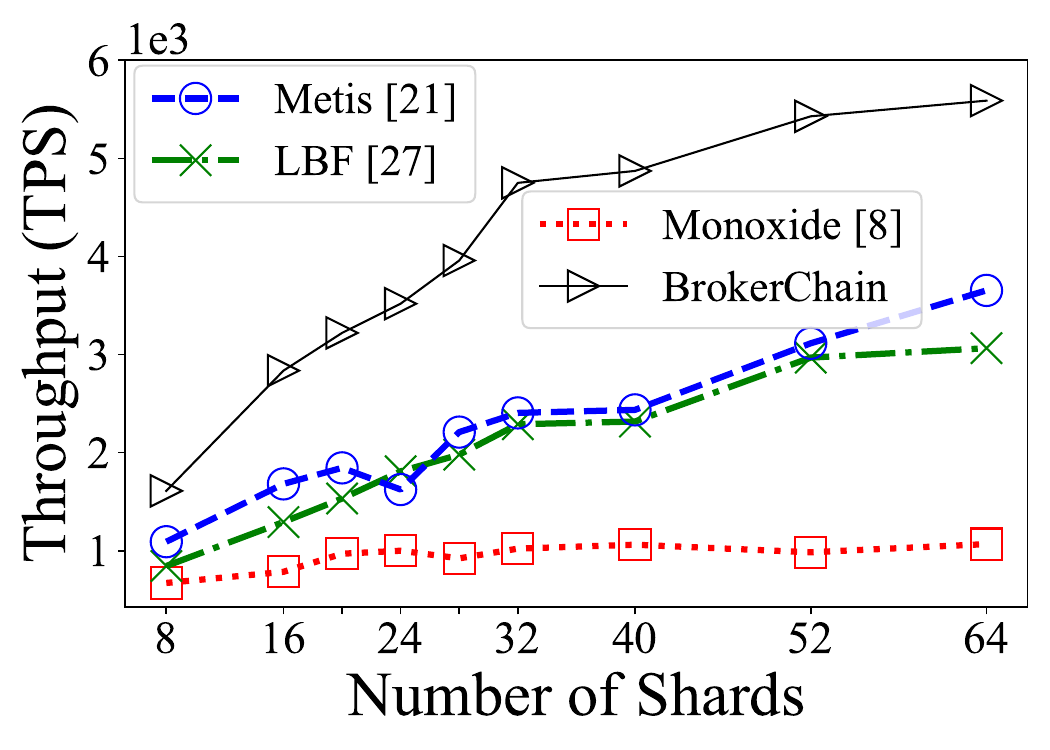}
\label{fig:LineThrputvsSWithrate16000}
 }
\caption{\zk{The throughput under different methods, while varying Tx arrival rate within $\{3200, 5000, 8000, 16000\}$ TXs/Sec and increasing $\#$ of shards. }}
\label{fig:LineThrputvsS}
\end{figure*}

\begin{figure*}[t]
\centering
 \subfigure[\zk{TX arrival rate=3200 TXs/Sec}]{
  \includegraphics[width=0.23\textwidth]{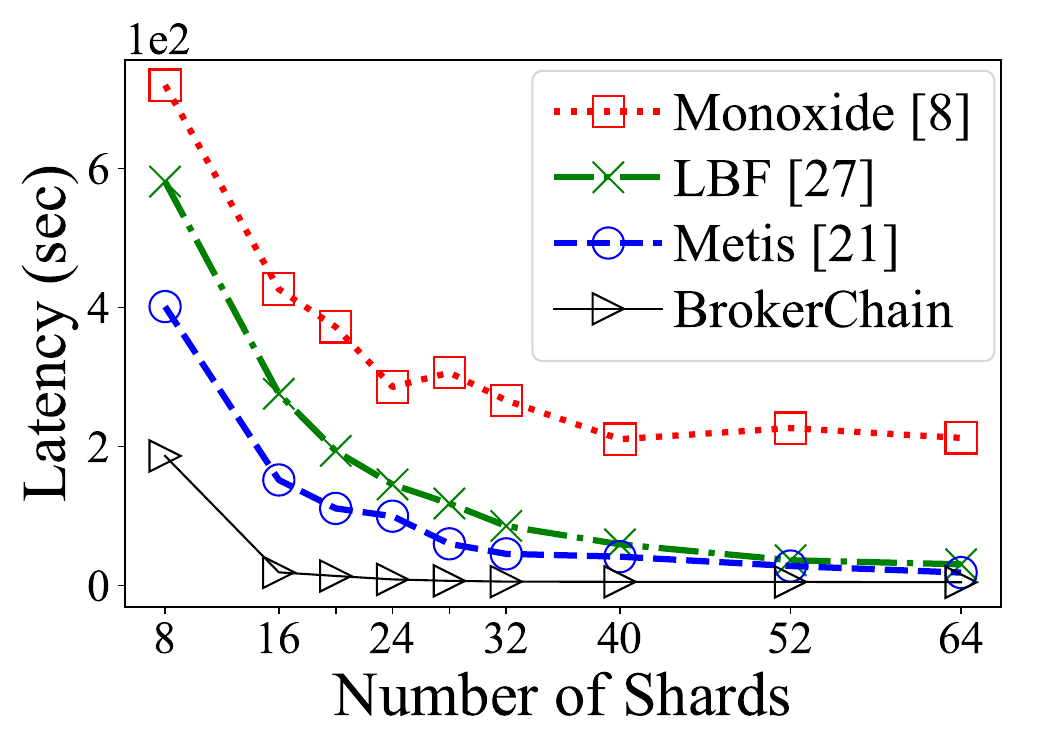}
  \label{fig:LineLatencyvsSWithrate3200}
  }%
  \hfill
\subfigure[\zk{TX arrival rate=5000 TXs/Sec}]{
  \includegraphics[width=0.23\textwidth]{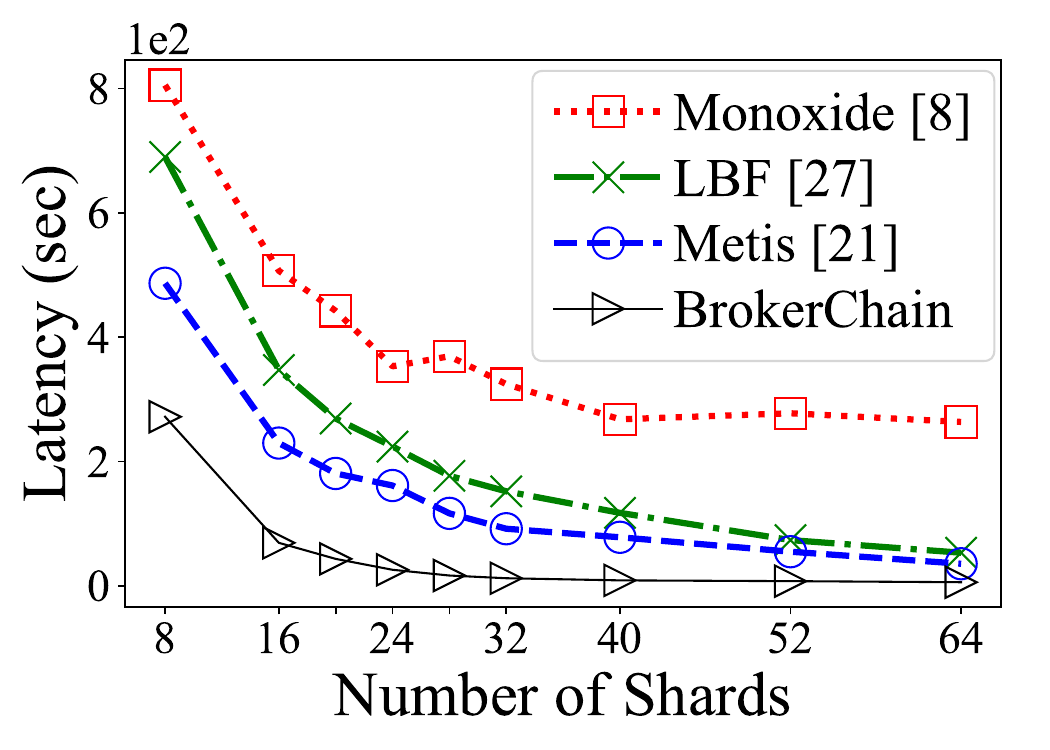}
  \label{fig:LineLatencyvsSWithrate5000}
 }%
  \hfill
\subfigure[\zk{TX arrival rate=8000 TXs/Sec}]{
  \includegraphics[width=0.23\textwidth]{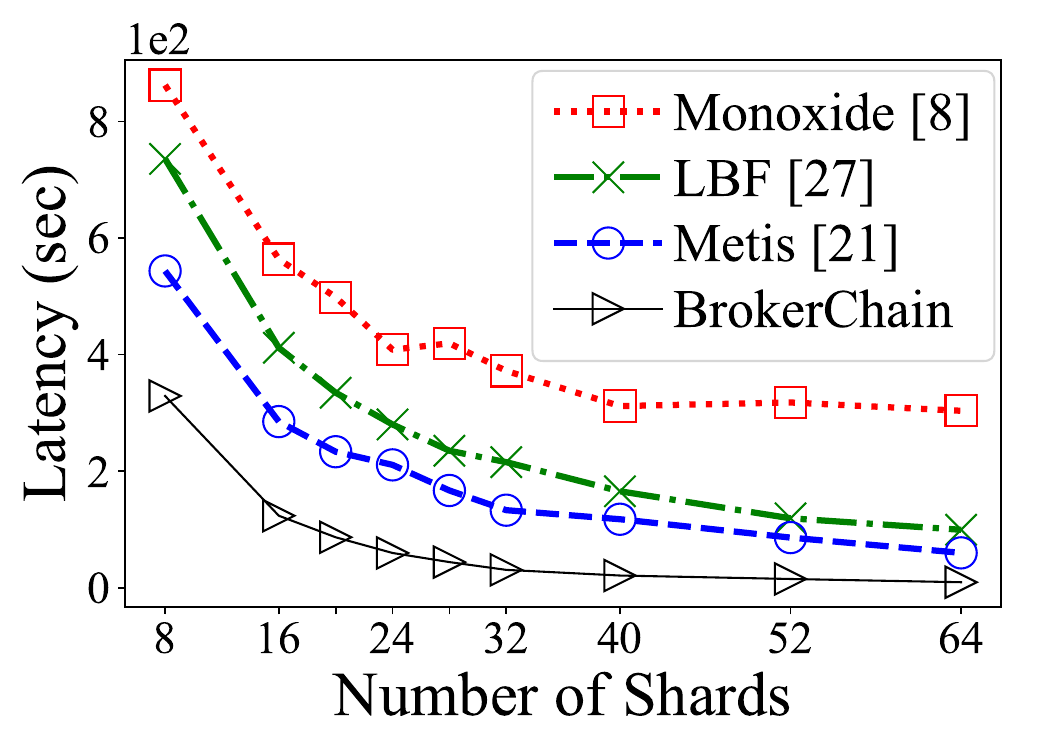}
  \label{fig:LineLatencyvsSWithrate8000}
 }
  \hfill
\subfigure[\zk{TX arrival rate=16000 TXs/Sec}]{
\includegraphics[width=0.23\textwidth]{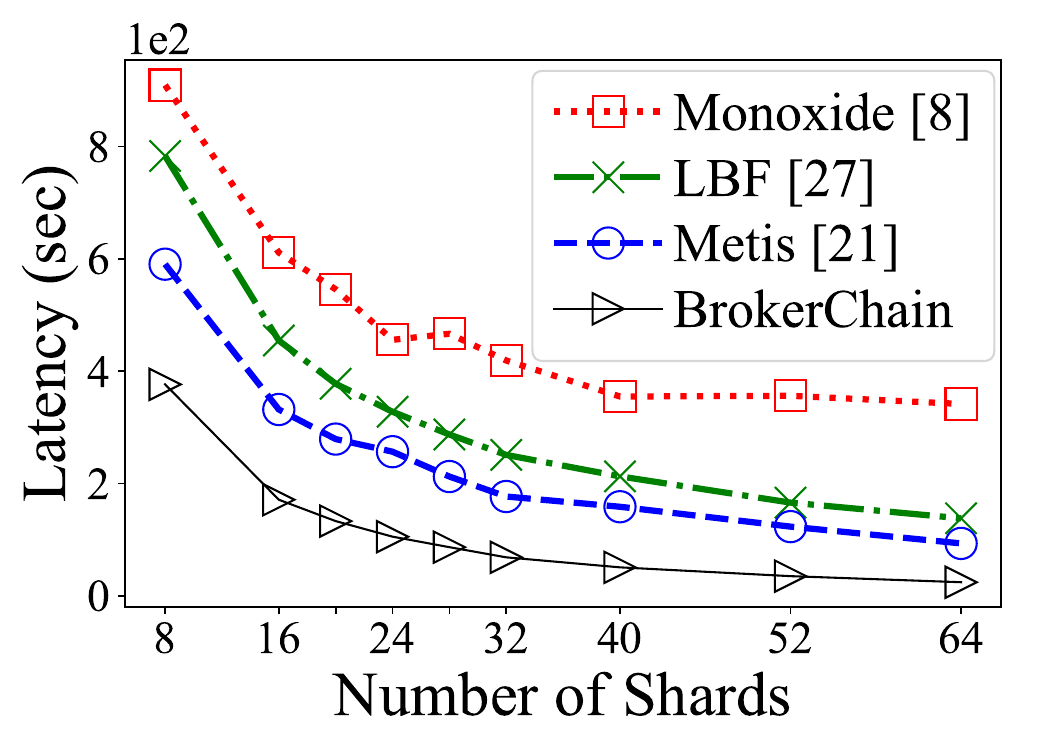}
\label{fig:LineLatencyvsSWithrate16000}
 }
\caption{\zk{The Latency under different methods, while varying Tx arrival rate within $\{3200, 5000, 8000, 16000\}$ TXs/Sec and increasing $\#$ of shards. }}
\label{fig:LineLatencyvsS}
\end{figure*}

\section{Performance Evaluation}\label{sec:performance}

\subsection{Settings}

\textbf{TX-driven Testbed BlockEmulator.} To evaluate the proposed BrokerChain, we first developed \hw{an open-sourced experimental testbed named \underline{BlockEmulator}\cite{huang2023blockemulator}, which enables the blockchain sharding mechanism. On top of BlockEmulator, we then implemented a TX-driven prototype of BrokerChain, which was implemented using Java and Python. BlockEmulator can be deployed in either local physical machines or cloud-based virtual machines such as Alibaba Cloud. Next, we ran our prototype system in BlockEmulator by replaying real-world historical TXs.}

\textbf{Dataset} and \textbf{its Usage.} We adopt real Ethereum TXs as the dataset, which contains 1.67 million historical TXs recorded from Aug. 7th, 2015 to Feb. 13, 2016. At the beginning of each epoch, a number  $N_{\text{TX}}$ of TXs are prepared in chronological order and replayed to the blockchain sharding system with predefined arrival rates. Those TXs are then assigned to different M-shards according to their account's state.

\textbf{Baselines.} We consider the following three baselines. 
Monoxide \cite{Wang2019Monoxide} distributes accounts according to the first few bits of their addresses. 
LBF \cite{greedy} updates the distribution of accounts periodically to achieve load-balanced TX distributions.
Metis  \cite{1995METIS} purely emphasizes the balanced partition on the account's state graph.

\textbf{Metrics.} 
We first study the effect of several critical system parameters considering various system metrics such as TX throughput, TX's confirmation latency, and the queue size of the TX pool. Regarding workload performance, we then measure the total and the variance of shard workloads.

{\color{black}

\begin{figure*}[t]
\centering
 \subfigure[\zk{TX arrival rate=16000 TXs/Sec}]{
  \includegraphics[width=0.22\textwidth]{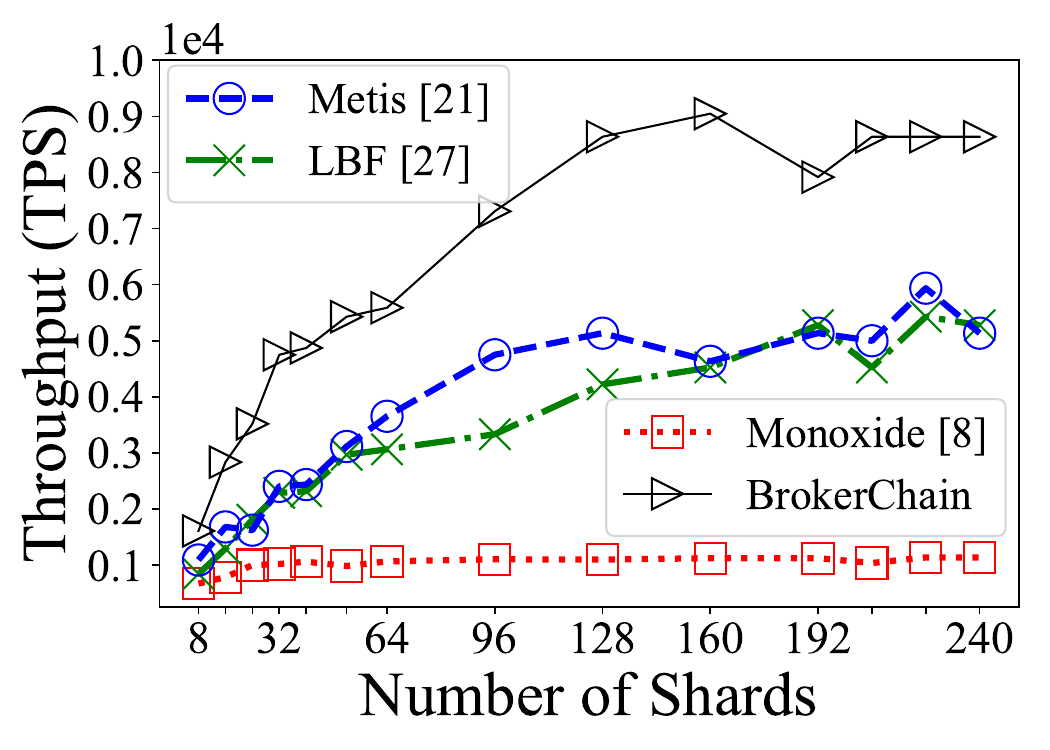}
   \label{fig:LineThrputvsStoMax}
  }%
  \hfill
\subfigure[\zk{TX arrival rate=16000 TXs/Sec}]{
  \includegraphics[width=0.22\textwidth]{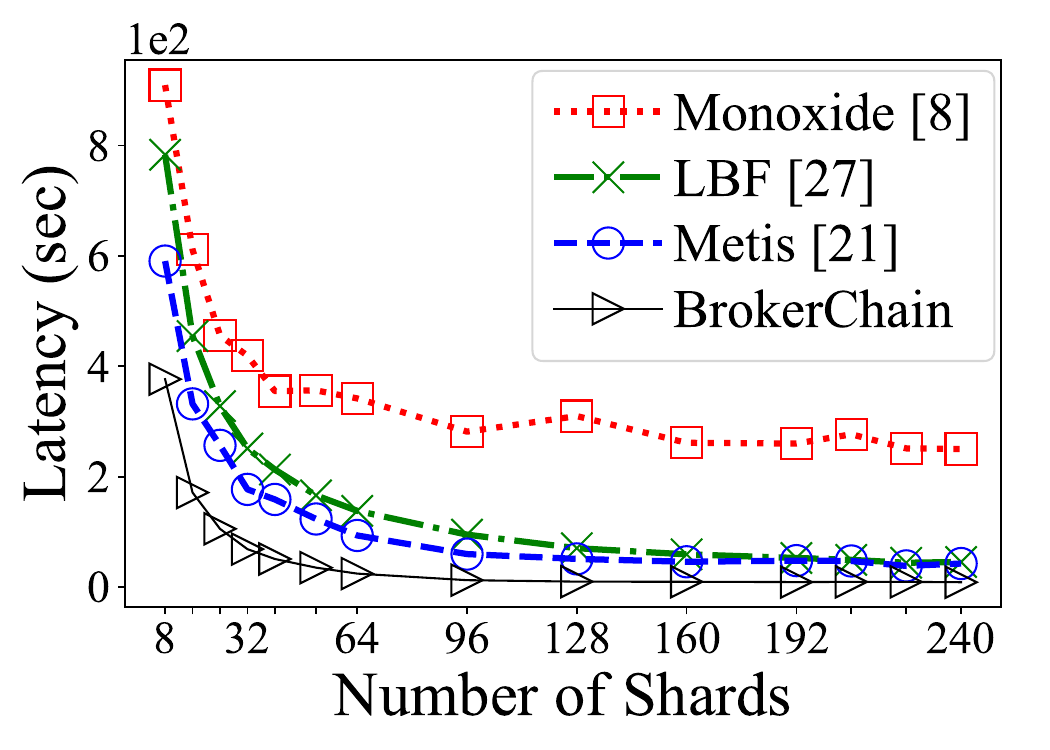}
   \label{fig:LineLatencyvsStoMax}
 }
   \hfill
\subfigure[Throughput, \zk{$S$=32}]{
  \includegraphics[width=0.22\textwidth]{./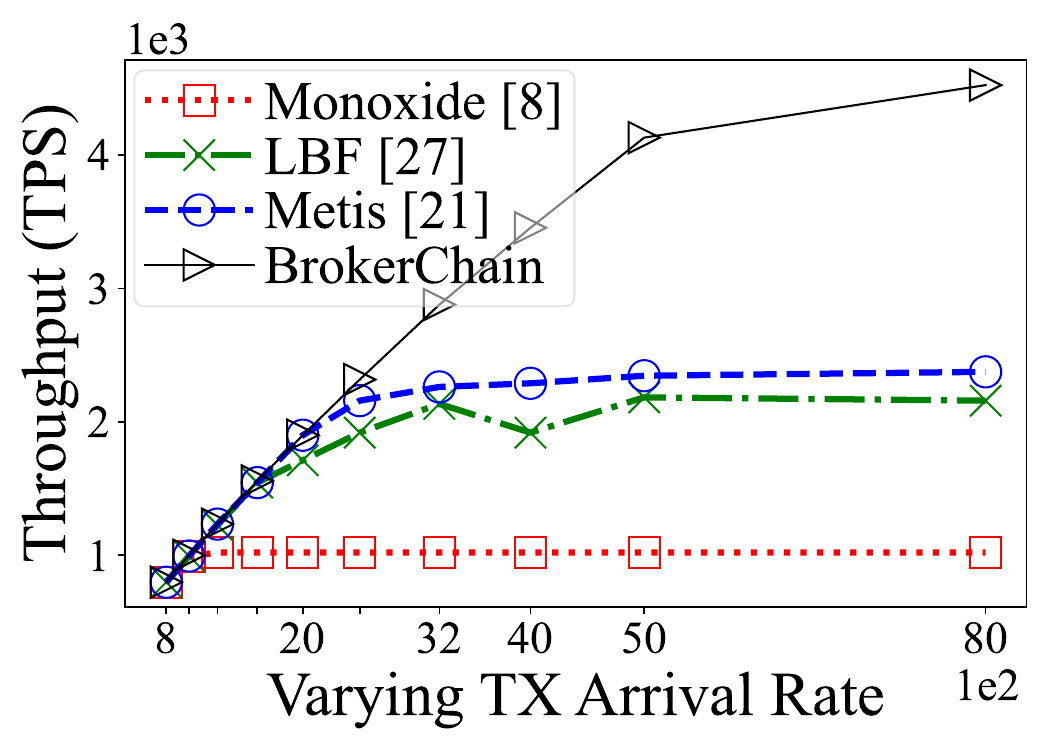}
   \label{fig:LineThrputvsTXRate}
 }
    \hfill
\subfigure[TX confirmation latency, \zk{$S$=32}]{
  \includegraphics[width=0.22\textwidth]{./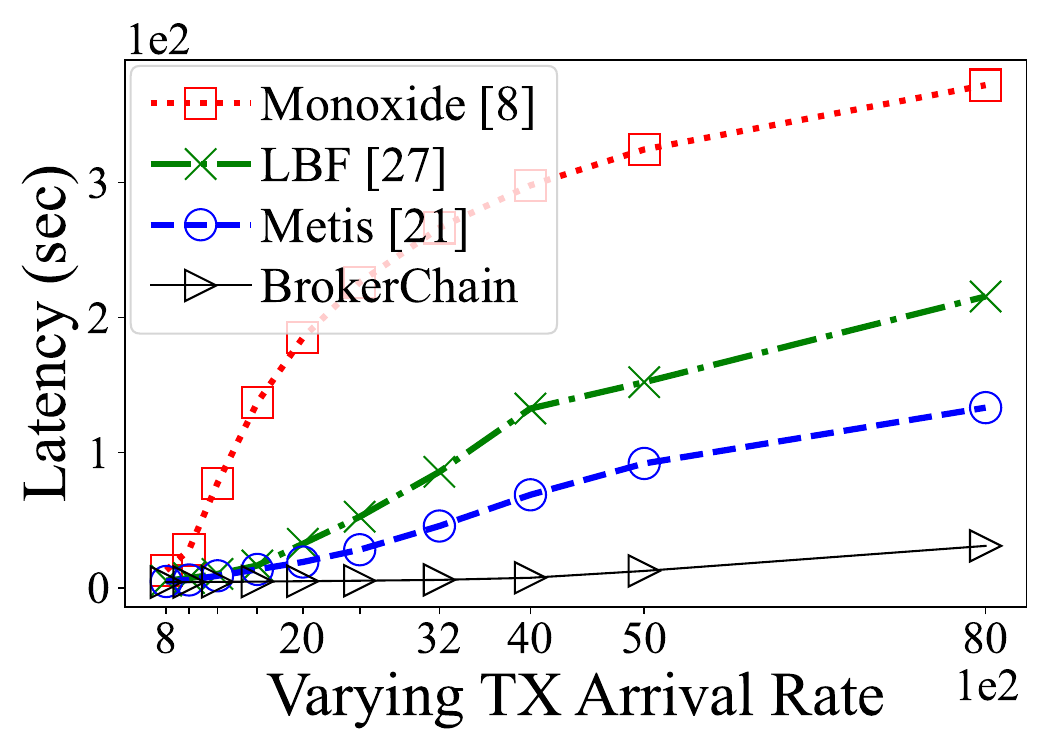}
   \label{fig:LineLatencyvsTXRate}
 }%
\caption{\zk{Throughput and latency \textit{vs} the \# of shards and TX arrival rates, while maximizing system performance to its full potential.}}
\label{fig:LineThrputLatency}
\end{figure*}


\subsection{System Throughput and TX Confirmation Latency}

\begin{table}[t]
\caption{Experimental Results Yielded by Cloud-based Prototype}
\centering
\renewcommand{\arraystretch}{1.4}
\begin{tabular}{llll}
\hline
 Methods/Algorithms   & Monoxide  & LBF  & BrokerChain \\
Avg. System throughput (TPS)         & 156.98  & 236.52  & 352.15\\
Avg. TX confirmation latency (sec)    & 1792.01   & 999.99  & 275.94\\

\hline
\end{tabular}
\label{tab:experimental}
\end{table}

First, we evaluate throughput and transaction confirmation latency using \hw{BlockEmulator deployed in the cloud}. \hw{Through renting 112 virtual machines from Alibaba Cloud, we deploy our prototype system in those virtual machines, which form 16 blockchain shards in total.} Each virtual machine is equipped with 1 CPU core (Intel Xeon, 2.5/3.2GHz) and 2GB memory. The bandwidth of all network connections between nodes is set to 5 Mbps. For each M-shard, the block interval and block capacity are set to 8 seconds and 500 TXs, respectively. The TX arrival rate is fixed to 500 TXs/Sec. \hw{The number of broker accounts $K$ is set to 40.} Since Metis does not correlate with TX's throughput and latency, we only compare the performance of BrokerChain with Monoxide and LBF.
The experimental results are shown in Table \ref{tab:experimental}. We see that the average throughput of BrokerChain achieves 2.24$\times$ and  1.49$\times$ of Monoxide's and LBF's TPS, respectively. Furthermore, BrokerChain also maintains an average TX confirmation latency of 275.94 seconds, which is much lower than that of the other two baselines.

Next, we perform TX-driven simulations \hw{in the BlockEmulator deployed in local physical machines,} by tuning more sophisticated system parameters.
The shard block capacity and block interval are updated to 2000 TXs and 8 seconds, respectively. \hw{$K$ is still set to 40.}
Through varying the TX arrival rate and the number of shards respectively, we still study the throughput and TX's confirmation latency under different sharding \hw{mechanisms}.
%
The simulation results are shown in Fig. \ref{fig:LineThrputvsS} and Fig. \ref{fig:LineLatencyvsS}, \hw{in which the Tx arrival rate is changed within $\{3200, 5000, 8000, 16000\}$ TXs/Sec to simulate increasing workloads, and the number of shards (i.e., $S$) is increased from 8 to 64.}

\hw{
 Overall, those results show that BrokerChain outperforms other baselines in terms of throughput and TX's confirmation latency.
 Fig. \ref{fig:LineThrputvsS} demonstrates that BrokerChain exhibits a quick growth of throughput when the number of shards increases, and it converges earlier than the other three baselines when fixing a TX arrival rate. For example, when $S=8$, all methods show a low throughput. When $S$ grows, the throughput of all methods improves drastically. This is because a larger number of shards brings higher transaction processing capability. 
 BrokerChain peaks its TPS at approximately 2800, 4100, and 5100 when $S$ is set to 16, 32, and 64, and the TX arrival rate is configured with 3200, 5000, and 8000 TXs/Sec, respectively.
 On the other hand, the performance gap between BrokerChain and other baselines widens following the growing TX arrival rates.
 The reasons behind the excellent throughput of BrokerChain are two-fold: i) BrokerChain has a better workload balance across shards, and ii) it can lead to a low ratio of CTX. These two advantages help BrokerChain have a better performance that is close to the linearly increasing scalability than other baselines.
 We disclose more insights in the subsequent evaluations.
 }
%
%

\hw{Fig. \ref{fig:LineLatencyvsS} shows the average TX confirmation latency under the same parameters as that of Fig. \ref{fig:LineThrputvsS}. The latency performance of all methods under each parameter setting illustrates an opposite observation compared with Fig. \ref{fig:LineThrputvsS}. The reasons behind those observations are also the same. Thus, we omit the description of Fig. \ref{fig:LineLatencyvsS}.
}

%
%
%

\hw{
 In the previous group of evaluation, we found that the throughput of BrokerChain and the other two baselines Metis and LBF did not converge when the number of shards reached 64 and the TX arrival rate exceeded 8000 TXs/Sec. This motivates us to explore the upper-boundary performance of those methods. Thereby, we increase the number of shards while fixing the TX arrival rate at 16000 TXs/Sec. 
 Fig. \ref{fig:LineThrputvsStoMax} and Fig. \ref{fig:LineLatencyvsStoMax} show the pressure test of throughput and latency, respectively.
 Note that, almost all transactions are the CTXs over all shards when $S$ reaches 128. This fact imposes a huge challenge of handling CTXs to all methods. Due to the proposed broker-based mechanism, BrokerChain exhibits excellent performance in both throughput and TX confirmation latency compared with the other three baselines. For example, BrokerChain can peak at 9048 TPS and show a low latency while increasing the number of shards.
 Compared with other baselines, the TPS of BrokerChain is almost 10$\times$ and 2$\times$ of that of Monoxide and the other two methods, respectively.
}
%
%
%
%
%
%

Fig. \ref{fig:LineThrputvsTXRate} and Fig. \ref{fig:LineLatencyvsTXRate} show the maximum performance of throughput and latency \textit{vs} varying TX arrival rates, while fixing $S$=32. 
BrokerChain shows an overwhelming low latency compared with the other three baselines.
\hw{When the TX arrival rate reaches 8000 TXs/sec, BrokerChain hits more than twice the throughput of Metis. This observation proves again that BrokerChain can guarantee low TX confirmation latency and high TPS, particularly under a high TX arrival rate.
}

\begin{figure*}[t]
\centering
 \subfigure[\zk{TX arrival rate=3200 TXs/Sec, S=32}]{
  \includegraphics[width=0.22\textwidth]{./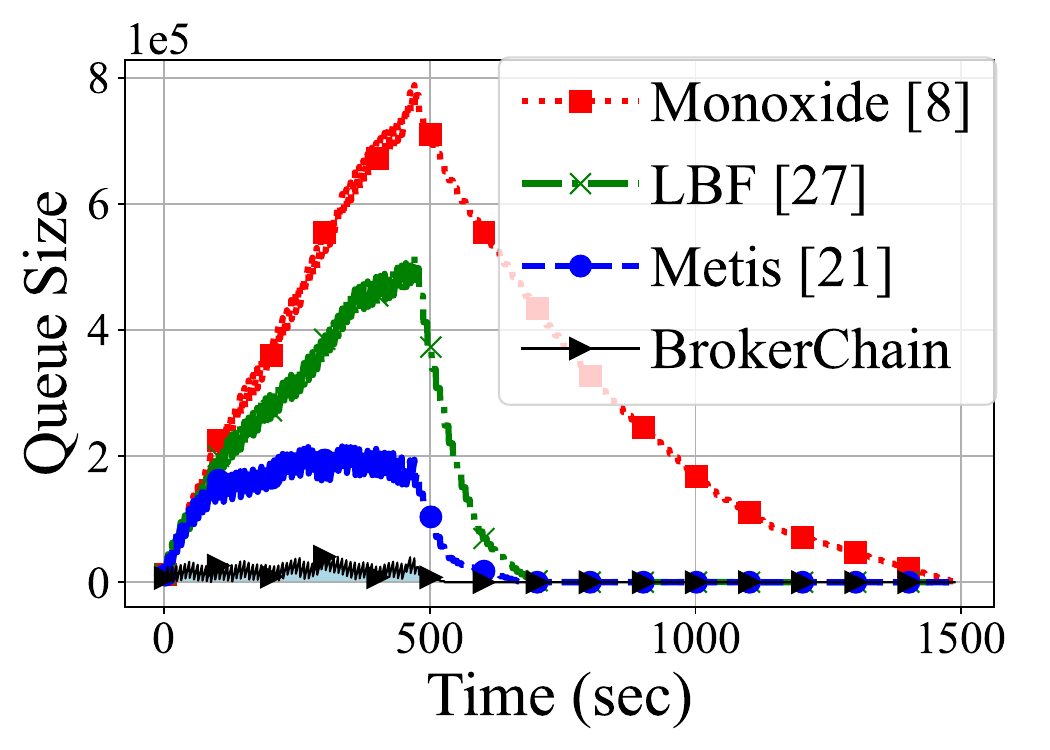}
   \label{fig:queuesizeCase1}
  }%
  \hfill
\subfigure[\zk{TX arrival rate=5000 TXs/Sec, S=32}]{
  \includegraphics[width=0.22\textwidth]{./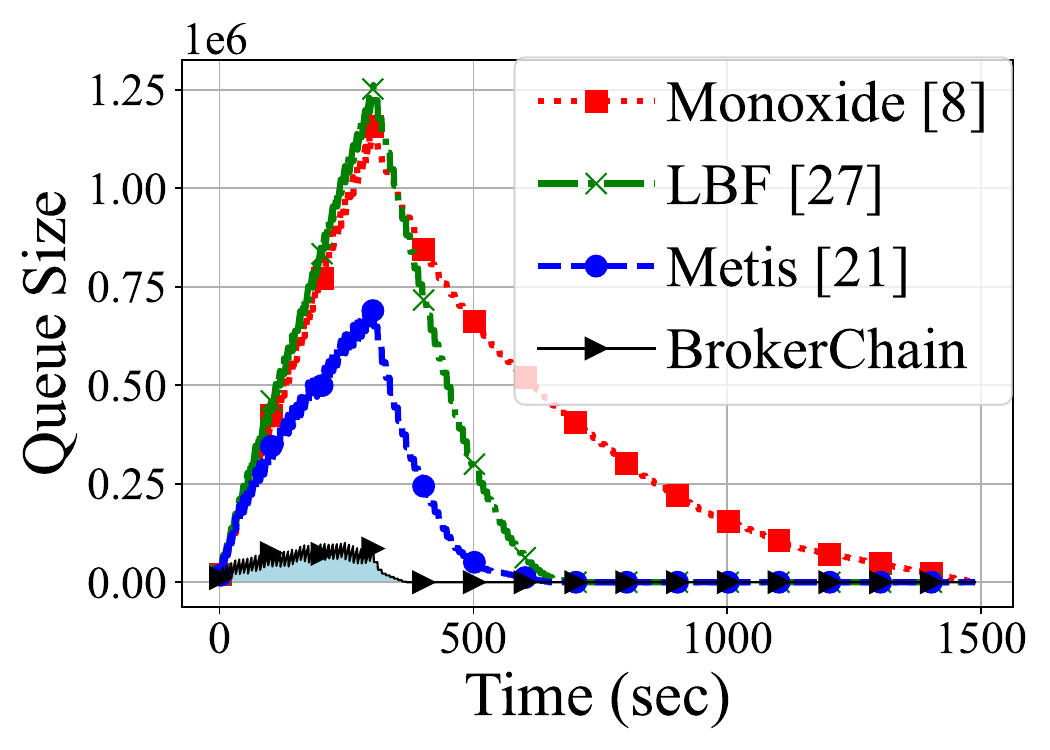}
   \label{fig:queuesizeCase2}
 }%
  \hfill
\subfigure[\zk{TX arrival rate=8000 TXs/Sec, S=32}]{
  \includegraphics[width=0.22\textwidth]{./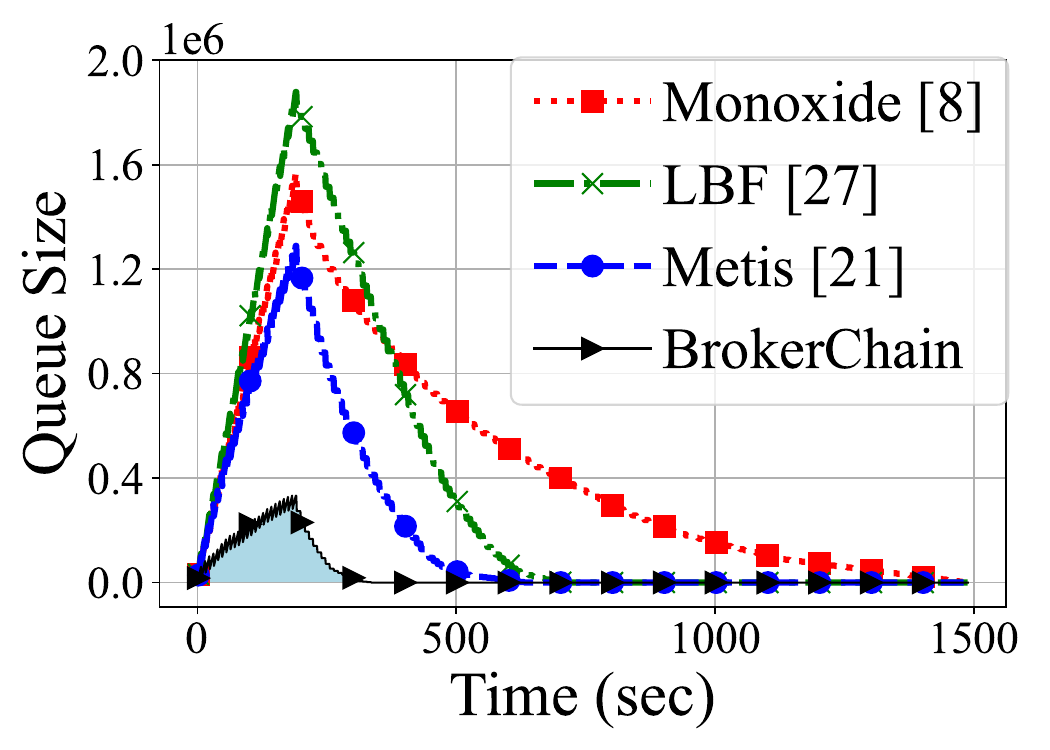}
   \label{fig:queuesizeCase3}
 }
   \hfill
\subfigure[\zk{TX arrival rate=16000 TXs/Sec, S=32}]{
  \includegraphics[width=0.22\textwidth]{./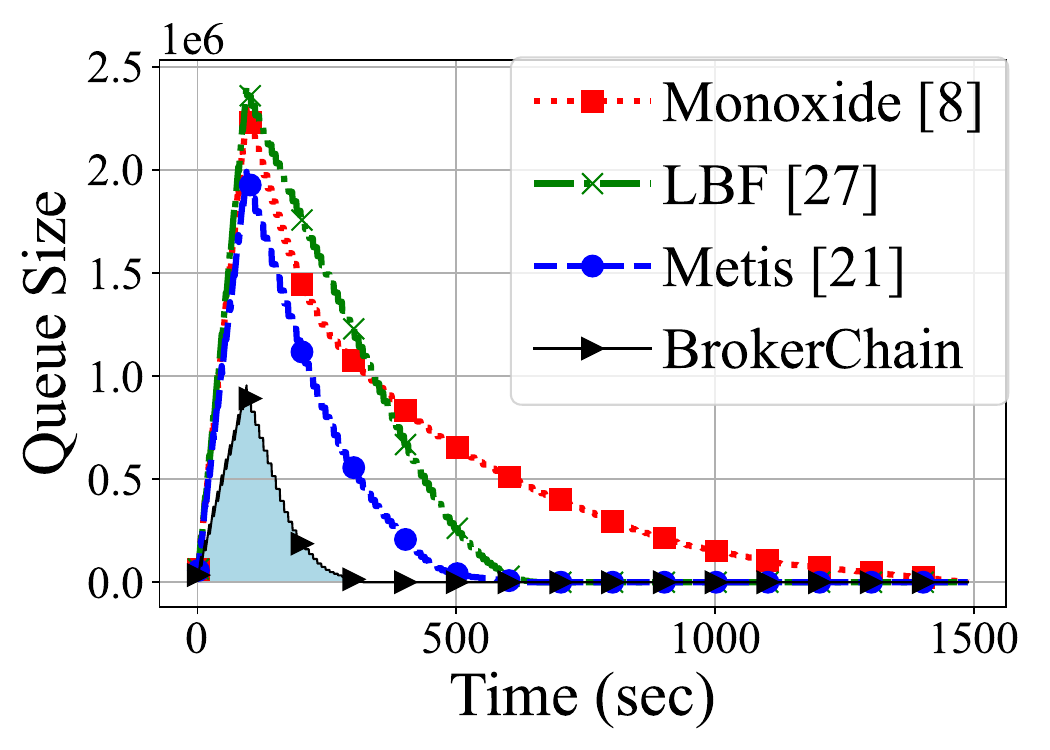}
   \label{fig:queuesizeCase4}
 }
 \subfigure[\zk{TX arrival rate=3200 TXs/Sec, S=64}]{
  \includegraphics[width=0.22\textwidth]{./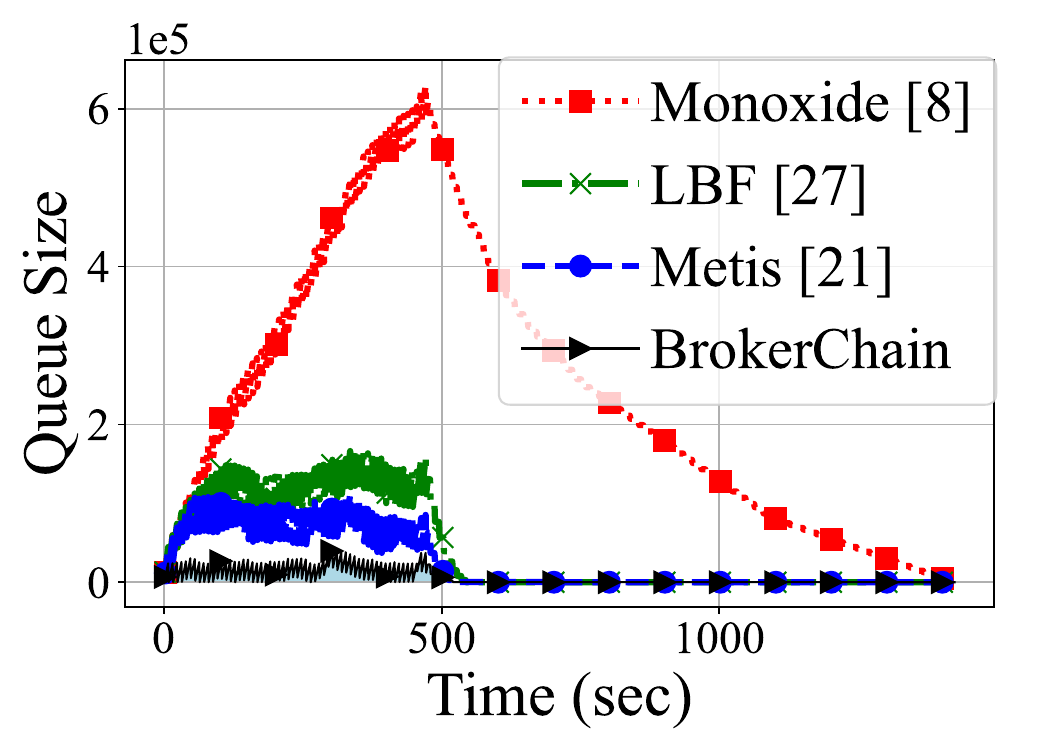}
   \label{fig:queuesizeCase5}
  }%
  \hfill
\subfigure[\zk{TX arrival rate=5000 TXs/Sec, S=64}]{
  \includegraphics[width=0.22\textwidth]{./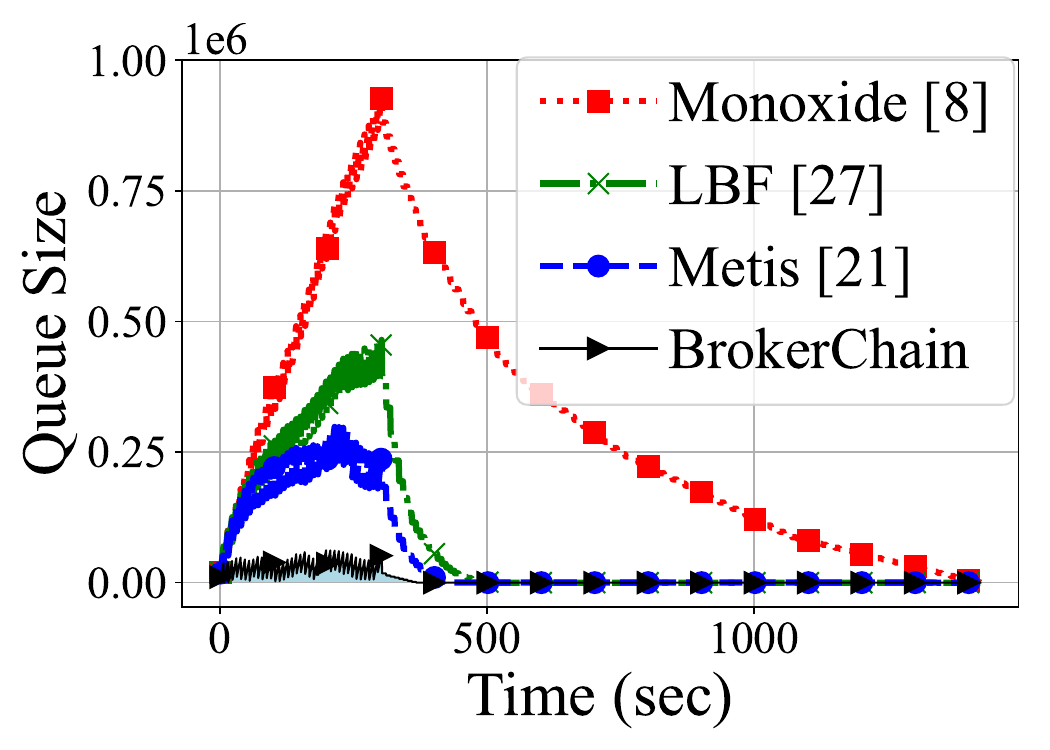}
   \label{fig:queuesizeCase6}
 }%
  \hfill
\subfigure[\zk{TX arrival rate=8000 TXs/Sec, S=64}]{
  \includegraphics[width=0.22\textwidth]{./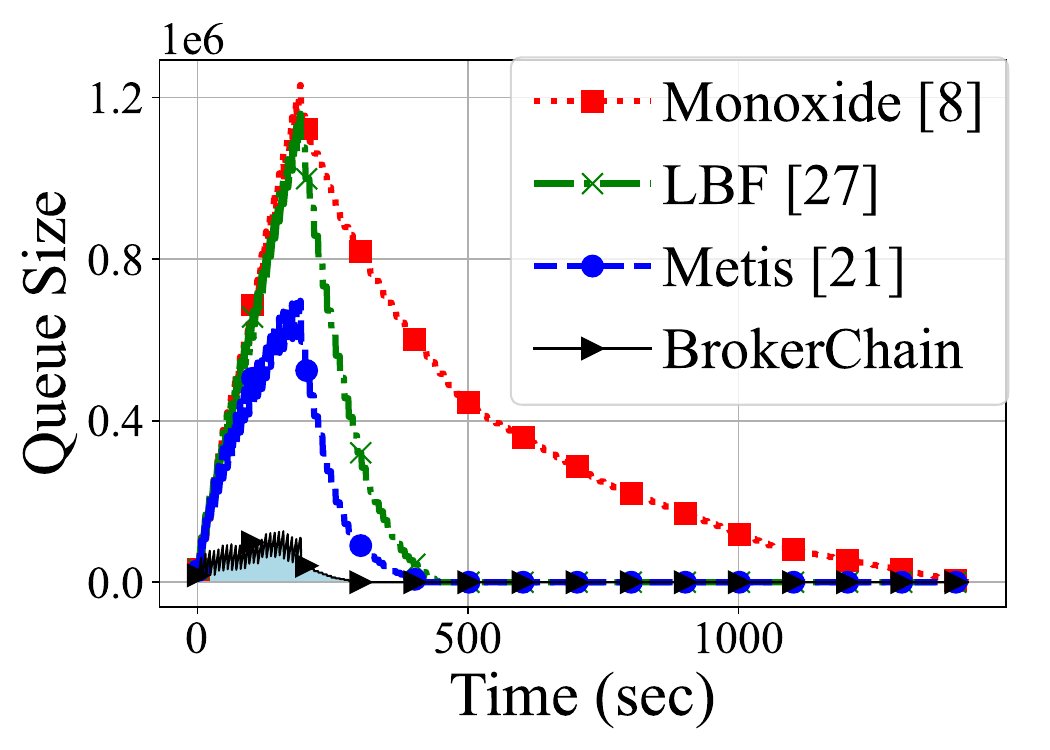}
   \label{fig:queuesizeCase7}
 }
   \hfill
\subfigure[\zk{TX arrival rate=16000 TXs/Sec, S=64}]{
  \includegraphics[width=0.22\textwidth]{./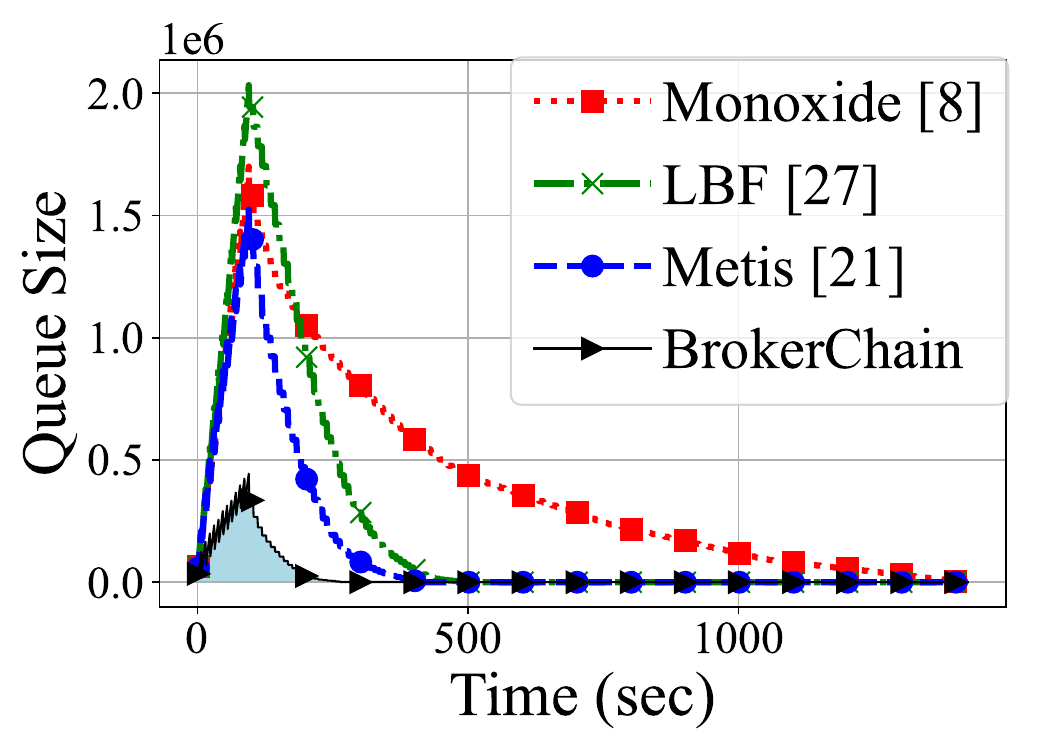}
   \label{fig:queuesizeCase8}
 }
\caption{\hw{Queue size of the TX pool while changing the arrival rate of transactions}.}
\label{fig:queuesize}
\end{figure*}

\begin{figure}[t]
\centering
\includegraphics[width=0.8\columnwidth]{./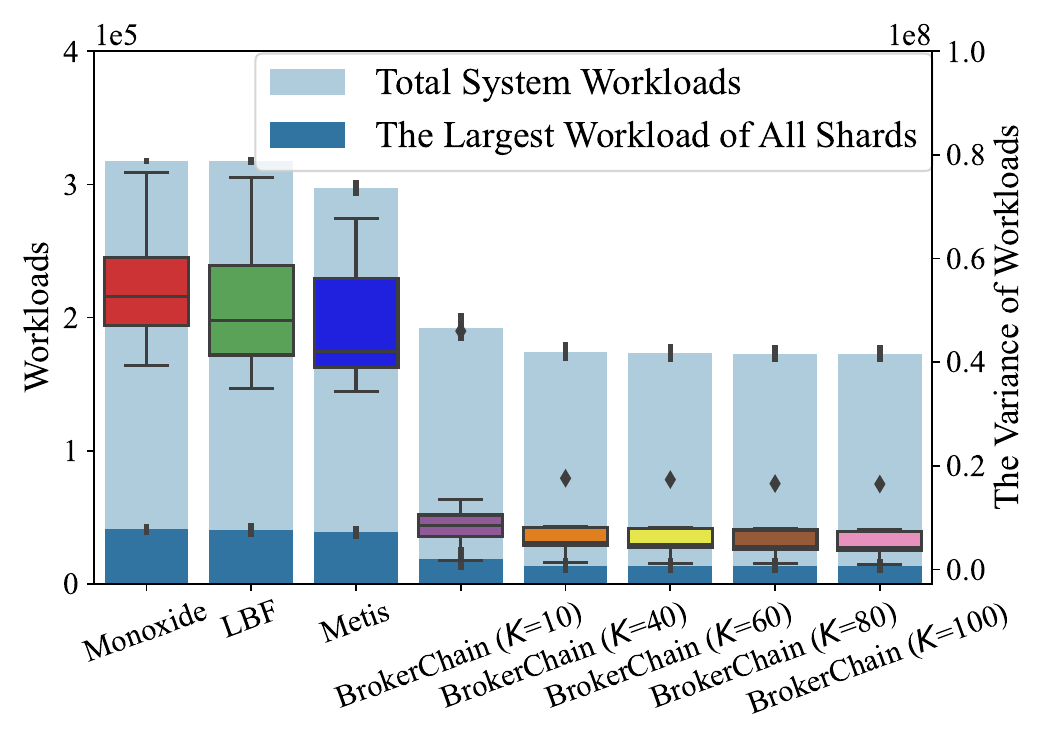}
    \caption{Effect of the number of segmented accounts (i.e., \hw{the number of broker accounts} $K$).}
\label{fig:twinS16DiffK}
\end{figure}

\begin{figure*}[t]
\centering
 \subfigure[\zk{CDF of total system workloads}]{
  \includegraphics[width=0.23\textwidth]{./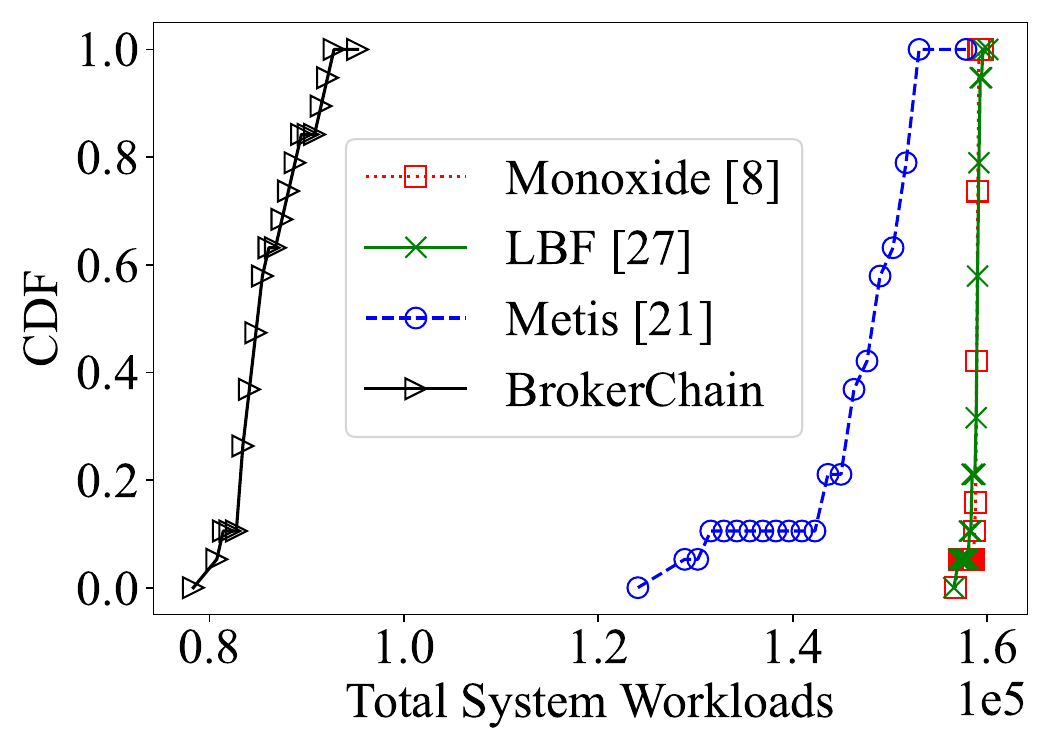}
  \label{fig:cdftoal}
  }%
  \hfill
\subfigure[\zk{CDF of the variance of workloads}]{
  \includegraphics[width=0.23\textwidth]{./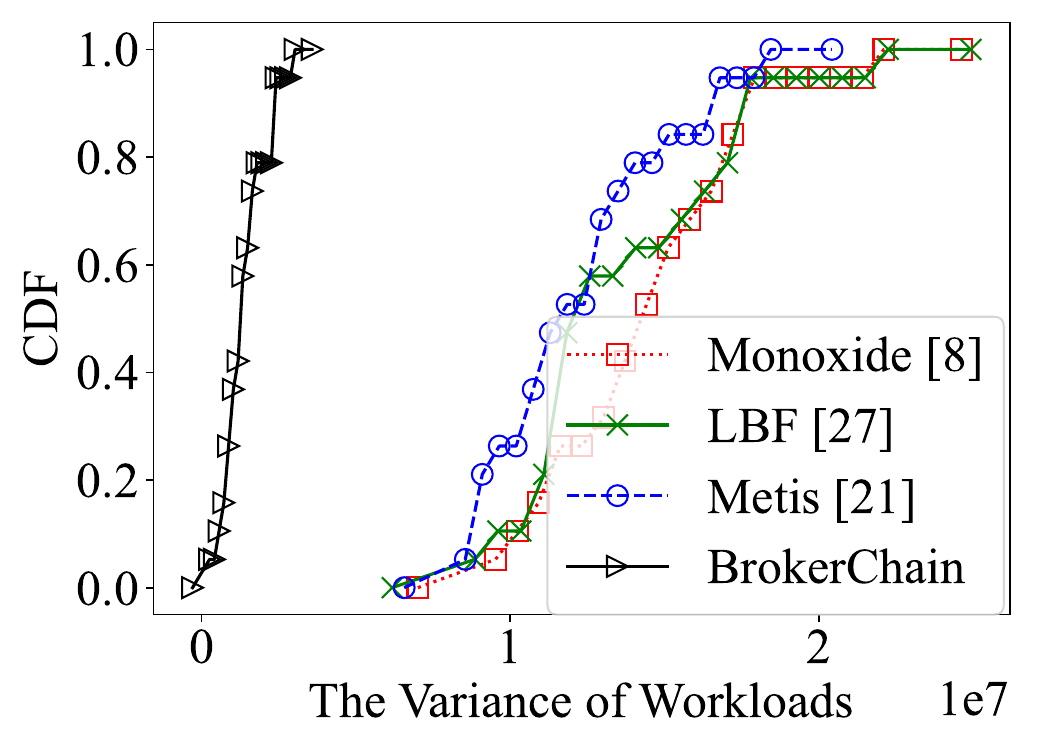}
  \label{fig:cdfvar}
 }%
  \hfill
\subfigure[\zk{CDF of the largest workload}]{
  \includegraphics[width=0.23\textwidth]{./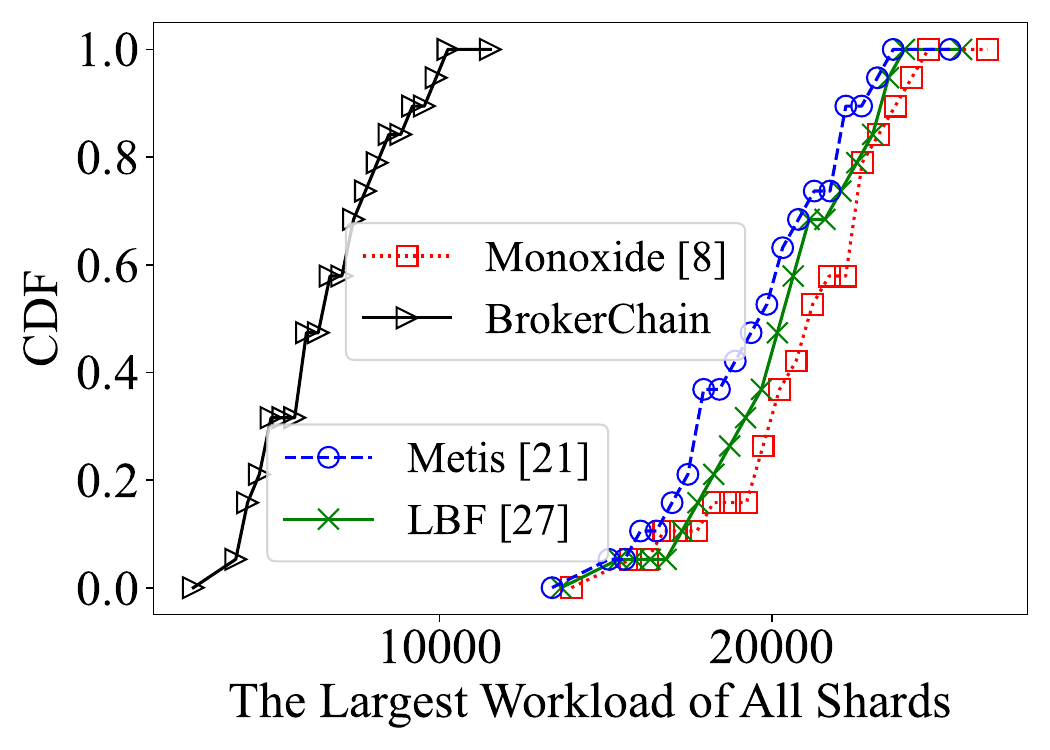}
  \label{fig:cdfmax}
 }
  \hfill
\subfigure[\zk{Cross-shard TX ratio}]{
\includegraphics[width=0.23\textwidth]{./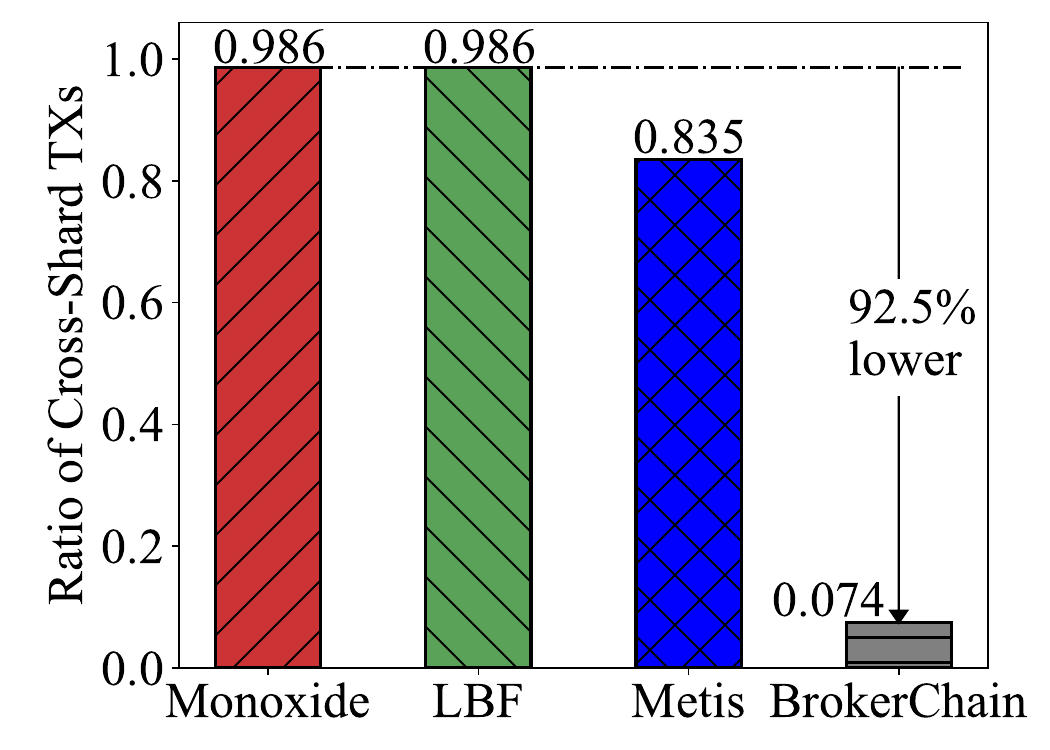}
\label{fig:barS64}
 }
\caption{\zk{The workload performance under different methods, while fixing $S$=64, $N_\text{TX}$=8e4 and $K$=40. CDF stands for cumulative distribution function.}}
\label{fig:FixCompare}
\end{figure*}

\begin{figure*}[t]
\centering
 \subfigure[\zk{Total system workloads \textit{vs} $S$}]{
  \includegraphics[width=0.23\textwidth]{./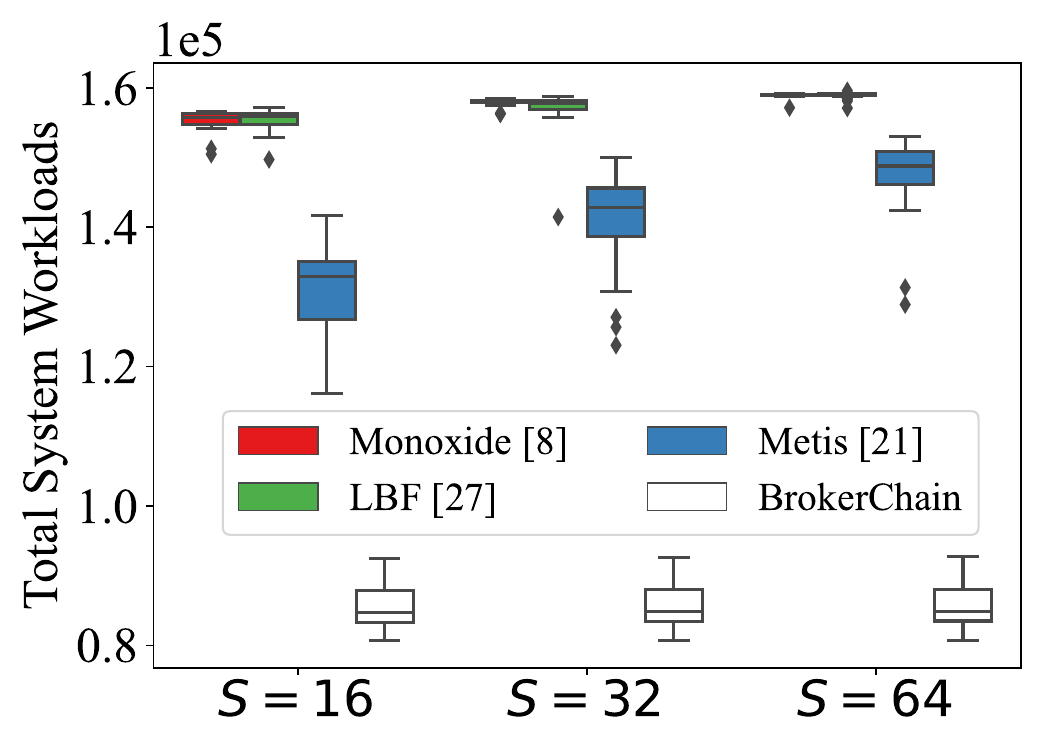}
   \label{fig:TotalBOXDiffS}
  }%
  \hfill
\subfigure[\zk{The variance of workloads \textit{vs} $S$}]{
  \includegraphics[width=0.23\textwidth]{./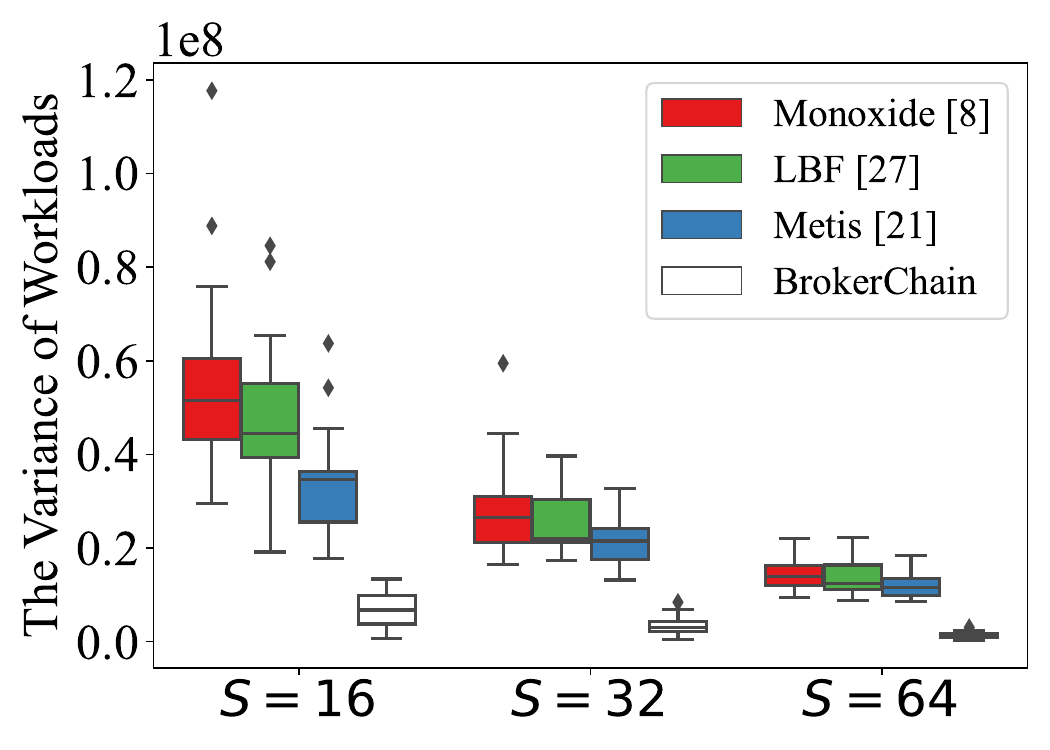}
   \label{fig:VARBOXDiffS}
 }%
  \hfill
\subfigure[\zk{The largest workload \textit{vs} $S$}]{
  \includegraphics[width=0.23\textwidth]{./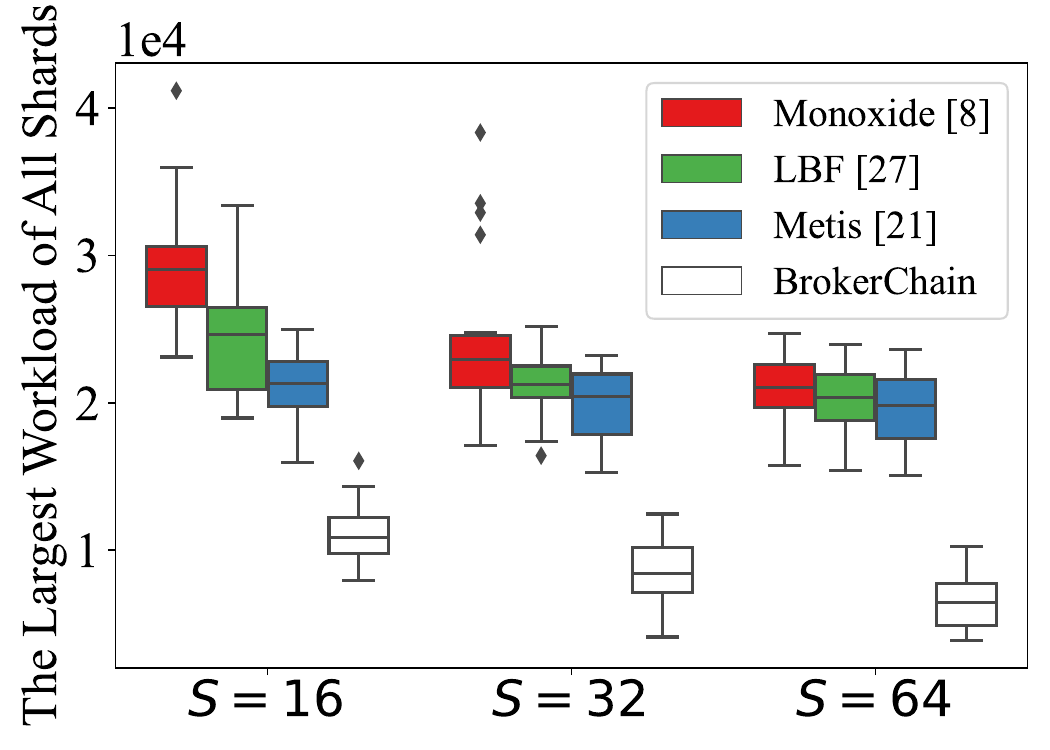}
   \label{fig:MaxBOXDiffS}
 }
  \hfill
\subfigure[\zk{Cross-shard TX ratio, $S$=32}]{
\includegraphics[width=0.23\textwidth]{./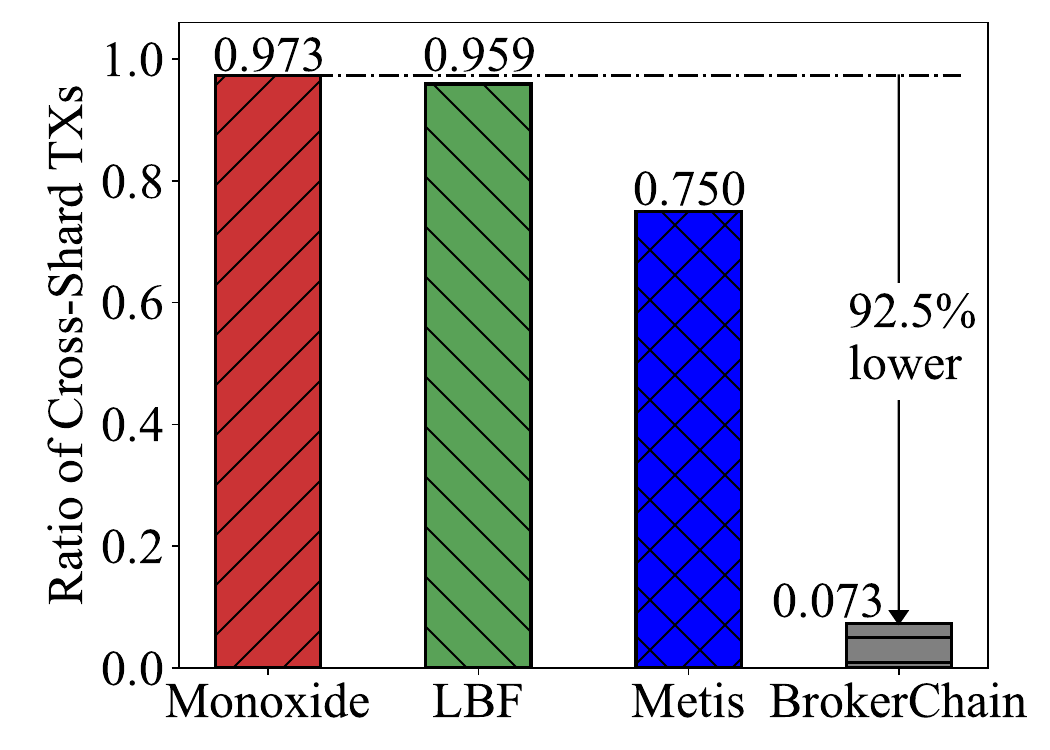}
\label{fig:barS32}
 }
\vspace{-2mm}
\caption{\zk{The effect of shard \# on the total, the largest and the variance of workloads, while varying $S$ within $\{16, 32, 64\}$, and fixing $N_\text{TX}$=8e4, $K$=40.}}
       
\label{fig:diffShard}
\end{figure*}

\begin{figure*}[t]
\centering
 \subfigure[\zk{Total system workloads \textit{vs} $N_\text{TX}$}]{
  \includegraphics[width=0.23\textwidth]{./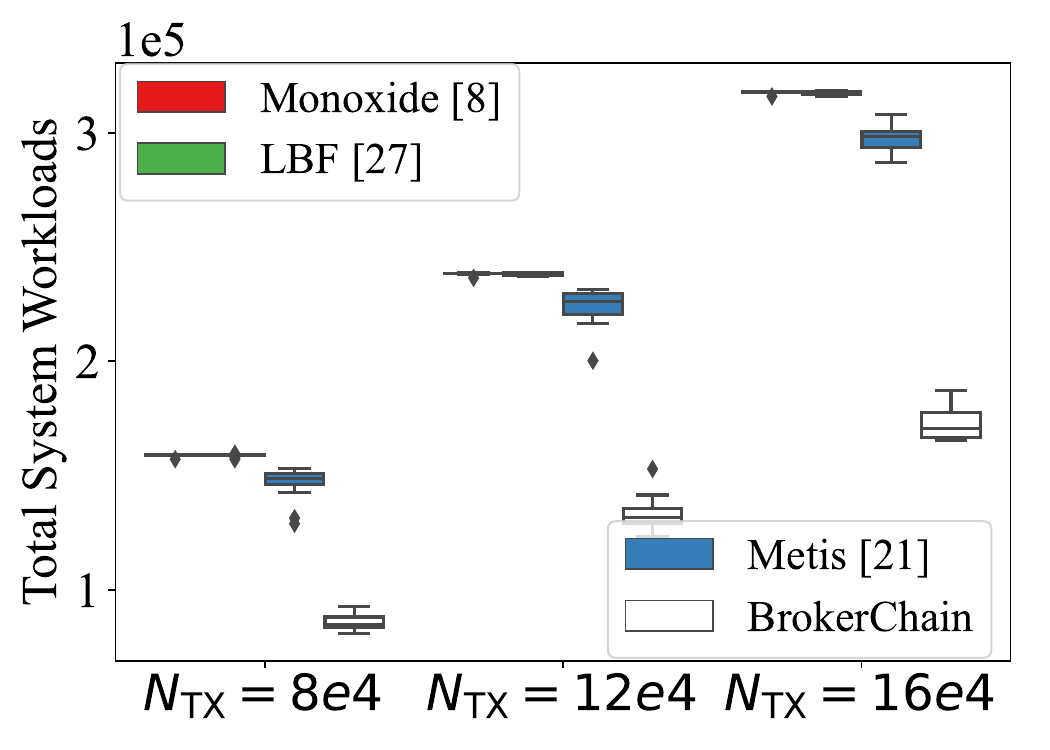}
   \label{fig:TotalBOXDiffNTX}
  }%
  \hfill
\subfigure[\zk{The variance of workloads \textit{vs} $N_\text{TX}$}]{
  \includegraphics[width=0.23\textwidth]{./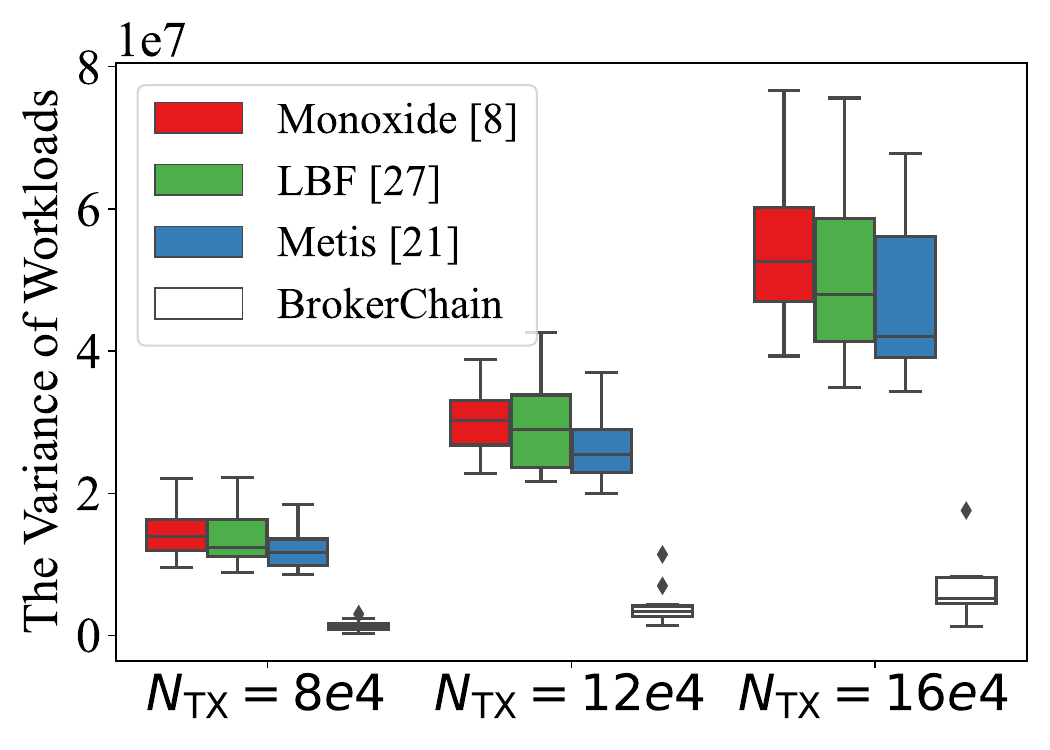}
   \label{fig:VARBOXDiffNTX}
 }%
  \hfill
\subfigure[\zk{The largest workload  \textit{vs} $N_\text{TX}$}]{
  \includegraphics[width=0.23\textwidth]{./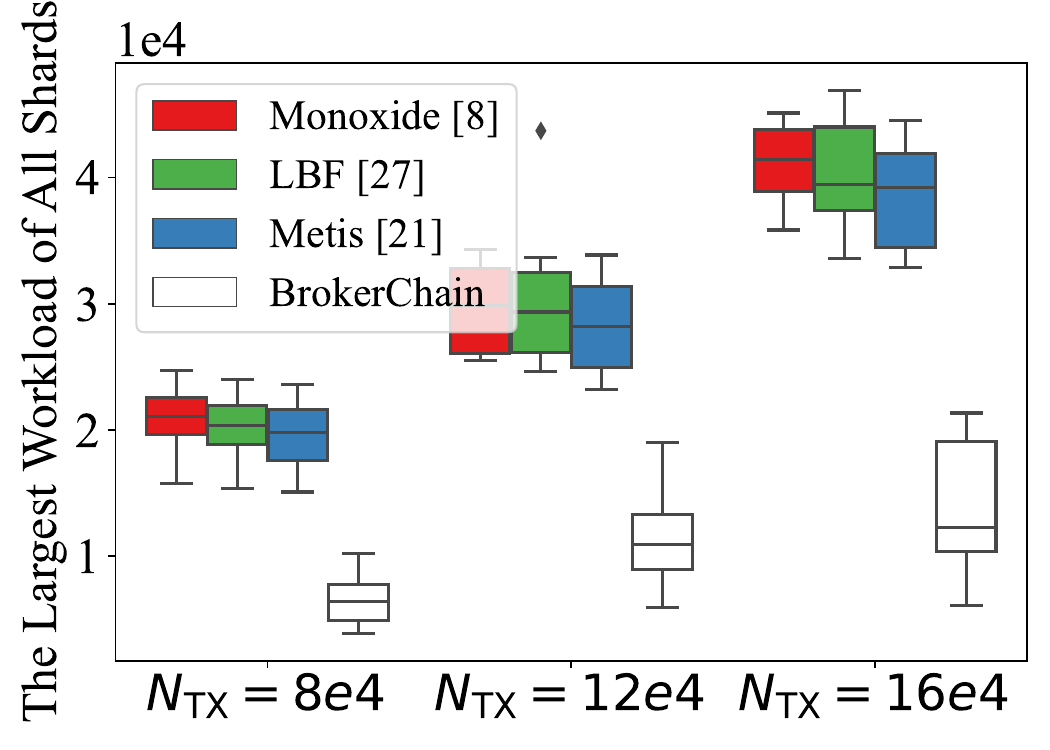}
   \label{fig:MaxBOXDiffNTX}
 }
  \hfill
\subfigure[\zk{Cross-shard TX ratio, $N_\text{TX}$=12e4}]{
\includegraphics[width=0.23\textwidth]{./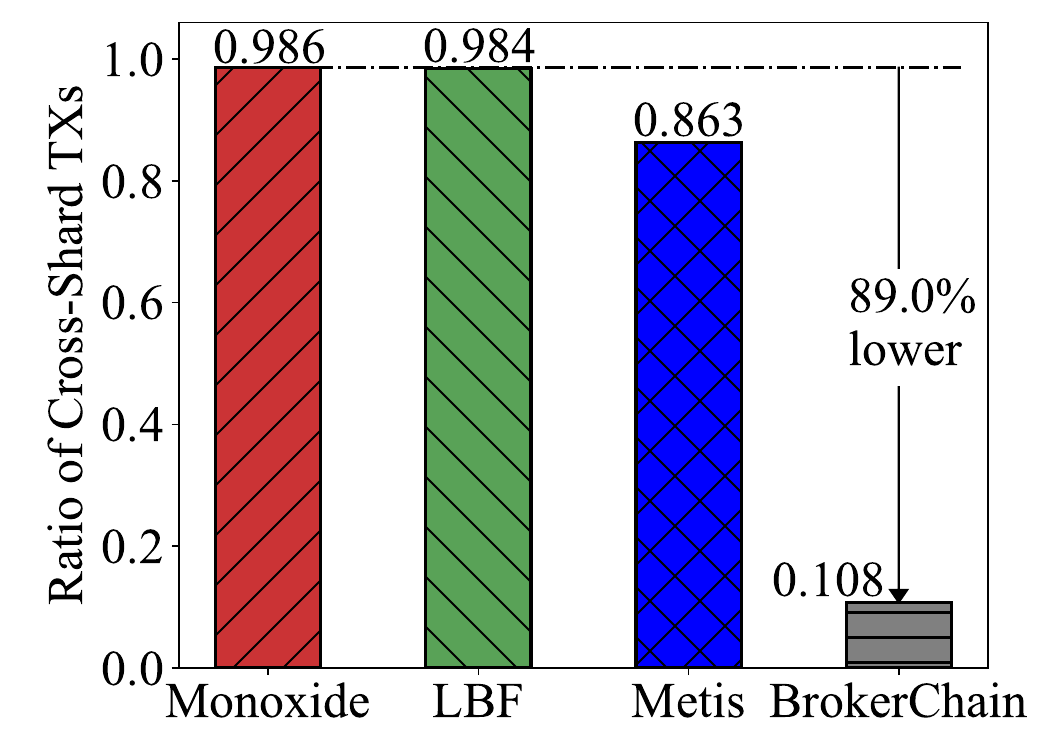}
\label{fig:barS64NTX120000}
 }
\caption{\zk{The effect of TX \# on the total, the largest and the variance of workloads, while varying $N_\text{TX}$ within $\{$8e4, 12e4, 16e4$\}$, and fixing $S$=64, $K$=40.}}
\label{fig:diffNTX}
\end{figure*}

\begin{figure*}[t]
\centering
\subfigure[\zk{Heatmap of transaction workload \hhw{yielded by} Monoxide.}]{
  \includegraphics[width=0.8\columnwidth]{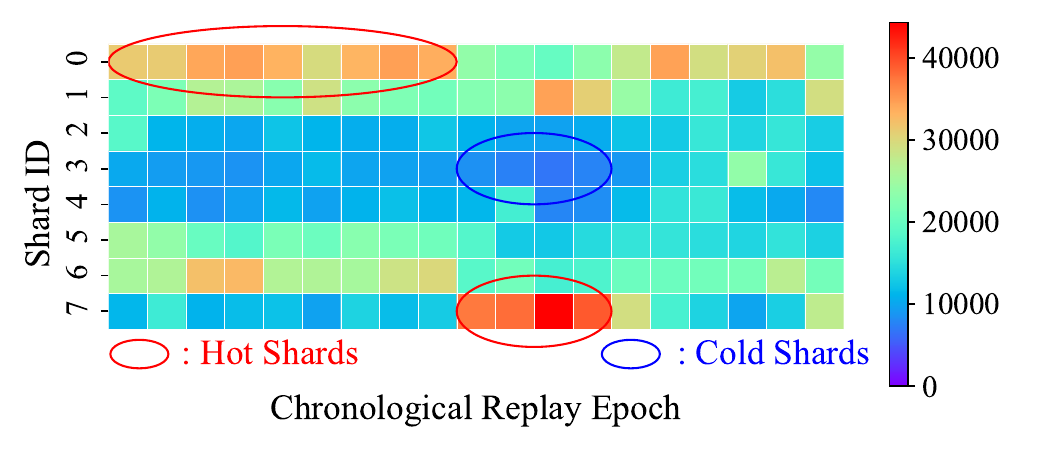}
   \label{fig:MonoxideHeatMapVmin0}
 }
\subfigure[\zk{Heatmap of transaction workload \hhw{yielded by} BrokerChain.}]{
  \includegraphics[width=0.8\columnwidth]{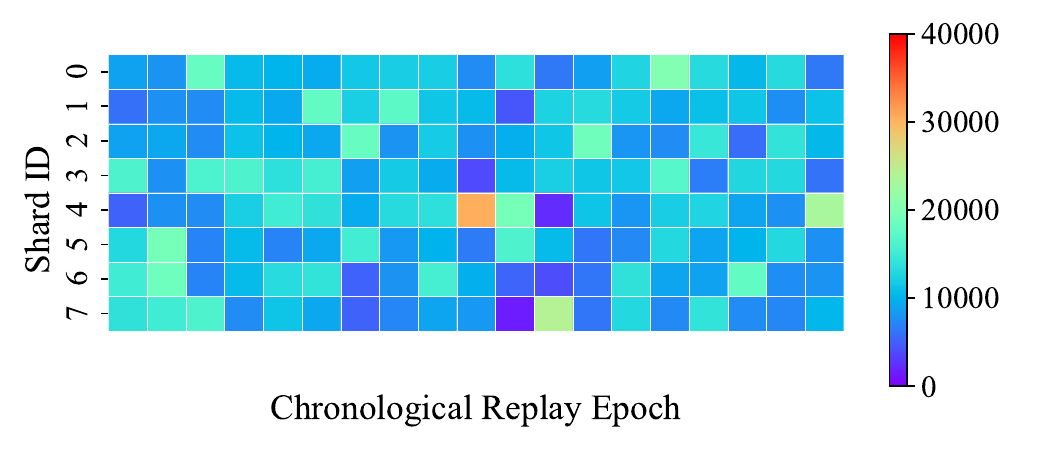}
   \label{fig:ProposedHeatMapVmin0}
 }
\caption{\hhw{The heatmap comparison of transaction workloads yielded by Monoxide and BrokerChain. Parameters are set to $S$=8, $N_\text{TX}$=8e4 for each epoch. } }
\label{fig:heatMapVmin0}
\end{figure*}

\subsection{Queue Size of TX Pool}

 We then investigate the change of the TX pool by monitoring its queue size. 
 %
 %
 We adopt a total of 1.67 million TXs to keep feeding the TX pool of the blockchain system with certain stable rates until all TXs are used up.
 Referring to the VISA's throughput, i.e., 4000 TPS approximately, we change the TX arrival rate \hw{from 3200 to 16000} TXs/Sec and fix $K$=40 for BrokerChain. The number of shards is fixed to 32 \hw{or 64}.
 %
 %
 Fig. \ref{fig:queuesize}  shows that the queue size keeps growing when injecting TXs continually at the first few hundred seconds. When all 1.67 million TXs are consumed, the queue size shrinks. Among all methods, Monoxide maintains a linearly increasing queue and shows the largest.
 When the TX arrival rate is \hw{lower than 5000 TXs/Sec, BrokerChain has little accumulation in the TX pool, either with 32 shards or 64 shards. Once the TX arrival rate is higher than 5000 TXs/Sec, the accumulation effect of TX pool gradually becomes obvious. Upon comparing Fig. \ref{fig:queuesizeCase1}-\ref{fig:queuesizeCase4} with Fig. \ref{fig:queuesizeCase5}-\ref{fig:queuesizeCase8}, a greater number of M-shards can help alleviate the transaction accumulation effect of TX pool.
 This result suggests that BrokerChain is more capable to maintain a small-sized TX pool than the other three baselines.}
 %
 %
 We attribute this result to the following insights. 
 Firstly, Monoxide tends to put TXs into a small number of hot shards. This policy induces a large number of cross-shard TXs, which cause large processing latency. Thus, the TX pool always has a large size under Monoxide. 
 Secondly, LBF tries to distribute TXs evenly to all shards, thus the queue size of TX pool is maintained in a low level under a small TX arrival rate. However, LBF is not capable to handle the cross-shard TXs timely under a large arrival rate.
\hw{Thirdly, Metis tries to adjust the distribution of TXs in an adaptive manner. This can minimize the count of cross-shard TXs and optimize the workload balance across shards. However, the effect of Metis is not significant. In contrast, BrokChain can keep the ratio of cross-shard TXs at a low level. Thus, BrokChain can process the accumulated transactions timely.
 }

\subsection{The Effect of the Number of \hw{Broker} Accounts}

 Recall that broker's accounts can be segmented and deployed to different shards. Now we evaluate the effect of the number of \hw{broker} accounts by varying $K$ while fixing $S$=64, $N_\text{TX}$=16e4. Through running 10 epochs \hw{of blockchain consensus,} the workload performance is shown in Fig. \ref{fig:twinS16DiffK}, in which the total system workload is \hw{calculated as} the sum of all shard's TX workloads.
 We see that as $K$ increases, the total system workloads, the largest and the variance of shard workloads all decrease.
 The reason is that with more broker accounts, they can help reduce the number of cross-shard TXs and make the TX workloads more balanced than other baselines. 
 \hw{However, when $K$ exceeds 40, the performance-improving effect becomes saturated. This is because when $K$ is greater than 40, the TXs involving brokers cover the majority of all TXs in the dataset, thus further increasing $K$ does not lead to significant improvement.
 }

\subsection{Shard Workloads and Cross-shard TX Ratios}

Through Fig. \ref{fig:FixCompare}, we study the total, the largest and the variance of shard workloads under all methods.
By assigning $N_\text{TX}$=8e4 to $S$=\zk{64} shards at each epoch and seting $K$=40 for BrokerChain, Fig. \ref{fig:cdftoal} shows that BrokerChain yields the lowest total system workloads. This is because BrokerChain can reduce the number of cross-shard TXs to a certain low degree. To have a clearer insight, we compare the ratios of cross-shard TXs in Fig. \ref{fig:barS64}. The average cross-shard TX ratios of Monoxide, LBF, Metis and BrokerChain are 98.6\%, 98.6\%, 83.5\% and \zk{7.4\%}, respectively. 
Furthermore, we are also curious about the breakdown of total system workloads. Fig. \ref{fig:cdfvar} and Fig. \ref{fig:cdfmax} show that BrokerChain has the smaller variances of workloads and smaller the largest workload than that of other 3 baselines. These observations imply that BrokerChain can maintain the most balanced shard workloads. 

In the next group of simulation, to figure out the effect of the number of shards on workload performance, we vary $S$ while fixing $N_\text{TX}$=8e4, $K$=40. 
As shown in Fig. \ref{fig:TotalBOXDiffS}, we see that the increasing $S$ leads to growing total system workloads. This is because the number of cross-shard TXs increases following the increasing number of shards.
Fig. \ref{fig:VARBOXDiffS} and Fig. \ref{fig:MaxBOXDiffS} show that BrokerChain outperforms other baselines. Furthermore, BrokerChain also has the fewest \textit{outliers} in figures. This observation implies that BrokerChain can make shard workloads more balanced and more stable than baselines.
Again, Fig. \ref{fig:barS32} shows that BrokerChain still has the lowest cross-shard TX ratio.

Finally, to evaluate the effect of the number of TXs feeding per epoch on the workload performance, we vary $N_\text{TX}$ within $\{$8e4, 12e4, 16e4$\}$ and fix $S$=64 and $K$=40.  Fig. \ref{fig:TotalBOXDiffNTX} shows that BrokerChain maintains around \zk{40\%-45\%} lower the total system workloads than baselines.
 Although the increasing $N_\text{TX}$ leads to growing system workloads, Fig. \ref{fig:VARBOXDiffNTX}, Fig. \ref{fig:MaxBOXDiffNTX} and Fig. \ref{fig:barS64NTX120000} demonstrate that BrokerChain yields the lower variance, the lower largest shard workload, and the lower cross-shard TX ratio than other baselines.

\subsection{\hw{Visual Comparison of M-Shard Workloads}}

\hw{
 To offer a visual comparison in terms of shard workloads between BorkerChain and Monoxide, we show the heatmaps of transaction workloads under these two methods in Fig. \ref{fig:heatMapVmin0}.
 Using the same parameter settings with that of Fig. \ref{fig:Motivation}, \ref{fig:ProposedHeatMapVmin0} illustrates that BrokerChain can effectively eliminate hot shards and yield a much more balanced workload distribution than Monoxide.
}





\section{Conclusion}\label{sec:Conclusion}

BrokerChain \cite{huang2022brokerchain} has been proposed to serve as a cross-shard protocol for the account-based sharding blockchain. In BrokerChain, the TX workload balance among all shards is achieved by the fine-grained state-graph partition and account segmentation mechanisms. BrokerChain handles the cross-shard TXs by exploiting \hw{the sophisticated broker mechanism. Compared with our conference version \cite{huang2022brokerchain}, this full-version article shows more design details and more theoretical analysis, and presents new experimental results.} The evaluation results obtained from both the cloud-based prototype and the TX-driven simulations demonstrate that BrokerChain outperforms the state-of-the-art sharding methods in terms of TX throughput, transaction confirmation latency, queue size of TX pool, and workload balance.
In our future work, we plan to study the incentive mechanism that can incentivize accounts to become brokers.

\section*{Acknowledgement}

This Work is partially supported by National Key R\&D Program of China (No. 2022YFB2702304), and the National Natural Science Foundation of China (62272496).

\ifCLASSOPTIONcaptionsoff
  \newpage
\fi

\bibliographystyle{IEEEtran}
\bibliography{Reference}



\begin{IEEEbiography}[{\includegraphics[width=1in,height=1.25in,clip,keepaspectratio]{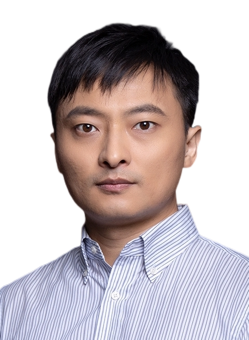}}]{Huawei~Huang} (SM'22) received his Ph.D. in Computer Science and Engineering from the University of Aizu, Japan. He is an associate professor at the School of Software Engineering, Sun Yat-Sun University, China. His research interests mainly include blockchain systems. He has served as a guest editor for multiple special issues on blockchain at IEEE JSAC, OJ-CS, and IET Blockchain. He also served as a TPC chair for a number of blockchain conferences, workshops, and symposiums.
\end{IEEEbiography}

\begin{IEEEbiography}[{\includegraphics[width=1in,height=1.25in,clip,keepaspectratio]{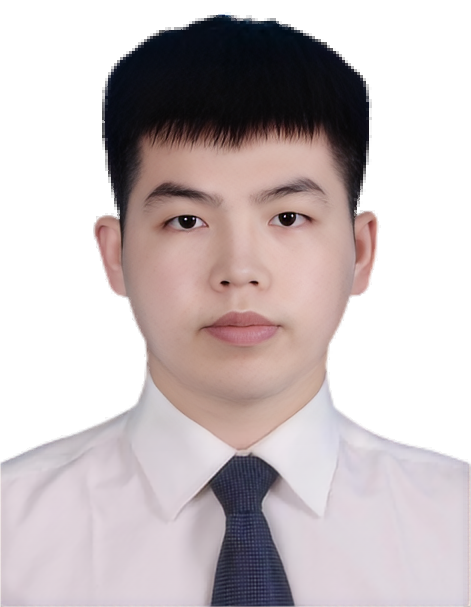}}]
{Zhaokang~Yin} is currently a master-program student at the School of Software Engineering, Sun Yat-Sen University. His research interests mainly include Blockchain.
\end{IEEEbiography}

\begin{IEEEbiography}[{\includegraphics[width=1in,height=1.25in,clip,keepaspectratio]{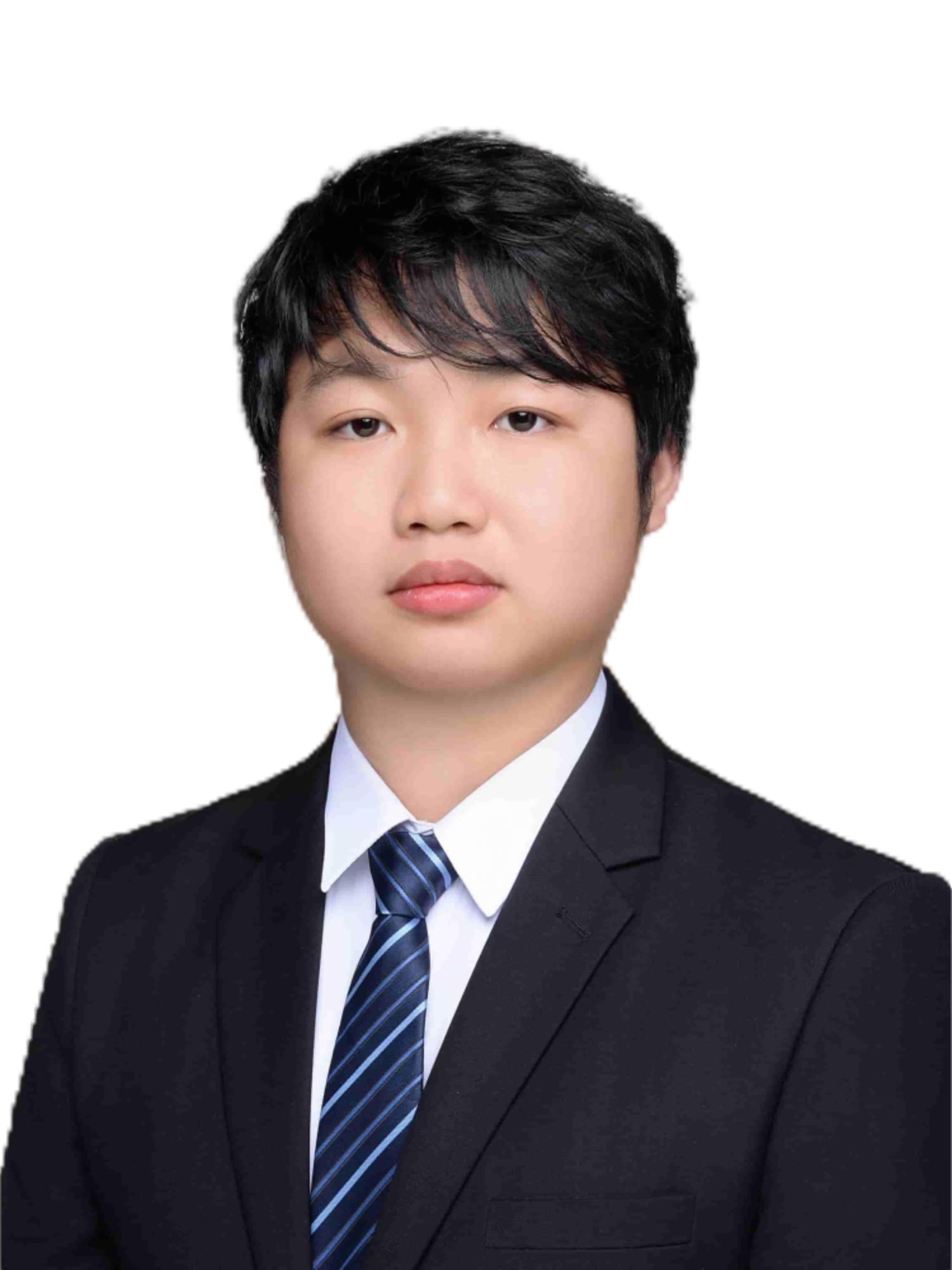}}]
{Qinde~Chen} is a Ph.D. student at the School of Software Engineering, Sun Yat-sen University. His research interests mainly include blockchain. 
\end{IEEEbiography}

\begin{IEEEbiography}[{\includegraphics[width=1in,height=1.25in,clip,keepaspectratio]{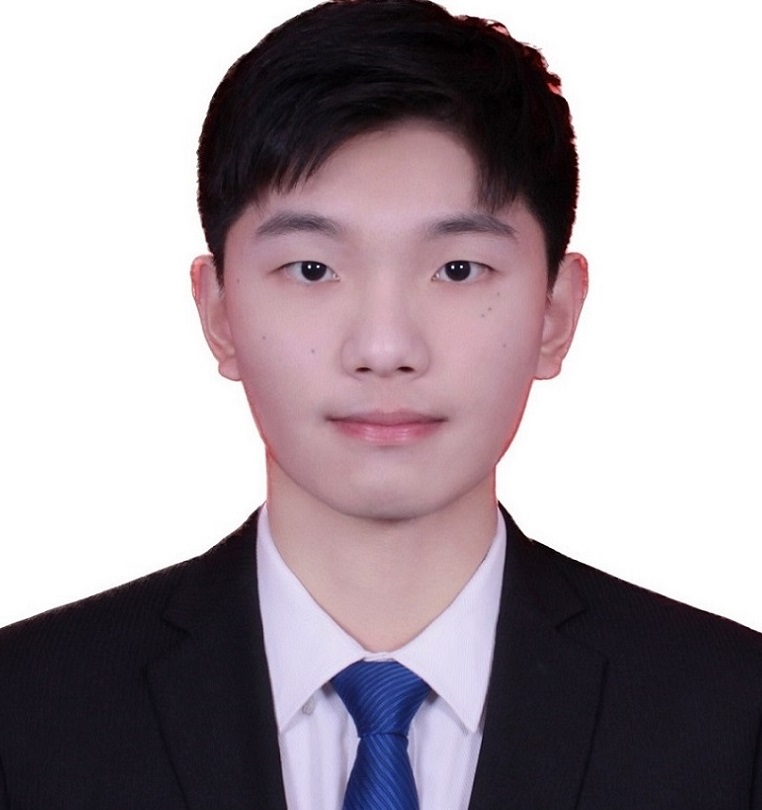}}]
{Guang~Ye} is currently a master-program student at the School of Software Engineering, Sun Yat-Sen University. His research interests mainly include Blockchain.
\end{IEEEbiography}

\begin{IEEEbiography}[{\includegraphics[width=1in,height=1.25in,clip,keepaspectratio]{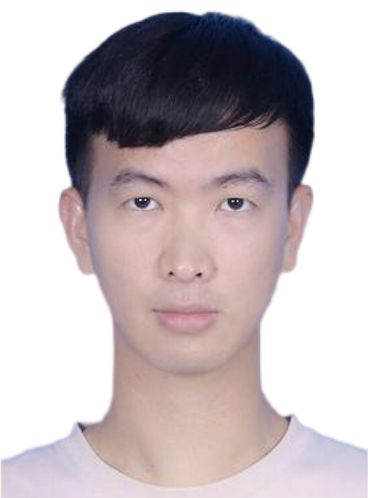}}]{Xiaowen~Peng} is currently a student pursuing his M.Sc. degree at the School of Computer Science and Engineering, Sun Yat-Sen University, China. His research interest mainly focuses on blockchain.
\end{IEEEbiography}

\begin{IEEEbiography}[{\includegraphics[width=1in,height=1.25in,clip,keepaspectratio]{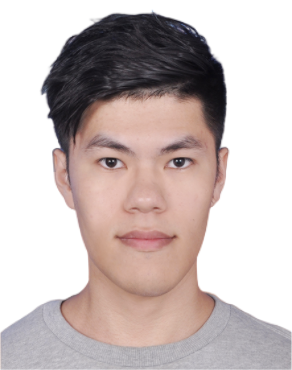}}]
{Yue~Lin} is currently a master-program student at the School of Computer Science and Engineering, Sun Yat-Sen University. His research interests mainly include Blockchain.
\end{IEEEbiography}

\begin{IEEEbiography}[{\includegraphics[width=1in,height=1.25in,clip,keepaspectratio]{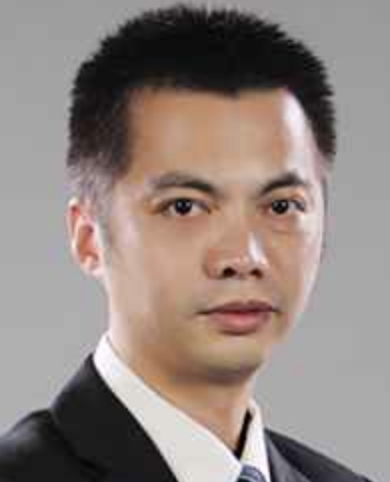}}]
{Zibin~Zheng} (SM'16-F'23)  received a Ph.D. degree from the Chinese University of Hong Kong, Hong Kong, in 2012. He is a Professor at the School of Software Engineering, Sun Yat-Sen University, China. His current research interests include service computing, blockchain, and cloud computing. Prof. Zheng was a recipient of the Outstanding Ph.D. Dissertation Award of the Chinese University of Hong Kong in 2012, the ACM SIGSOFT Distinguished Paper Award at ICSE in 2010, the Best Student Paper Award at ICWS2010, and the IBM Ph.D. Fellowship Award in 2010. He served as a PC member for IEEE CLOUD, ICWS, SCC, ICSOC, and SOSE. 
\end{IEEEbiography}

\begin{IEEEbiography}[{\includegraphics[width=1in,height=1.25in,clip,keepaspectratio]{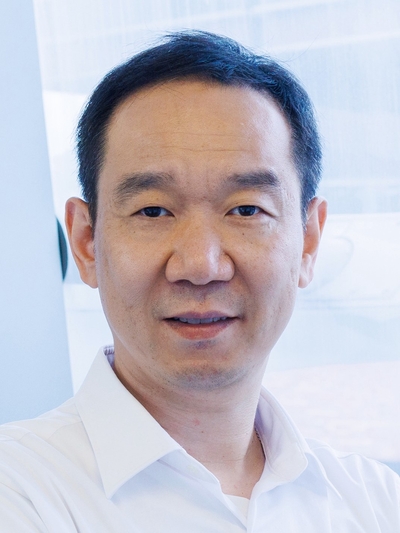}}]{Song~Guo} (M'02-SM'11-F'20) is a full professor in the Department of Computer Science and Engineering (CSE) at the Hong Kong University of Science and Technology (HKUST). Before joining HKUST in 2023, he was a Professor at The Hong Kong Polytechnic University. He received his PhD degree in computer science from University of Ottawa. He has authored or co-authored over 450 papers in major conferences and journals. His current research interests include big data, cloud and edge computing, mobile computing, and distributed systems. Prof. Guo was a recipient of the 2019 TCBD Best Conference Paper Award, the 2018 IEEE TCGCC Best Magazine Paper Award, the 2017 IEEE SYSTEMS JOURNAL Annual Best Paper Award, and six other Best Paper Awards from IEEE/ACM conferences. He was an IEEE Communications Society Distinguished Lecturer. He has served as an Associate Editor of IEEE TPDS, IEEE TCC, IEEE TETC, etc. He also served as the general and program chair for numerous IEEE conferences. He currently serves in the Board of Governors of the IEEE Communications Society.
\end{IEEEbiography}

\end{document}